\newcommand{\ignore}[1]{}
\theoremstyle{plain}
\newtheorem{theorem}{Theorem}[section]
\theoremstyle{plain}
\newtheorem{corollary}[theorem]{Corollary}
\newtheorem{lemma}[theorem]{Lemma}
\theoremstyle{definition}
\newtheorem{definition}[theorem]{Definition}
\newtheorem{claim}[theorem]{Claim}
\newenvironment{example}
  {\pushQED{\qed}\examplex}
  {\popQED\endexamplex}
\newenvironment{remark}
  {\pushQED{\qed}\remarkx}
  {\popQED\endremarkx}
\crefname{theorem}{theorem}{theorems}
\Crefname{theorem}{Theorem}{Theorems}
\crefname{lemma}{lemma}{lemmata}
\Crefname{lemma}{Lemma}{Lemmata}
\crefname{proposition}{proposition}{propositions}
\Crefname{proposition}{Proposition}{Propositions}
\crefname{claim}{claim}{claims}
\Crefname{claim}{Claim}{Claims}
\crefname{subclaim}{claim}{claims}
\Crefname{subclaim}{Claim}{Claims}
\crefname{corollary}{corollary}{corollaries}
\Crefname{corollary}{Corollary}{Corollaries}
\crefname{subcorollary}{corollary}{corollaries}
\Crefname{subcorollary}{Corollary}{Corollaries}
\crefname{definition}{definition}{definitions}
\Crefname{definition}{Definition}{Definitions}
\crefname{remark}{remark}{remarks}
\Crefname{remark}{Remark}{Remarks}
\crefname{example}{example}{examples}
\Crefname{example}{Example}{Examples}
\crefname{examplex}{example}{examples}
\Crefname{examplex}{Example}{Examples}
\crefname{question}{question}{questions}
\Crefname{question}{Question}{Questions}
\crefname{assumption}{assumption}{assumptions}
\Crefname{assumption}{Assumption}{Assumptions}
\numberwithin{table}{section}
\crefname{table}{table}{tables}
\Crefname{table}{Table}{Tables}
\newcommand{\bbC}{\mathbb{C}}
\newcommand{\bbN}{\mathbb{N}}
\newcommand{\bbT}{\mathbb{T}}
\newcommand{\bbZ}{\mathbb{Z}}
\newcommand{\calB}{\mathcal{B}}
\newcommand{\calI}{\mathcal{I}}
\newcommand{\calT}{\mathcal{T}}
\newcommand{\calU}{\mathcal{U}}
\newcommand{\calZ}{\mathcal{Z}}
\newcommand{\calW}{\mathcal{W}}
\newcommand{\frakS}{\mathfrak{S}}
\newcommand{\IN}{\mathbb{N}}%
\newcommand{\IC}{\mathbb{C}}%
\newcommand{\eps}{\epsilon} 
\newcommand{\cc}{\textnormal{cc}} 
\renewcommand{\bar}[1]{\overline{#1}} 
\renewcommand{\hat}[1]{\widehat{#1}}
\newcommand{\GL}{\textnormal{GL}}
\newcommand{\End}{\textnormal{End}}
\newcommand{\id}{\textnormal{id}}
\newcommand{\ad}{\textnormal{ad}}
\newcommand{\ideg}{d} 
\newcommand{\varcount}{k} 
\newcommand{\basissize}{K} 
\newcommand{\projwordsize}{L} 
\newcommand{\boundedtuples}{\IN^{\basissize}_{\leq \projwordsize}} 
\newcommand{\gl}{\mathfrak{gl}}
\newcommand{\diff}{\textnormal{d}}
\newcommand{\frakg}{\mathfrak{g}}
\newcommand{\frakh}{\mathfrak{h}}
\newcommand{\frakn}{\mathfrak{n}}
\newcommand{\frakt}{\mathfrak{t}}
\newcommand{\frakgl}{\mathfrak{gl}}
\newcommand{\calA}{\mathcal{A}}
\newcommand{\la}{\lambda}
\newcommand{\diag}{\textup{diag}}
\newcommand{\linspan}{\textup{span}}
\newcommand{\GZ}{\mathcal{G}\hspace{-0.1em}\mathcal{Z}}
\newcommand{\vvirg}{, \ldots ,}
\newcommand{\ootimes}{\otimes \cdots \otimes}
\newcommand{\poly}{\mathsf{poly}}
\newcommand{\VP}{\mathsf{VP}}
\newcommand{\VBP}{\mathsf{VBP}}
\newcommand{\VF}{\mathsf{VF}}
\newcommand{\VQP}{\mathsf{VQP}}
\newcommand{\metaVP}{\mathsf{metaVP}}
\newcommand{\metaVQP}{\mathsf{metaVQP}}
\newcommand{\VNP}{\mathsf{VNP}}
\renewcommand{\P}{\mathsf{P}}
\newcommand{\NP}{\mathsf{NP}}
\newcommand{\calC}{\mathcal{C}}
\newcommand{\per}{\mathsf{per}}
\newcommand{\mbinom}[2]{\Bigl(\begin{array}{@{}c@{}}#1\\#2\end{array}\Bigr)}
\newcommand{\partinto}[1][]{\mathchoice
  {\scalebox{1.2}[.7]{ $\vdash$}_{#1}\,}
  {\scalebox{1.2}[.7]{ $\vdash$}_{#1}\,}
  {\scalebox{.7}{\scalebox{1.2}[.7]{$\vdash$}}_{#1}\,}
  {\scalebox{.5}{\scalebox{1.2}[.7]{$\vdash$}}_{#1}\,}
}
\newcommand{\hard}{{\textup{hard}}}
\author{\hspace{-1cm}Maxim van den Berg, Pranjal Dutta, Fulvio Gesmundo,  Christian Ikenmeyer,  Vladimir Lysikov}
\title{Algebraic metacomplexity and representation theory}
\newcommand{\ybox}{\picture(1,1)
\put(0,0){\line(0,1){1}\line(1,0){1}}
\put(1,0){\line(0,1){1}}
\put(0,1){\line(1,0){1}}
\endpicture}
\newcommand{\ylabel}[1]{\makebox(1,1){#1}}
\date{}
\begin{document}
\raggedbottom

\maketitle

\begin{abstract}
In the algebraic metacomplexity framework we prove that
the decomposition of metapolynomials into their isotypic components can be implemented efficiently, namely with only a quasipolynomial blowup in the circuit size.
We use this to resolve an open question posed by Grochow, Kumar, Saks \& Saraf (2017).
Our result means that many existing algebraic complexity lower bound proofs can be efficiently converted into isotypic lower bound proofs via highest weight metapolynomials, a notion studied in geometric complexity theory.
In the context of algebraic natural proofs, it means that without loss of generality algebraic natural proofs can be assumed to be isotypic.
Our proof is built on the Poincar\'e--Birkhoff--Witt theorem for Lie algebras and on Gelfand--Tsetlin theory, for which we give the necessary comprehensive background.
\end{abstract}

\bigskip

{\small
MvdB: University of Amsterdam and Ruhr University Bochum; Email:~\texttt{maxim.vandenberg@rub.de}

PD:  National University of Singapore; Email:~\texttt{pranjal@nus.edu.sg}

FG: Institut de Mathématiques de Toulouse; UMR5219 -- Université de Toulouse; CNRS -- UPS, F-31062 Toulouse Cedex 9, France; Email:~\texttt{fgesmund@math.univ-toulouse.fr}

CI: University of Warwick; Email:~\texttt{christian.ikenmeyer@warwick.ac.uk}

VL: Ruhr University Bochum; Email:~\texttt{vladimir.lysikov@rub.de}
}

\bigskip

{\small\textbf{Keywords:}
Algebraic complexity theory, metacomplexity, representation theory, geometric complexity theory
} 

\bigskip

{\small\textbf{2020 Math.~Subj.~Class.:}
68Q15, 20C35, 16S30
}

\bigskip

{\small\textbf{ACM Computing Classification System:}
Theory of computation $\to$ Computational complexity and cryptography $\to$ Algebraic complexity theory
} 

\bigskip

{\small\textbf{Acknowledgments:} We are grateful to Alessandro Danelon and Nutan Limaye for discussions.
We thank anonymous referees for comments, and especially for bringing our attention to Open Question~2 in~\cite{grochow2017towards}.

CI was supported by EPSRC grant EP/W014882/1. PD is supported by the project titled~``Computational Hardness of Lattice Problems and Implications".
MvdB and VL acknowledge support by the European Research Council through an ERC Starting Grant (grant agreement No.~101040907, SYMOPTIC). Views and opinions expressed are however those of the author(s) only and do not necessarily reflect those of the European Union or the European Research Council Executive Agency. Neither the European Union nor the granting authority can be held responsible for them.
}

\thispagestyle{empty}
\setcounter{page}{0}
\newpage

\section{Introduction}\label{sec: intro}

Strong computational complexity lower bounds are often difficult to prove, for many different complexity measures. In most cases, the inability to prove lower bounds can be explained by barrier results such as the \emph{Relativization barrier} due to Baker, Gill and Solovay~\cite{baker1975relativizations}, the \emph{Algebrization barrier} due to Aaronson and Wigderson~\cite{aaronson2009algebrization} and the  \emph{Natural Proofs barrier} due to Razborov and Rudich~\cite{razborov1994natural}. Metacomplexity studies the complexity of computing various complexity measures, such as the minimum circuit complexity of a function.
One eventual goal of the metacomplexity program is to understand mathematically why lower bounds are hard to prove.
In this paper, we study metacomplexity in the algebraic circuit complexity setup from a representation-theoretic viewpoint.

Algebraic complexity focuses on proving complexity lower bounds for various problems related to the computation of multivariate polynomials and rational functions. Prominent examples include the discrete Fourier transform, matrix multiplication, and solving systems of linear equations; see~\cite{burgisser2013algebraic}.
In algebraic complexity, the main model of computation is the {\em algebraic circuit}, which is a labeled directed graph encoding an algorithm to evaluate a polynomial. Most polynomials require large algebraic circuits; however, exhibiting explicit ones with such property is difficult. Valiant's flagship conjecture~\cite{valiant1979completeness}, known as $\VP \neq \VNP$, states that the algebraic circuit complexity of the permanent polynomial $\per_d(x_{1,1},\dots,x_{d,d}) = \sum_{\pi\in\mathfrak{S}_d}\prod_{i=1}^d x_{i,\pi(i)}$ is not polynomially bounded as a function of $d$. This conjecture is an algebraic analogue of the $\P \neq \NP$ conjecture: indeed, the permanent of the adjacency matrix of a bipartite graph counts the number of perfect matchings, which is a $\# \P$-hard problem.

Many current proof techniques in algebraic complexity theory are implicitly based on {\em metapolynomials}. Examples of such techniques are \emph{rank methods} \cite{LanOtt:NewLowerBoundsBorderRankMatMult,DBLP:conf/focs/Limaye0T21}, methods based on geometric features \cite{LaMaRa:DegDuals,GGIL:DegreeRestrictedStrengthDecompsABP}, and the geometric complexity program \cite{MS01,mulmuley2012gct}. We refer to \Cref{subsec:lb-isotopic-hwv} for a more extensive discussion. Informally, metapolynomials are polynomials whose variables are coefficients of polynomials. The concept dates back to resultants and invariants of forms \cite{Cay:TheoryLinTransformations,Sylv:PrinciplesCalculusForms}, while the term ``metapolynomial'' was coined in \cite{Reg:02} and made popular in \cite{grochow2017towards}. A classical example is the metapolynomial $b^2 - 4ac$ in the coefficients of a univariate degree $2$ polynomial $ax^2 + bx +c$, which vanishes exactly when the two roots of the polynomial coincide.
Grochow pointed out in \cite{grochow2015unifying} that for many border complexity measures the complexity lower bounds can without loss of generality be proved by metapolynomials that lie in separating modules, or be isotypic, or be highest weight vectors.
These are all closely related concepts (and we prove analogous results for all of them), where being isotypic is the most natural one, because it does not depend on choosing any bases.
Being a \emph{homogeneous} metapolynomial is the same as being isotypic with respect to the action of $1\times 1$ matrices.
Being isotypic also generalizes the property of being \emph{invariant} under the special linear group.
While invariants can have large circuit size, see \cite{GIMOWW20},
we prove that not all separating isotypic metapolynomials have large circuit size, unless \emph{all} separating metapolynomials have large circuit size.
More precisely, we prove that w.l.o.g.\ one can assume the metapolynomial to be isotypic, or to be a highest weight vector, or to lie in a separating module, with only a quasipolynomial blowup in the circuit size.
We use this to answer Open Question~2 in \cite{grochow2017towards} by proving that \emph{all} metapolynomials that lie in an irreducible representation have the same circuit complexity up to quasipolynomial blowup, see \Cref{cor:openquestiontwo}.

Our result reduces the search space for lower bounds from \emph{all} homogeneous metapolynomials to the isotypic metapolynomials, which is a technique for example used in 
\cite{AbCh:BrillGordanLoci,BI:11,BI:13,HIL13,AIR:16,CIM:17,BHIM:22}. In particular the large-scale cluster computations in \cite{AIR:16,CIM:17,BHIM:22} would be infeasible without this search space reduction. \cite{HIL13} reduces the search space for proving that the border complexity of the $2\times 2$ matrix multiplication tensor is 7 to a 4-dimensional linear subspace of isotypic metapolynomials in a $\binom{64+20-1}{20} = 8,179,808,679,272,664,720$-dimensional space of homogeneous degree 20 metapolynomials on $4\times 4 \times 4$ tensors. On the reduced search space, the problem is then easily solved by linear algebra.
This specific approach was used in \cite{BCZ18} to prove that the border support rank of the $2 \times 2$ matrix multiplication tensor is~7. To this date, this is the only proof of this fact. 
Similar methods are used in \cite{OedSam:5thSecant5P1s,GHL:LinearPreservers}, where the polynomials of interest are invariants for the action of a special linear group: this information is crucially used to reduce the search space and determine the polynomial via interpolation methods on a small space.
These are finite results, but the techniques work in the asymptotic settings as well, for example for lower bounds on the rank of matrix multiplication as in \cite{BI:11,BI:13}, or the
best known lower bound for the border determinantal complexity of the permanent
\cite{LaMaRa:DegDuals}.

Our result also points to potential future lower bounds:
Our method can be used to extract isotypic metapolynomials for example from rank-based lower bounds proofs, and the obtained metapolynomials give the same lower bounds.
Generally, metapolynomials are \emph{not} affected by barriers for rank methods \cite{EGOW:Barrier,Gal:VBcactus}, so the metapolynomials obtained in this way can potentially be adjusted to prove better lower bounds.

Our procedure uses explicit efficient circuit constructions and hence makes a contribution to algorithmic representation theory that is of independent interest.

\paragraph{Metapolynomials} We work over the field $\IC$ of complex numbers, and the set $\{x_1,x_2,\ldots\}$ of variables.
For every sequence $\bm{i}$ of nonnegative integers with $\sum_j i_j < \infty$,
a monomial~$x^{\bm{i}}$ is the finite product of variables $x_1^{i_1}x_2^{i_2}\cdots$, and a polynomial $f$ is a finite linear combination of monomials.
For every monomial $x^{\bm{i}} = x_1^{i_1}x_2^{i_2}\cdots$, define a \emph{metavariable} $c_{\bm{i}}$.
A metavariable $c_{{\bm i}}$ has degree 1, and it has \emph{weight}~$\bm{i}$.
Finite products of metavariables are called metamonomials, and finite linear combinations of metamonomials are called metapolynomials. 
The weight of a metamonomial is the sum of the weights of its metavariables.
If every metamonomial in a metapolynomial $\Delta$ has the same weight, then $\Delta$ is called a weight metapolynomial (and the zero metapolynomial has every weight).
For a fixed weight $w$ the metapolynomials of that weight form a vector space, the weight $w$ space.
Every metapolynomial can be written as a unique sum of weight metapolynomials of different weights.
A metapolynomial is called affine linear if all its metamonomials have degree at most~1, i.e., are just a metavariable or a constant.
Every metapolynomial $\Delta$ can be evaluated at any polynomial $f$ by assigning to each metavariable the coefficient of the corresponding monomial in~$f$.
For example, the discriminant of a degree two homogeneous polynomial in two variables $c_{20}x_0^2 + c_{11}x_0x_1 + c_{02}x_1^2$ is the metapolynomial $\Delta = c_{11}^2-4c_{20}c_{02}$. It is a classical fact that $\Delta(f)=0$ if and only if $f = \ell^2$ for some homogeneous linear polynomial $\ell = \alpha_0 x_0 + \alpha_1 x_1$.
In principle, metapolynomials can involve metavariables that correspond to monomials of different degrees, but for our purposes it is sufficient to consider the case where all metavariables correspond to monomials of the same degree~$d$, see \Cref{rem:padding}.
Every such metapolynomial can be decomposed uniquely as a sum of its homogeneous degree $\delta$ components,
and we call a homogeneous degree $\delta$ metapolynomial a metapolynomial of format $(\delta,d,k)$ if all its metavariables only involve the variables $x_1,\ldots,x_k$.

\paragraph{Algebraic circuits}
In this paper, we are mostly interested in the algebraic circuit size of metapolynomials. Algebraic circuits for metapolynomials are defined in the same way as for usual polynomials, as follows.

An algebraic circuit is a directed acyclic graph in which each vertex
has indegree~$0$ or~$2$. Every indegree~0 vertex is called \emph{input gate} and it is labelled by an affine linear metapolynomial; every indegree~2 vertex is called a \emph{computation gate} and it is labeled by either ``$+$'' or ``$\times$''; every edge is labeled by a complex number, which is assumed to be 1 if this scalar is omitted; there is exactly one vertex of outdegree~0. An algebraic circuit computes a metapolynomial at every vertex by induction over the directed acyclic graph structure (the labels on the edges rescale the computed values), i.e.,
if the values computed at the children of a gate $v$ are $\Delta_1$ and $\Delta_2$ and the incoming edges to $v$ are labelled with $\alpha_1$ and $\alpha_2$, respectively, then the value computed at $v$ is $\alpha_1\Delta_1+\alpha_2\Delta_2$ or $\alpha_1\alpha_2\Delta_1\Delta_2$, depending on whether $v$ is labelled with a ``$+$'' or a ``$\times$''.
An algebraic circuit \emph{computes} the metapolynomial computed at its outdegree 0 vertex.
The {\em size} of a circuit is the total number of its vertices. The algebraic circuit complexity of a metapolynomial is the {\em minimum} size of an algebraic circuit computing it.
For example, \Cref{fig:cktdisc} shows that the algebraic circuit complexity of the discriminant is at most~6. Algebraic circuits for polynomials are defined analogously.
\begin{figure}
\centering
\begin{tikzpicture}[vertex/.style={draw,circle,regular polygon,regular polygon sides=4,inner sep=0cm, minimum width=1cm}, edge/.style = {draw,-Triangle}]
\node[vertex] (b) at (-2,2.25) {$c_{11}$};
\node[vertex] (a) at (2,2.25) {$c_{20}$};
\node[vertex] (c) at (3,2.25) {$c_{02}$};
\node[vertex] (ac) at (2.5,1) {$\times$};
\node[vertex] (bb) at (-2,1) {$\times$};
\node[vertex] (sum) at (0.25,0) {$+$};
\path[edge] (b) to[bend left] (bb);
\path[edge] (b) to[bend right] (bb);
\path[edge] (a) to (ac);
\path[edge] (c) to (ac);
\path[edge] (ac) to node[midway,above]{$-4$} ($(sum.east)!0.5!(sum.north east)$);
\path[edge] (bb) to ($(sum.west)!0.5!(sum.north west)$);
\end{tikzpicture}
\caption{An algebraic circuit computing the discriminant $\Delta=c_{11}^2-4c_{20}c_{02}$.}
\label{fig:cktdisc}
\end{figure}
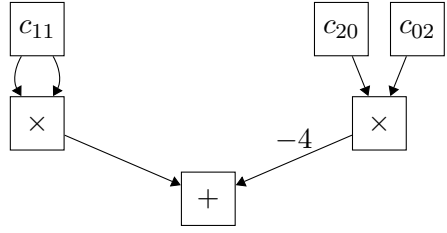

\paragraph{Representation Theory}
A \emph{partition} $\la=(\la_1,\la_2,\ldots)$ is a finite sequence of non-increasing positive natural numbers.
Let $\ell(\la)=\max\{j\mid \la_j\neq 0\}$. We define $\la_j=0$ for all $j>\ell(\la)$.
Write $|\la| := \sum_j \la_j$, and say that $\la$ is a partition of $|\la|$ of length $\ell(\lambda)$. In this case we write $\la \partinto |\la|$ or $\lambda \partinto[\ell(\lambda)] |\lambda|$ depending on whether the length is relevant. For example, $\la=(4,2,2)$ is a partition of $8$ of length $3$, i.e., $\la \partinto[3] 8$.
The \emph{Young diagram} of a partition $\la$ is a top-left justified array of boxes with $\la_j$ many boxes in row~$j$. For example, the Young diagram for $(4,2,2)$ is {\tiny$\yng(4,2,2)$}. A \emph{Young tableau} $T$ is a filling of the boxes of a Young diagram with numbers. The partition $\la$ is called the \emph{shape} of~$T$. For example, {\tiny$\young(1421,31,24)$} is a Young tableau of shape $(4,2,2)$. A Young tableau is \emph{semistandard} if the numbers within each row from left to right are non-decreasing, and the numbers within each column from top to bottom are strictly increasing, for example {\tiny$\young(1122,23,44)$}.
For each partition $\la$ there exists a unique \emph{superstandard} tableau, that is, the tableau with only the number $j$ in row $j$, e.g., {\tiny$\young(1111,22,33)$}.

The group $\GL_k$ acts on the space $\IC[x_1,\ldots,x_k]_d$ of homogeneous polynomials of degree $d$ on $\bbC^k$ by linear change of coordinates. This action induces, again by linear change of coordinates, an action on the space of metapolynomials with format $(\delta,d,k)$ for every $\delta$.
For instance, the change of basis swapping the two coordinates on $\bbC^2$, defines the change of basis on $\bbC[x_0,x_1]_2$ obtained by exchanging $x_0$ and $x_1$. This induces a linear change of coordinates on $\bbC[c_{20},c_{11},c_{02}]$ which exchanges $c_{20}$ and $c_{02}$ and leaves $c_{11}$ fixed. The discriminant $c_{11}^2 - 4c_{02}c_{20}$ is sent to itself by this change of coordinates.
More precisely the group actions are defined as follows. For every homogeneous degree $d$ polynomial $f$ in $k$ variables and every $g \in \GL_k$, the polynomial $g\cdot f$ is defined by $(g \cdot f)(x) = f(g^{-1} x)$ for every $x \in \IC^k$. Similarly, for every format $(\delta,d,k)$ metapolynomial $\Delta$, the metapolynomial $g \cdot \Delta$ is defined by $(g \cdot \Delta)(f)=\Delta(g^{-1} \cdot f)$ for every homogeneous degree $d$ polynomial~$f$. The details about this action are provided in \Cref{sec:lie algebras}.

The vector space of metapolynomials of format $(\delta,d,k)$ is closed under this action of $\GL_k$, and it decomposes into a direct sum of subspaces which are also closed under the action of $\GL_k$: these subspaces are called \emph{isotypic components} and they are in one to one correspondence with partitions $\lambda \partinto[k] {\delta d}$.
The summand corresponding to the partition $\lambda$ is called the $\la$-isotypic component. This is called the isotypic decomposition. Each $\la$-isotypic component decomposes even further into a direct sum of subspaces with one summand for each semistandard tableau~$T$ of shape~$\la$.
We call the component for $T$ the \emph{$T$-isotypic} component.
If $V$ is $\la$-isotypic and \emph{irreducible} (i.e., has no nontrivial subrepresentations), then $\la$ is called the \emph{isomorphism type} of $V$, and each $T$-isotypic component of $V$ is 1-dimensional.
If $T$ is a superstandard tableau of shape $\la$, then the corresponding $T$-isotypic component is called the \emph{highest weight metapolynomial vector space of weight $\la$}.
Equivalently, a highest weight vector of weight $\la$ is a metapolynomial $\Delta$ with $g \cdot \Delta =t_1^{\la_1}\cdots t_k^{\la_k}\Delta$, for every $g \in \GL_k$ upper triangular with $t_1 \vvirg t_k$ on the main diagonal. The $\la$-isotypic component is known to be the linear span of the union of the orbits of all its highest weight vectors.

For example (not easily verifiable by hand, see \Cref{example:projector construction}), the space of metapolynomials of format $(\delta,d,k) = (3,2,3)$ decomposes into three nonzero isotypic components, corresponding to partitions $(6,0,0)$, $(4,2,0)$ and $(2,2,2)$ of $\delta \cdot d = 6$. The isotypic component $(4,2,0)$ further decomposes into three $T$-isotypic components for the three semistandard tableaux of shape $(4,2,0)$: {\tiny$\young(1122,33)$}, {\tiny$\young(1123,23)$}, {\tiny$\young(1133,22)$}. The metapolynomial $\Delta = c_{2, 0, 0}c_{0, 2, 0}c_{0, 0, 2}$ can be written according to this decomposition as
\begin{eqnarray*}
60 \, \Delta
&=&
2\big(c_{1, 0, 1}^2c_{0, 2, 0} + 2c_{1, 1, 0}c_{1, 0, 1}c_{0, 1, 1} + c_{2, 0, 0}c_{0, 1, 1}^2 + c_{1, 1, 0}^2c_{0, 0, 2} + 2c_{2, 0, 0}c_{0, 2, 0}c_{0, 0, 2}\big)
\\
&+&
\Big[\ 2\big(-c_{0,2,0}c_{1,0,1}^2 - 2c_{0,1,1}c_{1,0,1}c_{1,1,0} + 4c_{0,0,2}c_{1,1,0}^2 - c_{0,1,1}^2c_{2,0,0} + 8c_{0,0,2}c_{0,2,0}c_{2,0,0}\big)
\\
&+& \phantom{\Big[\ }0
\\
&+& \phantom{\Big[\ } 5\big( c_{0,2,0}c_{1,0,1}^2 - c_{0,1,1}c_{1,0,1}c_{1,1,0} - c_{0,0,2}c_{1,1,0}^2 + c_{0,1,1}^2c_{2,0,0} + 4 c_{0,0,2}c_{0,2,0}c_{2,0,0}\big)
\ \Big]
\\
&+& 5\big(-c_{1, 0, 1}^2c_{0, 2, 0} + c_{1, 1, 0}c_{1, 0, 1}c_{0, 1, 1} - c_{2, 0, 0}c_{0, 1, 1}^2 - c_{1, 1, 0}^2c_{0, 0, 2} + 4c_{2, 0, 0}c_{0, 2, 0}c_{0, 0, 2}\big),
\end{eqnarray*}
where the five summands from top to bottom are {\tiny$\young(112233)$}-isotypic,
{\tiny$\young(1122,33)$}-isotypic,
{\tiny$\young(1123,23)$}-isotypic,
{\tiny$\young(1133,22)$}-isotypic,
and
{\tiny$\young(11,22,33)$}-isotypic, respectively.
The fifth summand is a highest weight vector, because {\tiny$\young(11,22,33)$} is superstandard.
All summands have weight (2,2,2), because $\Delta$ has weight (2,2,2).

Let $\IC[x_1,\ldots,x_k]_d$ denote the space of homogeneous degree $d$ polynomials in the variables $x_1,\ldots,x_k$.
Let $\IN$ denote the set of natural numbers including zero.

\begin{theorem}[Main theorem]\label{thm:main}
    Let $\Delta \colon \bbC[x_1, \dots, x_k]_{d} \to \bbC$ be a metapolynomial of format $(\delta,d,k)$ computed by an algebraic circuit of size $s$.
    Then
    \begin{enumerate}
        \item For every weight $\mu\in\IN^k$, $|\mu|=d\delta$, the projection of $\Delta$ onto the weight space of weight $\mu$ can be computed by a circuit of size $O(s(\delta d)^{2k^3})$.
        \item For every partition $\lambda \partinto d\delta$ the projection of $\Delta$ onto the $\la$-isotypic component can be computed by a circuit of size $O(sk^{2k^2}(\delta d)^{2k^3})$.
        \item For every partition $\lambda \partinto d\delta$ the projection of $\Delta$ onto the highest weight space of weight $\lambda$ can be computed by a circuit of size $O(s(k+1)^{2k^2}(\delta d)^{2k^3})$.
        \item For every semistandard tableau $T$ of shape $\lambda \partinto d\delta$ the projection of $\Delta$ onto the $T$-isotypic space can be computed by a circuit of size $O(sk^{2k^2}(\delta d)^{2k^4})$.
    \end{enumerate}
\end{theorem}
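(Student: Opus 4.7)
The plan is to realize each projector as a polynomial expression in carefully chosen elements of the universal enveloping algebra $U(\gl_k)$, to implement the action of $U(\gl_k)$ on a metapolynomial circuit with bounded size overhead via the Leibniz rule, and to isolate each eigenspace through Lagrange interpolation against the spectra of these operators.

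For part~(1), I would exploit that the diagonal torus $T \subset \GL_k$ acts on each metavariable $c_{\bm i}$ by the monomial character $t_1^{-i_1}\cdots t_k^{-i_k}$. Hence for any fixed $\vec t \in (\bbC^*)^k$, the metapolynomial $\vec t \cdot \Delta$ is computed by the very same circuit as $\Delta$ with only the leaf labels rescaled. Multivariate Lagrange interpolation in $\vec t$ then extracts the weight-$\mu$ component as a weighted sum of at most $(d\delta+1)^k$ copies of the original circuit, well within the claimed bound.

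For part~(2), I would select $k$ commuting central elements $C_1,\ldots,C_k \in Z(U(\gl_k))$---for instance the Gelfand or Capelli invariants---whose joint Harish--Chandra characters separate all partitions $\lambda \partinto[k] d\delta$, and express the $\lambda$-isotypic projector as a product of univariate Lagrange factors in the $C_i$. Each $C_i$ is a bounded-degree polynomial in the generators $E_{ab}$ of $\gl_k$, and an iterated application of the Leibniz rule turns a size-$s$ circuit for $\Delta$ into a moderate-size circuit for $C_i \cdot \Delta$; taking the product over at most $(d\delta+1)^{k-1}$ Lagrange factors then fits inside the stated bound. Part~(3) is obtained by composing the part~(1) projector with a further Lagrange interpolation that kills the image of the raising operators; a convenient choice is $\sum_{i<j}E_{ji}E_{ij}$, which on the weight-$\lambda$ space has polynomially many nonnegative integer eigenvalues and vanishes precisely on highest weight vectors, accounting for the extra factor $(k+1)^{2k^2}$.

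For part~(4), I would invoke the full Gelfand--Tsetlin machinery: the commutative subalgebra $\GZ_k \subset U(\gl_k)$ generated by the centers of the chain $U(\gl_1) \subset \cdots \subset U(\gl_k)$ has joint spectrum labelled by Gelfand--Tsetlin patterns, and semistandard tableaux of shape $\lambda$ correspond bijectively to such patterns with top row $\lambda$, with eigenvalues given by explicit branching formulae. The $T$-isotypic projector is then a Lagrange product over the $\le (d\delta)^{O(k^2)}$ relevant GT patterns, which accounts for the extra factor of $(d\delta)^k$ in the exponent compared to part~(2). The main obstacle throughout is the careful circuit-size accounting: each central or GZ generator is a bounded-degree element of $U(\gl_k)$ and each $E_{ab}$ acts with $O(1)$ overhead via Leibniz, but controlling the iterated Lagrange products within the claimed quasipolynomial bounds requires the Poincar\'e--Birkhoff--Witt theorem (to rewrite arbitrary products of generators in a controlled normal form) together with precise combinatorics of Gelfand--Tsetlin patterns and their branching rules.
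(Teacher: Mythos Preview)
Your overall framework matches the paper's: realize each projector as an element of bounded length in $\mathcal U(\gl_k)$ via Lagrange interpolation over the spectrum of a suitable commutative subalgebra (Cartan, center, Gelfand--Tsetlin), then implement its action on the circuit gate-by-gate via the Leibniz rule, with PBW controlling the combinatorics of ordered monomials. Two parts diverge in detail. For part~(1) the paper stays entirely inside the Lie-algebra picture, using the diagonal generators $E_{i,i}$ and the same Lagrange scheme (\Cref{thm:projector}) as for the other parts; your group-level torus interpolation is more elementary and equally valid, and indeed gives a slightly better constant. For part~(3) the paper simply composes the weight-$\lambda$ projector $H_\lambda$ from part~(1) with the $\lambda$-isotypic projector $Z_\lambda$ from part~(2), using that the highest weight space of weight $\lambda$ is exactly the weight-$\lambda$ slice of the $\lambda$-isotypic component; this bypasses your auxiliary operator $\sum_{i<j}E_{ji}E_{ij}$ and any analysis of its spectrum. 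Your variant does work---on the weight-$\lambda$ subspace that operator acts as a scalar on each isotypic piece (one checks $2\sum_{i<j}E_{ji}E_{ij}=C_2-\sum_iE_{ii}^2+\sum_{i<j}(E_{jj}-E_{ii})$, all of whose summands are scalar there), and its $0$-eigenspace is precisely the highest weight space by positive semidefiniteness---but the eigenvalues are in general half-integers rather than integers, and the paper's composition $H_\lambda Z_\lambda$ avoids this spectral analysis altogether while giving the stated $(k+1)(\delta d)^k$ length bound directly.
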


In all applications of 
\Cref{thm:main},
we will always have that $k$ is logarithmic, which we explain in \Cref{sec: algebraic natural proofs}.
Open Question 2 in \cite{grochow2017towards} asks: given a sequence of irreducible representations of metapolynomials, what is the sequence of metapolynomials in it that has the lowest circuit complexity? \Cref{thm:main} answers this question in a very satisfying way: \emph{Every} such sequence of metapolynomials has the same circuit complexity up to quasipolynomial blowup, as seen in the following corollary.
\begin{corollary}\label{cor:openquestiontwo}
Let $s$ be the smallest circuit complexity of a nonzero metapolynomial in an irreducible $\GL_k$-representation $V$ of type $\la \partinto \delta d$.
Then \emph{every} metapolynomial in $V$ has circuit complexity at most
$O(sk^{2k^2}(\delta d)^{2k^4+k^2})$.
\end{corollary}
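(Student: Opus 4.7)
My plan is to combine the $T$-isotypic projector of \Cref{thm:main}(4) with the $\GL_k$-action to extract a full weight basis of $V$ from $\Delta_0$, then express any target metapolynomial as a short linear combination of these basis elements.

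Since $V$ is irreducible of type $\lambda$, for each semistandard tableau $T$ of shape $\lambda$ with entries in $[k]$ the space $V_T := V \cap (T\text{-isotypic component})$ is one-dimensional, and $V = \bigoplus_T V_T$. It therefore suffices to produce, from $\Delta_0$, a nonzero element $v'_T$ of each $V_T$ via a circuit of size $O(sk^{2k^2}(\delta d)^{2k^4})$: any $\Delta \in V$ then admits a unique expansion $\Delta = \sum_T \alpha_T v'_T$ and can be assembled by summing $\dim V$ such circuits with edge labels $\alpha_T$. Using $\dim V \leq O((\delta d)^{k^2})$ (Weyl dimension / hook-content formula for $\GL_k$-irreducibles of shape $\lambda$), this sum has total size $O(sk^{2k^2}(\delta d)^{2k^4+k^2})$, matching the claim.

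The key step is producing this basis. Because $V$ is irreducible and $\Delta_0 \neq 0$, the linear span of the orbit $\GL_k \cdot \Delta_0$ equals $V$. In particular, for each $T$ the set of $g \in \GL_k$ with $P_T(g \cdot \Delta_0) \neq 0$ (where $P_T$ denotes the $T$-isotypic projector) is Zariski-open and nonempty; a single generic $g$ lies in the intersection of all finitely many such open sets, one per $T \in \SSYT(\lambda, k)$. The circuit for $g \cdot \Delta_0$ is obtained from the circuit for $\Delta_0$ by substituting each input metavariable $c_{\bfi}$ with the linear combination of metavariables prescribed by $\Sym^d(g)$; precomputing these at most $\binom{d+k-1}{k-1}$ linear forms adds $\poly(d,k)$ to the size. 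Applying \Cref{thm:main}(4) to this modified circuit for each $T$ produces $v'_T := P_T(g \cdot \Delta_0) \in V_T$ within the target size bound.

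The main bookkeeping obstacle is keeping the $\poly(d,k)$ overhead from the $g$-substitution absorbed into the $(\delta d)^{k^2}$ factor coming from the basis expansion, so that the exponent $2k^4 + k^2$ in the corollary is matched exactly without spawning additional lower-order terms in the exponent. A small supporting point is that the genericity argument really does give a single $g$ valid for all $T$ at once: on the irreducible variety $\GL_k$, a finite intersection of nonempty Zariski-opens remains nonempty, and each constituent open is nonempty precisely because $\ker P_T \cap V$ is a proper subspace of the irreducible $V$ while $g \cdot \Delta_0$ traces out a spanning set.
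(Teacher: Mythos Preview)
Your proposal is correct and follows essentially the same line as the paper's own proof: act by a generic $g\in\GL_k$ on a minimum-complexity $\Delta_0$, project onto each one-dimensional $T$-isotypic piece using \Cref{thm:main}(4), argue via irreducibility (span of the orbit equals $V$, so each $T$-projection is nonzero for $g$ outside a proper Zariski-closed set, and the finite union of these sets is still proper) that the resulting vectors form a basis, and then bound $\dim V$ by the number of semistandard tableaux, which is at most $(\delta d)^{k^2}$.

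One small clarification that dissolves your ``bookkeeping obstacle'': in the circuit model used throughout the paper, input gates are already labelled by arbitrary affine linear metapolynomials, so applying $g$ simply relabels the input gates and leaves the circuit size \emph{exactly} $s$; there is no additive $\poly(d,k)$ overhead to absorb. With this observation your proof matches the paper's not only in structure but in the constants as well.
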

\begin{proof}
If $V$ is irreducible of isomorphism type $\lambda$, then it decomposes into a direct sum of  $1$-dimensional $T$-isotypic components, where $T$ ranges over all semistandard tableaux of shape $\lambda$.
Take a nonzero metapolynomial $\Delta$ with minimal complexity in $V$ and apply a generic basis change, i.e., apply a generic element of $\GL_k$.
Then independently project the result onto all $T$-isotypic components.
We claim that since the basis change was generic, the projections are nonzero, and therefore form a basis of $V$.
In fact, we can be more precise about how we pick the basis change element $g\in\GL_k$.
For every semistandard tableau $T$, define $X_T := \{g\in\GL_k \mid g\Delta \text{ has zero $T$-isotypic projection}\}$.
Such $X_T$ is Zariski closed in $\GL_k$, but it is not the whole $\GL_k$,  otherwise the linear span of $\GL_k\Delta$ would be a nontrivial subrepresentation of the irreducible representation $V$.
Therefore the finite union $X:=\bigcup_{T} X_T$ is also a proper Zariski closed subset of $V$, and its complement $\GL_k \setminus X$ is a nonempty Zariski open subset.
If we pick $g \in \GL_k \setminus X$, then the projections of $g\Delta$ to every $T$-isotypic space are nonzero.
Moreover, by \Cref{thm:main}, all these projections have complexity at most $O(sk^{2k^2}(\delta d)^{2k^4})$.
But the dimension of $V$ equals the number of semistandard tableaux of shape $\la$ with entries $1,\ldots,k$, which is at most $(\delta d)^{k^2}$.
\end{proof}

\begin{remark}
\label{rem:padding}
In general, algebraic complexity classes are defined for possibly non-homogeneous polynomials. In this paper, we assume that polynomials and metapolynomials are homogeneous. However, one can lift the results to the nonhomogeneous setting as follows.

A polynomial $f(x_1,\ldots,x_k)$ is called homogeneous if all its monomials have the same degree.
For $d\geq \deg(f)$, define the degree $d$ homogenization
\[f^{\sharp d}(x_0,x_1,\ldots,x_k) := x_{0}^d f(x_1/x_{0},\ldots,x_k/x_{0}).\]
If $f$ has an algebraic circuit of size $s$, then $f^{\sharp d}$ has an algebraic circuit of size at most $O(sd)$:
$s(d+1)$ for extracting the homogeneous parts of $f$ via interpolation,
$d$ operations to compute all values $x_0,x_0^2,\ldots,x_0^d$,
and another $d+1$ for multiplying the homogeneous components with the correct powers of $x_0$, and a final $d$ for adding up the results.
Given a metapolynomial $\Delta$ that does not involve~$x_0$,
define $\Delta^{\sharp d}$ by replacing every metavariable $c_{\bm{i}}$ with $c_{\bm{i}+(d-|\bm{i}|,0,0,\ldots)}$,
and setting every metavariable $c_{\bm{i}}$ with $|\bm{i}|>d$ to zero.
Clearly we have $\Delta^{\sharp d}(f^{\sharp d}) = \Delta(f)$, provided that $d\geq \deg(f)$.
Hence, in this paper we can always assume that $f$ is homogeneous.
\end{remark}

\paragraph{Organization of the paper}
In the first part of \Cref{sec:setting} we illustrate the motivations for studying the circuit complexity of the isotypic components and the highest weight vectors in the space of metapolynomials, in the context of algebraic complexity lower bounds and border complexity classes.
In \Cref{sec: algebraic natural proofs}, based on \Cref{thm:main} we prove that algebraic natural proofs can without loss of generality be assumed to be isotypic, see \Cref{thm:mainalgnat}.
\Cref{sec:outlineofproofofmainthm} sketches the main ideas behind \Cref{thm:main}. In \Cref{sec:lie algebras}, we provide an introduction to Lie algebras, universal enveloping algebras, and their representation theory, on which the core of the proof is built. In \Cref{sec:projector construction}, we explain the construction of the projection maps used in the proof of the main theorem; we describe the efficient circuit implementation in \Cref{subsec:ckttransform}. In \Cref{sec:proofsalgnatproofs}, we prove the results of \Cref{sec: algebraic natural proofs}. In \Cref{sec:pbw}, we show one part of the proof of the Poincar\'e--Birkhoff--Witt Theorem (\Cref{thm:pbw}); this proof is explicit and can be used to construct the projectors in \Cref{thm:main}.

\section{Context and consequences of the main theorem} \label{sec:setting}

\subsection{Complexity classes defined by invariant complexity measures}
A p-family is a sequence of polynomials $(f_n)_{n \in \IN}$ for which the number of variables of $f_n$ and the degree of $f_n$ are polynomially bounded functions of $n$. The class $\VP$ is the set of p-families whose algebraic circuit complexity is polynomially bounded.
Valiant's landmark conjecture is that the permanent sequence $(\per_n)_{n\in\IN}$ with $\per_n = \sum_{\sigma \in \mathfrak{S}_n} \prod_{i=1}^n x_{i,\sigma(i)}$ is not contained in $\VP$, i.e., $\VP\neq\VNP$; for details see~\cite{burgisser2013algebraic,mahajan2014algebraic,burgisser2024completeness}. The extended conjecture states that $(\per_n)\notin\VQP$ \cite{Bur:00}, i.e., that the algebraic circuit complexity of the permanent is not quasipolynomially bounded.

Let $\cc(f)$ denote the algebraic circuit complexity of the polynomial $f \in \bbC[x_1 \vvirg x_k]$. It is easy to see that $\cc(f)=\cc(g \cdot f)$ for any $g\in\GL_k$: A circuit for $g \cdot f$ is obtained from the one for $f$ by applying $g$ at the input gates. Recall that in the definition of algebraic circuits we allow affine linear forms at the input gates. We say that the complexity measure $\cc$ is an \emph{invariant} complexity measure.
In more restrictive circuit definitions, for instance if the input gates are required to be variables, the associated circuit complexity $c'$ might change under the action of $\GL_k$. However, both definitions give the same class $\VP$.

Several other classical classes in algebraic complexity theory can be defined as the set of p-families for which some invariant complexity measure is polynomially bounded.
For example, the class $\VF$ is the set of p-families whose algebraic formula size is polynomially bounded; this is the smallest size of an algebraic circuit whose underlying directed graph is a tree. The class $\VBP$ is the set of p-families whose algebraic branching program width is polynomially bounded; we do not define this notion here, but it is an easy consequence of the definition that it is an invariant complexity measure, e.g. \cite[Def.~2.2]{saptharishi2021survey}. More classical classes can be readily defined in this way, for example the class of p-families of polynomially bounded circuit size and constant depth (still allowing arbitrary linear forms as inputs), such as $\Sigma \Pi \Sigma$, or the class of p-families of sums of polynomially many powering gates $\Sigma\Lambda\Sigma$.
Finally, we mention $\VNP$, which is the class of p-families of polynomially bounded permanental complexity: The smallest $r$ such that $f$ can be written as the permanent of an $r\times r$ matrix with affine linear entries. In all these cases, the relevant complexity measure is invariant.

Notable complexity measures that are \emph{not} invariant are those related to the sparsity of the polynomial, in the sense of \cite{khovanskiui1991fewnomials}, or to restrictions on how often a variable can occur, such as read-once or read-$k$ models \cite{forbes2014polynomial}. A non-invariant complexity measure $c'$ can be converted into an invariant complexity measure $c$ by defining $c(f):=\min\{c'(gf)\mid g \in \GL_k\}$, i.e., by taking the minimum complexity over all base changes, but for some non-invariant measures this can change the corresponding complexity class. In this paper, we only study complexity classes defined by invariant measures.

\subsection{Lower bounds via isotypic metapolynomials} \label{subsec:lb-isotopic-hwv}

An important problem in algebraic complexity theory concerns proving concrete lower bounds for interesting complexity measures~$c$. More precisely, given a polynomial $f_{\textup{hard}}$ and a certain integer $r$, one aims to show $c(f_{\hard}) > r$ or equivalently $f_{\hard} \notin X_r$ where $X_r =\{f \in \IC[x_1,\ldots,x_k]_d \mid c(f) \leq r\}$ is the set of polynomials satisfying the upper bound $r$ on the complexity measure $c(\cdot)$. A standard approach is to determine a function $\Delta$ with the property that $\Delta(f) = 0$ for every $f \in X_r$ and $\Delta(f_\hard) \neq 0$.

Several lower bound methods in complexity theory are based on this approach, and most often the function $\Delta$ is a metapolynomial. This is the case for \emph{rank methods} such as the method of partial derivatives \cite{Sylv:PrinciplesCalculusForms,ChKaWi:PartialDerivativesArithm}, the method of shifted partial derivatives \cite{GKKS:ArithmeticCircuitsChasmDepthThree, forbes2015deterministic}, and other augmented flattening methods such as Koszul--Young flattenings \cite{LanOtt:NewLowerBoundsBorderRankMatMult}: in all these cases the complexity lower bound is obtained in terms of a lower bound of the rank of a matrix whose entries depend on the metavariables, or equivalently on the nonvanishing of a suitable set of minors, which are metapolynomials.  Rank methods yield the recent breakthrough of \cite{DBLP:conf/focs/Limaye0T21} giving the first superpolynomial lower bounds for explicit polynomials to be computed by low-product-depth algebraic circuits in large characteristic, further extended to {\em any} field in \cite{DBLP:conf/coco/000124}. 
 Similarly, the current best known lower bounds for the border determinantal complexity of the permanent, as well as the one of the sum of powers, are based on a non-degeneracy condition which can be translated into the vanishing of a suitable metapolynomial \cite{LaMaRa:DegDuals,DBLP:journals/cc/KumarV22}. Finally, the algebraic branching program lower bounds discussed in \cite{Kum:QuadraticLowerBound,GGIL:DegreeRestrictedStrengthDecompsABP} are built on geometric conditions of the zero set of the polynomial of interest, which in turn can be translated into the vanishing of metapolynomials.
 
When the complexity measure $c(\cdot )$ is invariant, one can use representation theory to enhance the search for the metapolynomials yielding a separation. Our main result \Cref{thm:main} shows that this enhancement is not expensive: the computational overhead is quasi-polynomial. 

To illustrate the effect of our result, first consider a simpler reduction. Let $\Delta$ be a metapolynomial vanishing on $X_r$ such that $\Delta(f_\hard) \neq 0$. Write $\Delta = \sum_i \Delta^{(i)}$ for homogeneous metapolynomials $\Delta^{(i)}$ of degree $i$. If $X_r$ is invariant under rescaling, one has $\Delta^{(i)} (X_r) = 0$ for every $i$. On the other hand, there exists at least one $i$ such that $\Delta^{(i)}(f_\hard) \neq 0$. The relevant homogeneous component $\Delta^{(i)}$ can be extracted from $\Delta$ with a simple interpolation argument, and one has $\cc(\Delta^{(i)}) \leq  (\deg(\Delta)+1) \cdot \cc(\Delta)$. In summary, if the lower bound $c(f_\hard) > r$ can be proved via a metapolynomial $\Delta$ satisfying $\cc(\Delta) \leq s$, then the same lower bound can be proved via a \emph{homogeneous} metapolynomial $\Delta^{(i)}$ satisfying $\cc(\Delta^{(i)}) \leq (\deg(\Delta)+1)s$; and even further, the homogeneous metapolynomial $\Delta^{(i)}$ can be realized as the projection of $\Delta$ onto the $i$-th homogeneous component of the space of metapolynomials.

\Cref{thm:main} shows similar results for other projections and in particular for the projection onto a specific isotypic component. Indeed, suppose $\Delta$ is a (homogeneous) metapolynomial vanishing on $X_r$ and such that $\Delta(f_\hard) \neq 0$. Write $\Delta = \sum_{\lambda} \Delta^{(\lambda)}$ as the sum of its isotypic components. If $c(\cdot)$ is an invariant complexity measure, then $X_r$ is invariant under the action of $\GL_k$. In this case, one has $\Delta^{(\lambda)} (X_r) = 0$ for every~$\lambda$. On the other hand, there exists at least one $\lambda$ such that $\Delta^{(\lambda)}(f_\hard) \neq 0$. \Cref{thm:main} provides an upper bound on the circuit complexity of $\Delta^{(\lambda)}$ in terms of the circuit complexity of $\Delta$, showing that if the lower bound $c(f_\hard) > r$ can be proved via a metapolynomial $\Delta$ satisfying $\cc(\Delta) \leq s$, then the same lower bound can be proved via an \emph{isotypic} metapolynomial with circuit complexity controlled by the parameter $s$. \Cref{thm:main} further proves analogous results for more refined projections, such as the $T$-isotypic projections or the projection on the highest weight space. These are discussed in \Cref{subsec:hwvs}.

\subsection{Highest weight vectors and representation theoretic obstructions}
\label{subsec:hwvs}
\Cref{thm:main} can  be used to convert lower bounds proved by $\Delta$ into lower bounds proved by a highest weight vector (HWV for short). To see this, note that if $\Delta$ vanishes on $X_r$ and $\Delta(f_\textup{hard})\neq 0$, then there exists a highest weight vector $\Delta'$ vanishing on $X_r$ and such that $\Delta'(g \cdot f_\textup{hard})\neq 0$ for some $g\in\GL_k$; in fact, this can be achieved with a random $g \in \GL_k$. 
Hence a HWV proves the lower bound $c(g\cdot f_{\hard}) > r$; if $c(\cdot)$ is an invariant complexity measure, the same lower bound holds for $f_{\hard}$. Similarly to before, \Cref{thm:main} provides an upper bound on the circuit complexity $\cc(\Delta')$ in terms of $\cc(\Delta)$, showing only a quasipolynomial blowup in the complexity of the metapolynomial.

A coarser view on HWVs is given via representation theoretic multiplicities, which is a central idea of geometric complexity theory \cite{MS01,MS08,mulmuley2012gct}.
Let $Y := \overline{\GL_k \, f_\textup{hard}}$ denote the orbit closure, and let $X := \overline{X_s}$.
It is easy to see that $f_\textup{hard} \in X$ if and only if $Y\subseteq X$.
The vector space of HWVs of weight $\la$ defines a vector space of polynomial functions on $X$; the dimension of such space is called 
the multiplicity $\mathsf{mult}_{\lambda}\IC[X]$ of $\lambda$ in the coordinate ring $\IC[X]$. Standard facts in representation theory, and in particular Schur's Lemma (see \Cref{lem:schur}), guarantee that if $\mathsf{mult}_{\lambda}\IC[X] > \mathsf{mult}_{\lambda}\IC[Y]$ for some $\lambda$, then $X \not \subset Y$. Therefore, multiplicities give a method to potentially separate two varieties, and obtain a lower bound on $c(f_\hard)$.
In this case, we say $\lambda$ is a {\em multiplicity obstruction}. If additionally
$\mathsf{mult}_{\lambda}\IC[X] > 0 = \mathsf{mult}_{\lambda}\IC[Y]$, then $\lambda$ is called an {\em occurrence obstruction}.
Occurrence obstructions \cite{BI:11,BI:13} and multiplicity obstructions \cite{IK:20} are sometimes sufficient to obtain separations. However, it is known that occurrence obstructions do not work in certain setups \cite{IP:17,GesIkPa:GCTMatrixPowering,burgisser2019no, DIP:20}.

\Cref{thm:main} makes no statement about the viability of the method of multiplicity obstructions. Indeed, one main idea in geometric complexity theory is that information about $\mathsf{mult}_\la\IC[X]$ can be obtained
without ever having to write down any circuit for a HWV.
Information can be gained
by decomposing much easier coordinate rings of orbits, see \cite{BLMW:11,BurIke:FundamentalInvariantsOrbitClosures}. This gives upper bounds on $\mathsf{mult}_\la\IC[X]$ by using classical representation theoretic branching rules. The lower bounds on $\mathsf{mult}_\la\IC[\overline{\GL_k f_\textup{hard}}]$ are more difficult to obtain, see \cite{IK:20} for a recent implementation.

\subsection{Border complexity}\label{sec:bordercomplexity}

Lower bounds achieved via metapolynomials are better described in the setting of \emph{border complexity}. Generally, for a given algebraic complexity measure $c(\cdot)$, one can define a corresponding \emph{border} measure, $\underline{c}(\cdot)$ as follows: 
\[
\underline{c}(f) = \min \left\{ r : 
\begin{array}{l}
\text{there exists a sequence $f_\eps$ such that $f = \lim_{\eps \to 0} f_\eps$} \\
\text{and $c(f_\eps) = r$ for all but finitely many $\eps \in \bbC$}
\end{array}
\right\};
\]
correspondingly, a complexity class $\calC$ defined in terms of the growth of a given complexity measure naturally induces a \emph{border} complexity class $\bar{\calC}$, described by the growth of the corresponding border complexity measure. In fact, $\bar{\calC}$ can be defined as the closure of $\calC$ with respect to a suitable topology \cite{IkSan22}. In the setting of algebraic circuits, the circuit size $\cc(\cdot)$ has a corresponding border circuit measure $\underline{\cc}(\cdot)$ which defines the complexity class $\bar{\VP}$.

A natural question concerns whether $\calC = \bar{\calC}$ and more generally how much larger $\bar{\calC}$ is compared to $\calC$. For example, the border class of sequences admitting (border) polynomial size $\Sigma\Lambda\Sigma$ circuits is contained in $\VBP$, see \cite{Forbes16} and \cite[Theorem 4.2]{BlDoIk21}. In \cite{DutDwiSax22}, the same containment was shown for the border classes of $\Sigma^{[k]} \Pi \Sigma$ circuits (for every constant $k$). In some restricted setting, for instance in the study of matrix multiplication complexity, it is known that a complexity measure and its associated border complexity are equivalent \cite{BiCaLoRo79}. In general, the question is wide open for most complexity classes and, in particular, for $\VP$.

Metapolynomials characterize border complexity classes in the following sense: if a sequence of polynomials $(f_n)$ does not belong to a border complexity class $\bar{\calC}$ then there is a sequence of metapolynomials \emph{witnessing} the non-membership. This is made precise in \Cref{thm: nullstellensatz for VP}.

\subsection{Algebraic complexity and algebraic natural proofs}\label{sec: algebraic natural proofs}

Razborov and Rudich~\cite{razborov1994natural} introduced the notion of natural proofs in Boolean circuit complexity. 

\begin{definition}[Natural property] \label{def:natural-prop}
A subset $P \subset \{f : \{0,1\}^n \rightarrow \{0,1\}\}$ of Boolean functions is said to be a natural property useful against a class $\calC$ of Boolean circuits if the following is true:
\begin{itemize}
    \item ({\bf Usefulness}). Any Boolean function $f: \{0,1\}^n \rightarrow \{0,1\}$ that can be computed by a Boolean circuit in $\calC$ does not have the property $P$.
    \item ({\bf Constructivity}). Given the truth table of a Boolean function $f: \{0,1\}^n \rightarrow \{0,1\}$, whether it has the property $P$ can be decided in time polynomial in the length of the input, i.e., in time $2^{O(n)}$.
    \item ({\bf Largeness}). For all large enough $n$, at least a $2^{-O(n)}$ fraction of all $n$ variate Boolean functions have the property $P$.
\end{itemize}
\end{definition}
A proof that a certain family of Boolean functions cannot be computed by circuits in $\calC$ is said to be a {\em natural lower bound proof} if the proof (perhaps implicitly) proceeds via establishing a natural property useful against $\calC$, and showing that the candidate hard function has this property. 
Razborov and Rudich showed that most of the known Boolean circuit lower bound proofs, such as lower bounds for $\mathsf{AC}^0$ (constant-depth) circuits have the natural property.  
One can set the same framework in the algebraic setting; see~\cite{grochow2015unifying,grochow2017towards,forbes2018succinct,DBLP:conf/focs/Chatterjee0RST20,DBLP:conf/stacs/0001RST22}. The exact details of the definitions are not fully settled yet, see page 19 of \cite{DBLP:conf/focs/Chatterjee0RST20}. 

Following \cite{grochow2017towards}, the \emph{stretched} classes $\metaVP$ and $\metaVQP$ arise as follows.
The class $\metaVP$ is the class of sequences $(\Delta_1,\Delta_2,\ldots)$ of metapolynomials of format $(\delta(n),n,k(n))$ with $k(n)$ polynomially bounded in $n$, and $\delta(n)$ and $\cc(\Delta_n)$ polynomially bounded in $N(n) = \binom{k(n)+n-1}{n}$, which is the number of degree $n$ monomials in $k$ variables. The class $\metaVQP$ is defined analogously, but with $\cc(\Delta_n)$ quasipolynomially bounded in $N(n)$.

We will now argue that $k(n)$ is bounded by a polynomial in $\log(N(n))$.
To see this, choose $\gamma$ such that $k(n) \le n^{\gamma}$ for $n$ large enough.
Take any such large enough $n$ and make a case distinction.
If $k(n)\leq n$, then $N =\binom{k+n-1}{k-1} \ge (1 + \frac{n}{k-1})^{k-1}$ implies $\log N \ge (k-1) \log (1+\frac{n}{k-1}) \ge (k-1)$, since $\frac{n}{k-1} \ge 1$; here we use the fact that $\binom{m}{r} \ge (m/r)^r$.
On the other hand, if $k(n)\geq n+1$, then since $\binom{m}{r}$ is an increasing function on $m$ for a fixed $r$, we get that~$N = \binom{k+n-1}{n} \ge  \binom{2n}{n} \ge 2^n$, for $n \ge 1$.
Therefore, $(\log N)^{\gamma} \ge n^{\gamma} \ge k$.
Hence, the claim is proved in both cases.

Fix an invariant complexity measure $c$ and let $X_{d,r}$ denote the set of homogeneous degree $d$ polynomials $f$ with $c(f) \leq r$.
For a format $(\delta,d,k)$ metapolynomial $\Delta:\IC[x_1,\ldots,x_k]_d\to\IC$ let $\Delta|_{X_{d,r}} : X_{d,r}\to \IC$ denote the restriction of $\Delta$ to $X_{d,r}$.
Let $\mathcal C$ denote the set of p-families with polynomially bounded~$c$.
The \emph{vanishing ideal} of $\calC$ is defined as
\begin{align*}
I(\mathcal C) := \left\{(\Delta_n)_{n\in\IN} \textup{ of format $(\ast,n,\ast)$ } : 
\begin{array}{l}
\text{for every polynomially bounded } r(n), \\
\text{the sequence } \big(\Delta_{1}|_{X_{1,r(1)}},\Delta_{2}|_{X_{2,r(2)}},\ldots\big) \\
\text{eventually vanishes identically}
\end{array}\right\}
\end{align*}

To emphasize how natural this definition is, we provide a corresponding Nullstellensatz.
\begin{theorem}[Nullstellensatz for $\mathcal C$]\label{thm: nullstellensatz for VP}
    Let $(f_n)_{n \in \mathbb{N}}$ be a p-family of homogeneous polynomials such that $(\deg f_n)$ is a strictly increasing sequence. The sequence $(f_n)$ lies in $\overline{\mathcal C}$ if and only if for every $(\Delta_n) \in I(\mathcal C)$ we have $\Delta_{\deg(f_n)} (f_n) = 0$ eventually.
\end{theorem}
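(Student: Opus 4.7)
\emph{Forward direction.} Suppose $(f_n)\in\overline{\calC}$, witnessed by a polynomial $r$ with $f_n\in\overline{X_{d_n,r(n)}}$ for all large $n$, where $d_n:=\deg(f_n)$. Since $(d_n)$ is strictly increasing, $d_n\geq n-O(1)$, so after replacing $r$ by a slightly larger polynomial $\tilde r$ (e.g.\ $\tilde r(d):=r(d+d_1+1)$) we have $f_n\in\overline{X_{d_n,\tilde r(d_n)}}$ eventually. For any $(\Delta_m)\in I(\calC)$, the definition of $I(\calC)$ applied with $\tilde r$ gives $\Delta_m|_{X_{m,\tilde r(m)}}\equiv 0$ for all large $m$. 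Since metapolynomials are polynomial functions on the finite-dimensional coefficient space, vanishing on $X_{m,\tilde r(m)}$ extends to vanishing on its Zariski closure; specializing $m=d_n$ yields $\Delta_{d_n}(f_n)=0$ eventually in $n$.

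\emph{Converse, contrapositive.} Suppose $(f_n)\notin\overline{\calC}$: for every polynomial $r$, the set $\{n:f_n\notin\overline{X_{d_n,r(n)}}\}$ is infinite. I will construct a witness $(\Delta_m)\in I(\calC)$ such that $\Delta_{d_n}(f_n)\neq 0$ for infinitely many $n$, by a diagonal argument. Enumerate the cofinal polynomial family $r_k(n):=n^k$, $k=1,2,\ldots$, and inductively pick indices $n_1<n_2<\cdots$ with $f_{n_k}\notin\overline{X_{d_{n_k},r_k(n_k)}}$---available by the hypothesis applied with $r=r_k$---and with $n_k$ chosen large enough that the growth comparison below holds. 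Since $f_{n_k}$ is a point of the coefficient space $\IC[x_1,\dots,x_{k(n_k)}]_{d_{n_k}}$ lying outside the Zariski closed subset $\overline{X_{d_{n_k},r_k(n_k)}}$, the definition of Zariski closure directly supplies a metapolynomial $\Psi_k$ of format $(\ast,d_{n_k},k(n_k))$ that vanishes on $X_{d_{n_k},r_k(n_k)}$ and satisfies $\Psi_k(f_{n_k})\neq 0$. Define $\Delta_d:=\Psi_k$ if $d=d_{n_k}$ (unique $k$ by strict monotonicity) and $\Delta_d:=0$ otherwise.

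By construction $\Delta_{d_{n_k}}(f_{n_k})=\Psi_k(f_{n_k})\neq 0$ for every $k$, so the right-hand side of the equivalence fails. It remains to check $(\Delta_m)\in I(\calC)$: given any polynomial $r$, since $(f_n)$ is a p-family one has $d_{n_k}\leq n_k^\alpha$ for some fixed $\alpha$, whence $r(d_{n_k})=O(n_k^{\alpha\deg r})$, which is dominated by $r_k(n_k)=n_k^k$ once $k>\alpha\deg r$ and $n_k$ is large---both achievable by our freedom in the diagonal selection. Therefore $X_{d_{n_k},r(d_{n_k})}\subseteq X_{d_{n_k},r_k(n_k)}$, and $\Delta_{d_{n_k}}=\Psi_k$ vanishes on $X_{d_{n_k},r(d_{n_k})}$ for all large $k$; the remaining $\Delta_d=0$ vanish trivially. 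The only delicate step in the whole argument is this diagonal coordination---escaping the $r_k(n_k)$-closure (guaranteed infinitely often by hypothesis) while forcing $n_k$ large enough to dominate any fixed polynomial in $d_{n_k}$---but each constraint excludes only finitely many indices at each stage, so both can always be met simultaneously. Everything else is routine: metapolynomials are polynomial functions on the coefficient space (so vanishing on a set is equivalent to vanishing on its Zariski closure), and matching the degree-indexed $(\Delta_m)$ to the p-family $(f_n)$ goes through the injection $n\mapsto d_n$.
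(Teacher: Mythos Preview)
Your proof is correct and takes a genuinely different route from the paper's. The paper first reduces to the case $\deg f_n = n$ by padding $(f_n)$ to a sequence $(f'_m)$ indexed by degree, and then for the converse direction works with the border complexity function $r(n)=\underline{c}(f_n)$ itself: it builds a monotone strongly superpolynomial majorant $R(n)$ of $r(n)$ that agrees with $r(n)$ infinitely often (via an auxiliary lemma characterizing $I(\calC)$ in terms of strongly superpolynomial sequences), and defines $\Delta_n$ to vanish on $\overline{X_{n,R(n)-1}}$ for every $n$. You instead work directly with the indices $d_n$ and run a clean diagonal argument, picking $n_k$ with $f_{n_k}\notin\overline{X_{d_{n_k},n_k^k}}$ and setting $\Delta_d=0$ off the sparse set $\{d_{n_k}\}$. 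Your approach is more elementary in that it avoids both the reduction step and the auxiliary lemma; the paper's approach has the minor advantage of producing a $(\Delta_n)$ that is nonzero at every index rather than only along a subsequence.

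Two small comments, neither affecting correctness. First, your example $\tilde r(d)=r(d+d_1+1)$ only gives $\tilde r(d_n)\geq r(n)$ once you assume $r$ is monotone; this is harmless since a polynomial bound can always be replaced by a monotone one, but it is worth saying. Second, your remark that $n_k$ must be ``chosen large enough'' for the growth comparison is overcautious: since $n_k\to\infty$ automatically, the inequality $r(d_{n_k})\leq C\,n_k^{\alpha\beta}\leq n_k^k$ holds for all sufficiently large $k$ regardless of how the $n_k$ were selected, so no extra care in the diagonal construction is actually needed.
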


The proof of \Cref{thm: nullstellensatz for VP} is given in \Cref{sec:proofsalgnatproofs}.

\begin{definition}[Algebraic natural proof] \label{def:alg-natural-proof}
We say that $\mathcal C$ has \emph{algebraic natural proofs} if $I(\mathcal C) \mathop{\cap} \metaVQP$ contains a sequence that is not eventually zero.
\end{definition}
Here we loosened the standard definition slightly from $\metaVP$ (see \cite{grochow2017towards,DBLP:conf/focs/Chatterjee0RST20}) to $\metaVQP$. The landmark question in algebraic metacomplexity is whether  $\VP$ has algebraic natural proofs.

\Cref{def:alg-natural-proof} should be compared with \Cref{def:natural-prop} in the Boolean setting. If $I(\VP) \mathop{\cap} \metaVQP$ contains an eventually nonzero sequence $\Delta$ of format $(\delta(n), n, k(n))$, then clearly there exists the largest $r(n)$ such that $\Delta_n(X_{n,r(n)}) = \{0\}$, and further $\cc(\Delta_n) \le 2^{\poly(\log N)}$, where $N:=\binom{k(n)+n-1}{n}$, and $k(n) \le \poly(n)$. Therefore,  $\Delta$ certifies the non-membership, a {\em useful natural property}. Furthermore, it trivially satisfies the {\em largeness} criterion, since $\Delta_n$ does not vanish on most polynomials. Finally, the quasipolynomial bound on the algebraic circuit complexity of $\Delta_n$ gives a natural algebraic analogue of the notion of {\em constructivity}.

Let $\textsf{Isotypic}$ denote the set of sequences of isotypic metapolynomials.
The following theorem follows from \Cref{thm:main}. It states in a concise way that only isotypic metapolynomials have to be considered for algebraic natural proofs.
\begin{theorem}[Main theorem about algebraic natural proofs]\label{thm:mainalgnat}
$\mathcal C$ has algebraic natural proofs if and only if $I(\mathcal C) \mathop{\cap} \metaVQP \mathop{\cap} \textup{\textsf{Isotypic}}$ contains a sequence that is not eventually zero.
\end{theorem}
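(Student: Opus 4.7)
The backward direction is immediate, since a not-eventually-zero sequence in $I(\mathcal C)\cap\metaVQP\cap\textsf{Isotypic}$ is, in particular, a not-eventually-zero sequence in $I(\mathcal C)\cap\metaVQP$. So the whole content lies in the forward direction. My plan is to take an arbitrary witness $(\Delta_n)\in I(\mathcal C)\cap\metaVQP$ (with $\Delta_n\neq 0$ for infinitely many $n$) and replace each $\Delta_n$ by a single nonzero isotypic component, then use \Cref{thm:main}(2) to control the circuit complexity of that component.

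The structural ingredient is that invariance of the complexity measure $c$ makes each $X_{d,r}$ a $\GL_k$-stable subset of $\bbC[x_1,\ldots,x_k]_d$, hence for every polynomially bounded $r$ the linear space $\{\Delta : \Delta|_{X_{n,r(n)}}=0\}$ is a $\GL_k$-subrepresentation of the format-$(\delta,n,k)$ metapolynomials. Subrepresentations are closed under isotypic decomposition, so if $(\Delta_n)\in I(\mathcal C)$ and $\Delta_n = \sum_\lambda \Delta_n^{(\lambda)}$ is its isotypic decomposition, then each isotypic component sequence $(\Delta_n^{(\lambda)})$ is again in $I(\mathcal C)$. For every $n$ with $\Delta_n\neq 0$, pick any $\lambda_n$ with $\Delta_n^{(\lambda_n)}\neq 0$ (at least one exists, since the components sum to $\Delta_n$); for the remaining $n$ set $\lambda_n$ arbitrarily. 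Let $\Delta_n' := \Delta_n^{(\lambda_n)}$. By construction the sequence $(\Delta_n')$ is isotypic, lies in $I(\mathcal C)$, and is not eventually zero.

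It remains to verify $(\Delta_n')\in\metaVQP$, which is the only step requiring a computation. By \Cref{thm:main}(2),
\[
\cc(\Delta_n')\;\le\; O\bigl(\cc(\Delta_n)\cdot k(n)^{2k(n)^2}\cdot(\delta(n)\,n)^{2k(n)^3}\bigr).
\]
We have $\cc(\Delta_n)\le 2^{\poly(\log N)}$ by hypothesis, where $N=\binom{k(n)+n-1}{n}$, and the inequality $k(n)\le\poly(\log N)$ established in the discussion immediately preceding \Cref{thm:mainalgnat}. Thus $k^{2k^2}=2^{O(k^2\log k)}$ and $(\delta d)^{2k^3}=2^{O(k^3\log(\delta d))}$ are both of the form $2^{\poly(\log N)}$, since $\delta(n)$ and $n$ are polynomially bounded in $N$. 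Multiplying three quasipolynomial factors keeps the bound quasipolynomial, so $(\Delta_n')\in\metaVQP$, completing the proof. The only genuine obstacle is this last bookkeeping step, and it is harmless precisely because the main theorem's blowup factors in $k$ (which could a priori be ruinous) are absorbed by the fact that $k$ is polylogarithmic in the natural input size $N$.
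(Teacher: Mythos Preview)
Your proof is correct and follows essentially the same approach as the paper's: both use the $\GL_k$-invariance of the sets $X_{n,r}$ to conclude that the vanishing ideal is a subrepresentation (hence closed under isotypic projection), then pick one nonzero isotypic component per $n$ and apply \Cref{thm:main}(2) together with the bound $k(n)\le\poly(\log N)$ to verify membership in $\metaVQP$. The only cosmetic difference is that the paper justifies closure under isotypic projection via the Lie algebra action description of the projectors, whereas you invoke the general fact that subrepresentations respect isotypic decomposition; these are equivalent.
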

\begin{proof}
One direction is clear, because $I(\mathcal C) \mathop{\cap} \metaVQP \mathop{\cap} \textup{\textsf{Isotypic}} \subseteq I(\mathcal C) \mathop{\cap} \metaVQP$.

For the other direction, we first consider a general property of isotypic components of vanishing ideals.
Let $I(X)_\delta=\{\Delta \textup{ of format $(\delta,d,k)$} \mid \Delta|_X=0\}$.
Let $X$ be closed under the action of $\GL_k$.
If $\Delta \in I(X)_\delta$, then $g\Delta\in I(X)_\delta$, because for all $x\in X$ we have $(g\Delta)(x)= (\Delta)(g^{-1}x)=0$, since $g^{-1}x \in X$.
Moreover, $I(X)_\delta$ is closed under linear combinations, hence $I(X)_\delta$ is a vector space, and in particular closed under taking limits and thus closed under the action of the Lie algebra.
We have $\Delta\in I(X)_\delta$ if and only if for all $\la$ we have $\Delta^{(\la)}\in I(X)_\delta$, because the projection of $\Delta$ to an isotypic component $\Delta^{(\la)}$ involves only finite linear combinations of Lie algebra actions, see \Cref{sec:effconproj}.

Now, let $\mathcal C$ have algebraic natural proofs, witnessed by a sequence $(\Delta_n)_{n\in\IN} \in I(\mathcal C) \mathop{\cap} \metaVQP$ of format $(\delta(n),n,k(n))$ that is not eventually zero.
Now, let $(\lambda^{n})_{n\in\IN}$ be a sequence of partitions, $\lambda^{n} \partinto[k(n)]{n \cdot \delta(n)}$, such that the isotypic component $\Delta_n^{(\la^{n})}$ of $\Delta_n$ is nonzero whenever $\Delta_n$ is nonzero.
By construction, $(\Delta_n^{(\la_n)})_{n\in\IN} \in \textup{\textsf{Isotypic}}$ and $(\Delta_n^{(\la_n)})_{n\in\IN}$ is not eventually zero.
Since each $X_{n,i}$ is invariant under the group action, we have for all~$i$ and~$n$:
Whenever $\Delta_{n}|_{X_{n,i}} = 0$, then also $\Delta_{n}^{(\la^{n})}|_{X_{n,i}} = 0$.
For every polynomially bounded~$r$ we have
$\Delta_{n}|_{X_{n,r(n)}} = 0$ eventually,
hence we have $\Delta_{n}^{(\la^{n})}|_{X_{n,r(n)}} = 0$ eventually, and hence $(\Delta_n^{(\la_n)})_{n\in\IN} \in I(\mathcal C)$.
Let $N(n)=\binom{k(n)+n-1}{n}$.
Let $s(n) = \cc(\Delta_n) \in 2^{\poly(\log(N))}$.
By \Cref{thm:main}
with $k(n)\in\poly(\log(N))$, $\delta(n)\in\poly(N)$, and $d=n$, we
have $\cc(\Delta^{(\la^n)}_n) \in 2^{\poly(\log(N))}$,
which implies $(\Delta_n^{(\la_n)})_{n\in\IN}\in\metaVQP$,
which finishes the proof.
\end{proof}

\section{Outline of the proof of \texorpdfstring{\Cref{thm:main}}{}}
\label{sec:outlineofproofofmainthm}

The proof of \Cref{thm:main} is built on an efficient construction of the image of metapolynomials under projections onto the isotypic spaces. 

Roughly speaking, we will construct the projectors of \Cref{thm:main} as linear combinations of repeated derivations applied to metapolynomials. These derivations live naturally in a vector space called the \emph{Lie algebra} $\gl_k$ of $\GL_k$. They are defined as the linearization of linear change of coordinates by elements of $\GL_k$ applied to polynomials on which a metapolynomial is evaluated. Repeated applications of these operations is formally described using products in an associative algebra $\mathcal{U}(\gl_k) \supseteq \gl_k$ called the \emph{universal enveloping algebra} of $\gl_k$. 
More concretely, a standard basis element in $\gl_k$ corresponds to an analogue of a shifted partial derivative for metapolynomials (see \Cref{claim:metapolynomials}), and an element of $\mathcal{U}(\gl_k)$ captures repeated applications of such shifted partial derivatives.

Lie algebras and their universal enveloping algebras have a long history in mathematics, and their theory is rich and well-developed. We review the theory in \Cref{sec:lie algebras} to the extent required to construct the projectors. The construction is completely explicit. The universal enveloping algebra $\mathcal{U}(\gl_k)$ contains elements $\{C_1,\ldots,C_r\}$ which \emph{separate} isotypic components in the following sense: each $C_i$ acts by scalar multiplication on each isotypic component, and for any two distinct isotypic components there is a $C_i$ for which the corresponding scalars are different; the precise statement is the content of \Cref{theorem:central-characters} and \Cref{thm: HarishChandra}. This property is used to construct the projectors via an analogue of standard multivariate polynomial interpolation, see \Cref{sec:projector construction}. The elements $\{C_1,\ldots,C_r\}$ are called \emph{Casimir operators}, and they are defined as generators of certain commutative subalgebras of $\mathcal{U}(\gl_k)$. For each case of \Cref{thm:main}, there is a suitable algebra for which the corresponding isotypic components are the spaces of interest, see \Cref{table:projection-algebras-generators-and-diversity}. 

The proof that the projectors can be implemented efficiently is built on the following insights. The Poincar\'e--Birkhoff--Witt theorem (\Cref{thm:pbw}) allows one to express each projector as a linear combination in $\mathcal{U}(\gl_k)$ of $m$-fold products of elements of $\gl_k$ with $m$ bounded polynomially in the number of isotypic components. This number is exponential in $k$, which is polylogarithmic in $N$ as observed in \Cref{sec: algebraic natural proofs}. We use this fact to construct circuits of quasipolynomial size for the projectors in \Cref{subsec:ckttransform}. In turn, this will complete the proof of \Cref{thm:main}.

\section{Representations of Lie algebras}
    \label{sec:lie algebras}

In this section, we provide an introduction to Lie algebras, universal enveloping algebras, and their representation theory, following mainly \cite{FulHar:RepTh, Hall:RepTh}. 
Afterwards, we introduce and explain the results needed to construct the projectors in \Cref{thm:main}. These results are typically stated in the language of abstract Lie algebras, and therefore we take time to build up this general theory. In our case of interest, the Lie algebra arises from $\GL_k$ and many parts simplify. 
We will regularly indicate this in the text for clarity and concreteness.  
We establish homogeneous polynomials and metamonomials as a Lie algebra representation, which we will use as running examples.

A Lie algebra $\frakg$ is a vector space endowed with a bilinear \emph{bracket} operation $[-,-]$ which is skew-commutative, i.e., $[X,Y] = -[Y,X]$, and satisfies a relation called the Jacobi identity $[X,[Y,Z]] +[Z,[X,Y]] +[Y,[Z,X]] = 0$; in particular, the bracket operation is usually not associative. Every associative algebra $\calA$ is a Lie algebra with the commutator bracket given by $[a,b] \coloneqq ab - ba$ for every $a,b \in \calA$. In this paper, we are interested in the representation theory of the general linear group $\GL_k$, and we study it via the representation theory of its Lie algebra $\frakgl_k$: this is the space of $k \times k$ matrices endowed with the commutator bracket. 

A Lie algebra representation is a vector space $V$ with an associated linear map $\rho: \frakg \to \End(V)$ such that $\rho([X,Y]_\frakg) = [\rho(X),\rho(Y)]_{\End(V)}$. We equivalently say that $\frakg$ (or its elements) acts on~$V$. For $v \in V$, we often write $X.v \coloneqq \rho(X)v$ as a shorthand. 
A simple example is $V = \bbC^k$, where $\gl_k$ acts by matrix-vector multiplication. This is called the standard representation of $\gl_k$.
In this paper we examine the space of metapolynomials $\bbC[\bbC[x_1 \vvirg x_k]_d]_\delta]$ as a Lie algebra representation of $\gl_k$. The Lie algebra action is explicitly given in in \Cref{claim:metapolynomials}.

Representation theory of Lie algebras is used to study the representation theory of Lie groups because it allows one to ``linearize'' a group action. 
More precisely, every Lie group~$G$ has an associated Lie algebra $\frakg$ and every Lie group representation (a non-linear map) defines naturally a Lie algebra representation (a linear map). 
All Lie algebra representations we will encounter arise from this correspondence.  We will describe the $\frakgl_k$ action on polynomials and metapolynomials explicitly in \Cref{claim:shifted partials} and \Cref{claim:metapolynomials}, which may be taken as the definition. The details of the correspondence between Lie groups and Lie algebras are not crucial for the rest of the paper. We briefly outline them in \Cref{rmk: liegp to lie alg}. 

For a vector space $V$ with basis $v_1,\ldots,v_r$, let $v_1^*,\ldots, v_r^* \in V^*$  be the dual basis of $V^*$.
Explicitly, $v_\ell^*$ is the linear form mapping a vector $v = \sum_{i} c_i v_i \in V$, with $c_i \in \bbC$, to the coefficient $c_\ell$.
The basis dual to the standard basis of $\bbC^k$ is the set of variables $x_1,\ldots,x_k$. 
Let $E_{i,j} \in \gl_k$ be the $k \times k$ matrix having $1$ at the entry $(i,j)$ and zero elsewhere. Regarding $E_{i,j} : \bbC^k \to \bbC^k$ as a linear map, we have $x_\ell \circ E_{i,j} = 0$ if $\ell \neq i$ and $x_\ell \circ E_{i,j} = x_j$ if $\ell = i$; here $\circ$ denotes the composition of functions. Since $\{E_{i,j}\}_{i,j}$ forms a basis for $\gl_k$, the action of $\frakgl_k$ on a representation $V$ is uniquely determined by the action of the $E_{i,j}$'s.

\begin{claim}[Action on polynomials]
    \label{claim:shifted partials}
    The vector space $\bbC[x_1,\ldots,x_k]_d$ of homogeneous degree $d$ polynomials is a representation of $\gl_k$. It is the linearization of the action of $\GL_k$ acting by base change on the variables $x_1,\ldots,x_k$.
For $f \in \bbC[x_1,\ldots,x_k]_d$, the action is defined by the \emph{shifted partial derivative}
\begin{align}
        \label{equation:shifted partials}
        E_{i,j}.f 
        = 
        -\sum_{\ell=1}^k (x_\ell \circ E_{i,j}) \frac{\partial}{\partial x_\ell} f
        = -x_j \frac{\partial}{\partial x_i} f.
    \end{align}
In particular, for a monomial $f = x_{\ell_1} \cdots x_{\ell_d}$, we have 
\[
E_{i,j}.f = \textstyle \sum_{p=1}^d  x_{\ell_1} \cdots x_{\ell_{p-1}} ( E_{i,j}.x_{\ell_p} ) x_{\ell_{p+1}} \cdots x_{\ell_d}.
\]
\end{claim}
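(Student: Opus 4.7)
The plan is to verify the three assertions of the claim in the order they are made: that the shifted-partial formula arises by linearizing the $\GL_k$-action, that the resulting assignment is a Lie algebra homomorphism (so the space of polynomials really is a representation), and that the monomial expansion follows from the Leibniz rule.

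\emph{Step 1 (linearization).}
The $\GL_k$-action on $\bbC[x_1,\ldots,x_k]_d$ is $(g\cdot f)(v) = f(g^{-1}v)$. To extract the infinitesimal action of $E_{i,j}$, I would differentiate along the one-parameter subgroup $g(t) = \exp(tE_{i,j})$, using $\exp(tE_{i,j})^{-1} = I - tE_{i,j} + O(t^2)$ together with the chain rule, to obtain
\[
\frac{d}{dt}\Big|_{t=0}(\exp(tE_{i,j})\cdot f)(v) = -\sum_{\ell=1}^k (E_{i,j}v)_\ell \frac{\partial f}{\partial x_\ell}(v).
\]
Since $(E_{i,j}v)_\ell = \delta_{\ell i}v_j = (x_\ell\circ E_{i,j})(v)$, this is exactly the shifted partial $-x_j\frac{\partial f}{\partial x_i}$ stated in the claim, and simultaneously matches the coordinate-free form $-\sum_\ell(x_\ell\circ E_{i,j})\frac{\partial f}{\partial x_\ell}$.

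\emph{Step 2 (representation property).}
I would then check that the assignment $\rho \colon E_{i,j}\mapsto -x_j\frac{\partial}{\partial x_i}$ defines a Lie algebra homomorphism $\gl_k \to \End(\bbC[x_1,\ldots,x_k]_d)$. A direct verification works: one computes the matrix commutator $[E_{i,j},E_{k,l}] = \delta_{jk}E_{i,l} - \delta_{il}E_{k,j}$ in $\gl_k$, and separately applies the Leibniz rule to the endomorphism commutator $[-x_j\frac{\partial}{\partial x_i},\,-x_l\frac{\partial}{\partial x_k}]$; a short calculation shows the two sides agree. Alternatively, and more structurally, this is an instance of the general principle that any representation of a Lie group induces a representation of its Lie algebra by differentiation, which is precisely what Step 1 has carried out.

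\emph{Step 3 (monomial formula).}
For $f = x_{\ell_1}\cdots x_{\ell_d}$, the ordinary Leibniz rule gives
\[
\frac{\partial}{\partial x_i}(x_{\ell_1}\cdots x_{\ell_d}) = \sum_{p=1}^d x_{\ell_1}\cdots x_{\ell_{p-1}}\Bigl(\frac{\partial}{\partial x_i}x_{\ell_p}\Bigr)x_{\ell_{p+1}}\cdots x_{\ell_d};
\]
multiplying by $-x_j$ and absorbing the $-x_j$ into the $p$-th slot as $-x_j\frac{\partial}{\partial x_i}x_{\ell_p} = E_{i,j}.x_{\ell_p}$ produces the stated identity.

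The only real subtlety anywhere in the argument is the sign in Step 1: because the action is $f \mapsto f\circ g^{-1}$ (so that $\GL_k$ acts on the left of functions), the linearization carries a minus sign and the operator is $-x_j\frac{\partial}{\partial x_i}$ rather than $x_j\frac{\partial}{\partial x_i}$. Once that sign convention is pinned down, the remaining steps are routine and essentially follow from the chain and product rules.
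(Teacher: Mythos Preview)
Your proof is correct and follows essentially the same approach as the paper: both derive the formula by differentiating the pullback action $(g\cdot f)(v)=f(g^{-1}v)$ along a curve through the identity and invoking the chain rule, with the paper appealing to the general Lie-group-to-Lie-algebra correspondence where you add an explicit commutator check. Your Step~3 spells out the Leibniz-rule derivation of the monomial formula, which the paper leaves implicit.
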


For a multiindex $\mu \in \bbN^k, \sum_i \mu_i = d$, let $c_\mu$ be the corresponding variable, that is, the linear map sending a polynomial $f$ to the coefficient of $x^\mu$. Let $e_1,\ldots,e_k \in \IN^k$ be the standard basis vectors: $e_i$ is the vector with $1$ at the $i$-th entry and $0$ elsewhere.

\begin{claim}[Action on metapolynomials]
    \label{claim:metapolynomials}
    The vector space of metapolynomials of format $(\delta,d,k)$, that is $\bbC[\bbC[x_1,\ldots,x_k]_d]_\delta =  \bbC[c_\mu : \mu \in \bbN^k, |\mu| = d]_\delta$ is a representation of $\gl_k$. It is the linearization of the action of $\GL_k$ acting by base change on the variables $x_1,\ldots,x_k$.
For a metapolynomial $\Delta \in \bbC[c_\mu : \mu \in \bbN^k, |\mu| = d]_\delta$, the action is given by
    \begin{align}
        \label{equation:applying-Eij-to-metapolynomials}
        E_{i,j}. \Delta = \sum_{\mu} (E_{i,j}.c_\mu) \frac{\partial}{ \partial c_\mu}\Delta,
    \end{align}
    where the summation is over all multi-indices $\mu \in \IN^k$ with $\sum_i \mu_i = d$. The action on metavariables is given by 
    \begin{equation}
        \label{equation:applying-Eij-to-metavars}
        E_{i,j}.c_\mu = 
        \begin{cases}
            (\mu_i+1)\, c_{\mu + e_i - e_j} &\text{ if } i \neq j, \\
            \mu_i\, c_{\mu} &\text{ otherwise}.
        \end{cases}
    \end{equation}
    In particular, for a metamonomial $\Delta = c_{\mu_1} \cdots c_{\mu_\delta}$, we have 
\[
E_{i,j}.\Delta = \textstyle \sum_{p=1}^\delta  c_{\mu_1} \cdots c_{\mu_{p-1}} (E_{i,j}.c_{\mu_p}) c_{\mu_{p+1}} \cdots c_{\mu_\delta}.
\]
    \end{claim}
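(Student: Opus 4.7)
The plan is to derive formulas~\eqref{equation:applying-Eij-to-metapolynomials} and~\eqref{equation:applying-Eij-to-metavars} in three steps: transfer the $\GL_k$-action on polynomials to metapolynomials by the contragredient construction, observe that it acts by algebra automorphisms, and linearize at the identity. First I would define, for $g \in \GL_k$, the action $(g \cdot \Delta)(f) := \Delta(g^{-1} \cdot f)$. This is well-defined because $f \mapsto g^{-1}\cdot f$ is a linear automorphism of $\bbC[x_1,\ldots,x_k]_d$, so precomposition by it preserves both the polynomial structure and the degree grading on $\bbC[\bbC[x_1,\ldots,x_k]_d]_\delta$. The induced $\gl_k$-action is then $X.\Delta = \frac{d}{dt}\big|_{t=0}\exp(tX)\cdot \Delta$, which is, by construction, the linearization claimed in the statement.

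Next I would show the action is by derivations. For every $g \in \GL_k$ and metapolynomials $\Delta_1, \Delta_2$, one has $(g\cdot(\Delta_1\Delta_2))(f) = (\Delta_1\Delta_2)(g^{-1}\cdot f) = \Delta_1(g^{-1}\cdot f)\,\Delta_2(g^{-1}\cdot f) = \big((g\cdot \Delta_1)(g\cdot\Delta_2)\big)(f)$, so $g$ acts by an algebra automorphism on the polynomial ring $\bbC[c_\mu : |\mu|=d]$. Differentiating this identity in $g$ at the identity yields Leibniz's rule $E_{i,j}.(\Delta_1\Delta_2) = (E_{i,j}.\Delta_1)\,\Delta_2 + \Delta_1\, (E_{i,j}.\Delta_2)$, so $E_{i,j}$ acts as a derivation. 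Since a derivation on a polynomial ring is uniquely determined by its values on the generators (which are the metavariables $c_\mu$), this immediately gives formula~\eqref{equation:applying-Eij-to-metapolynomials}, and iterating Leibniz yields the metamonomial formula.

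It remains to establish~\eqref{equation:applying-Eij-to-metavars}. Since each $c_\mu$ is a linear functional on $\bbC[x_1,\ldots,x_k]_d$, its transformation under $\gl_k$ is dictated by the standard contragredient action on the dual space, namely $(X.\phi)(v) = -\phi(X.v)$, the sign being precisely what is needed so that the Lie bracket is preserved rather than reversed. Using \Cref{claim:shifted partials}, I would compute $E_{i,j}.x^\nu = -x_j \tfrac{\partial}{\partial x_i} x^\nu = -\nu_i\, x^{\nu - e_i + e_j}$ (understood as $0$ when $\nu_i = 0$). Dualizing,
\[
(E_{i,j}.c_\mu)(x^\nu) \;=\; -c_\mu(E_{i,j}.x^\nu) \;=\; \nu_i\, \delta_{\nu - e_i + e_j,\, \mu}.
\]
The delta forces $\nu = \mu + e_i - e_j$ whenever the value is nonzero; for $i \neq j$ this gives $\nu_i = \mu_i + 1$, hence $E_{i,j}.c_\mu = (\mu_i+1)\,c_{\mu+e_i-e_j}$, and for $i=j$ it collapses to $E_{i,i}.c_\mu = \mu_i\, c_\mu$, exactly as in~\eqref{equation:applying-Eij-to-metavars}.

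The step requiring most care is the bookkeeping of signs and index shifts: the shift on $c_\mu$ runs opposite to that on $x^\nu$ because of the minus sign in the contragredient action, so it is tempting to swap $e_i$ with $e_j$. Compatibility with the Lie bracket $[E_{i,j}, E_{p,q}] = \delta_{jp}E_{i,q} - \delta_{qi}E_{p,j}$ is automatic from having linearized a bona fide group action, but a direct verification from~\eqref{equation:applying-Eij-to-metavars} is also a straightforward (and sign-sensitive) exercise.
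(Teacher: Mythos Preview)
Your proof is correct and follows essentially the same approach as the paper: both define the group action on metapolynomials as the double pullback $(g\cdot\Delta)(f)=\Delta(g^{-1}\cdot f)$, linearize at the identity (you via $\exp(tX)$, the paper via a general curve $\gamma$), obtain the derivation formula, and then compute $E_{i,j}.c_\mu$ through the contragredient identity $(X.c_\mu)(f)=-c_\mu(X.f)$ combined with \Cref{claim:shifted partials}. Your organization around ``group acts by algebra automorphisms $\Rightarrow$ Lie algebra acts by derivations'' is a slightly cleaner way of packaging the same chain-rule computation the paper writes out explicitly.
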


The following remark illustrates the correspondence between representations of complex algebraic Lie groups and representations of the corresponding Lie algebras. As mentioned before, it is not essential for the rest of the paper, but the proof of \Cref{claim:shifted partials} and \Cref{claim:metapolynomials} are built on this correspondence.
\begin{remark}\label{rmk: liegp to lie alg}
    A complex algebraic Lie group is a group with the structure of a complex algebraic variety, \cite[Ch. 7]{FulHar:RepTh}. One can show that $\GL_k$ is such a group. Let $V$ be a finite-dimensional vector space. A representation of a complex algebraic Lie group $G$ on $V$ is a group homomorphism $\rho : G \to \GL(V)$ which is a regular map, in the sense that the entries of $\rho(g) \in \GL(V)$ regarded as a matrix in a fixed basis, are polynomial functions on $G$.
    
   As a vector space, the Lie algebra of $G$ is, by definition, the tangent space of $G$ at its identity element $\frakg = T_{\id} G$. The bracket operation is defined via the \emph{adjoint representation} of $G$ on $\frakg$, see \cite[Ch. 8]{FulHar:RepTh} In the case of $\GL_k$ one has $\frakgl_k \simeq T_{\id_k} \GL_k$ is the space of $k \times k$ matrices, and the bracket is the standard commutator.
    
   Every representation $\rho : G \to \GL(V)$ of $G$ defines a representation $\frakg$ via its differential at the identity: $\diff \rho : T_{\id} G \to T_{\id} \GL(V)$; this is a linear map and defines a Lie algebra homomorphism from $T_{\id} G = \frakg$ to $T_{\id} \GL(V) = \End(V) = \frakgl(V)$. 

    Explicitly, the value of $\diff \rho$ at $X \in \frakg$ is the element $\diff \rho(X) \in \frakgl(V) \simeq \End(V)$ defined by  
\begin{equation}\label{equation:lie-algebra-rep-using-curves}
    \big(\diff \rho (X)\big) v 
    = \frac{\diff}{\diff s}\Big|_{s = 0} \Big( \rho\big(\gamma(s)\big) v \Big) \quad \text{ for every $v \in V$};
\end{equation}
here $\gamma : \bbC \to G$ is any smooth curve such that $\gamma(0) = \id_G$ and $ \gamma'(0) = X$. It is easy to see that $\diff \rho(X)$ does not depend on the choice of $\gamma$. 
\end{remark}

\begin{proof}[Proof of \Cref{claim:shifted partials} and \Cref{claim:metapolynomials}]
\label{proof:lie algebra representations}
We derive the Lie algebra representations from their respective group representations, as in \Cref{rmk: liegp to lie alg}.

Every representation $\rho: G \to \GL(V)$ defines a \emph{pullback} representation on the space of polynomials of degree $p$ on $V$, $\bbC[V]_p$. For $g \in G$ and $f \in \bbC[V]_p$, this is given by $g \cdot f = f \circ \rho (g^{-1})$ where $\rho(g^{-1}) : V \to V$ is regarded as a linear map and $\circ$ denotes the composition of functions. 

The standard representation of $\GL_k$ on $\bbC^k$ induces in this way a representation on the space of polynomials $\bbC[ x_1 \vvirg x_k]_d$. Similarly, the action of $\GL( \bbC[ x_1 \vvirg x_k]_d)$ induces a representation on the space of metapolynomials $ \bbC[ \bbC[ x_1 \vvirg x_k]_d]_\delta$; the action of $\GL_k$ on metapolynomials is the composition of the two representations
\[
\GL_k \to \GL( \bbC[ x_1 \vvirg x_k]_d ) \to \GL( \bbC[ \bbC[ x_1 \vvirg x_k]_d]_\delta ).
\]
Now, if $\rho : G \to \GL(V)$ induces the Lie algebra representation $\diff \rho: \frakg \to \GL(V)$, the pullback representation of $G$ on $\bbC[V]_p$ induces the Lie algebra representation described as follows. For $X \in \frakg$, let $\gamma(s)$ be a smooth curve in $G$ such that $\gamma(0)= \id_G$ and $\gamma'(0) = X$. The chain rule of derivatives implies $\frac{\diff}{\diff s}\big|_{s=0} \gamma(s)^{-1} = -X$. We show this in the case where $G$ is a group of matrices: in this case, one can \emph{multiply} elements of $G$ by elements of $\frakg$ and obtain
\[
\frac{\diff}{\diff s}\bigg|_{s=0} \gamma(s)^{-1} = - \bigl[\frac{\diff}{\diff s}\bigg|_{s=0}\gamma(s) \bigr]
 \cdot \bigl[(\gamma(s)^{-2})\big|_{s=0} \bigr] = -\gamma'(0) \cdot \gamma(0)^{-2}= - X \cdot \id^{-2} = -X.
\]
In general, the multiplication of matrices should be replaced by the action of $G$ on $\frakg$.

Let $V^*$ be the dual space of $V$ with basis $z_1 \vvirg z_r$; in particular $\bbC[V]_p$ is the space of polynomials of degree $p$ in $z_1 \vvirg z_r$. By \cref{equation:lie-algebra-rep-using-curves}, for every $F \in \bbC[V]_p$, $X.F$ is the polynomial function described as follows, on every $\zeta \in V$: 
\begin{align*}
X.F (\zeta)  = &\frac{\diff}{\diff s}\Big|_{0} (\gamma(s) \cdot F )(\zeta) = \\
&\frac{\diff}{\diff s}\Big|_{0} ( F \circ \gamma(s)^{-1} )(\zeta) = \qquad (\text{using chain rule})\\
&\sum_{\ell = 1}^r \frac{\partial}{\partial z_\ell} F ( \zeta) \cdot (z_\ell ( \frac{\diff}{\diff s}\Big|_{0} \gamma(s)^{-1} \zeta )) = \\
&\sum_{\ell = 1}^r \frac{\partial}{\partial z_\ell} F (\zeta) \cdot (z_\ell (-X.\zeta)) = - \frac{\partial}{\partial z_i} F(\zeta ) \cdot z_\ell( X.\zeta).
\end{align*}
Specialize the calculation above to the case of polynomials and metapolynomials to obtain \cref{equation:shifted partials} and \cref{equation:applying-Eij-to-metapolynomials}. It remains to show \cref{equation:applying-Eij-to-metavars}, which follows from a direct calculation: regarding $c_\mu$ as a function of a polynomial $f$, we have 
\[
E_{ij}.c_{\mu} (f)  = - c_\mu ( E_{ij}.f ) = c_\mu (x_j \frac{\partial}{\partial x_i}f) =         \begin{cases}
            (\mu_i+1)\, c_{\mu + e_i - e_j}(f) &\text{ if } i \neq j, \\
            \mu_i\, c_{\mu}(f) &\text{ otherwise}.
        \end{cases}
        \qedhere
\]
\end{proof}

Let $V$ be a representation of a Lie algebra $\frakg$. A subspace $V' \subseteq V$ that is itself a $\frakg$-representation is called a subrepresentation. A representation $V$ is irreducible if it contains no subrepresentations except the zero space of $V$ and $V$ itself. A map $\phi : V \to W$ between two representations is called equivariant if it commutes with the action of $\frakg$. Schur's Lemma characterizes equivariant maps between irreducible representations:
\begin{lemma}[Schur's Lemma, {\cite[Lem.~1.7]{FulHar:RepTh}}]\label{lem:schur}
Let $V,W$ be irreducible representations for a Lie algebra $\frakg$ and let $\phi: V \to W$ be an equivariant map. Then either $\phi = 0$ or $\phi$ is an isomorphism. Moreover, if $V=W$, then $\phi = c \cdot \id_V$ for some $c \in \bbC$.
\end{lemma}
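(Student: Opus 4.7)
The plan is to exploit the fact that for any equivariant linear map $\phi \colon V \to W$, both $\ker(\phi)$ and $\operatorname{im}(\phi)$ are subrepresentations of $V$ and $W$ respectively. Indeed, if $v \in \ker(\phi)$ and $X \in \frakg$, then $\phi(X.v) = X.\phi(v) = X.0 = 0$, so $X.v \in \ker(\phi)$; a dual computation, namely $X.\phi(v) = \phi(X.v)$, shows $\operatorname{im}(\phi)$ is closed under the $\frakg$-action. Applying the irreducibility of $V$ to $\ker(\phi)$ and of $W$ to $\operatorname{im}(\phi)$, each of these subrepresentations must be either zero or the full space. If $\phi \neq 0$, then $\ker(\phi) \neq V$ and $\operatorname{im}(\phi) \neq 0$, forcing $\ker(\phi) = 0$ and $\operatorname{im}(\phi) = W$, so that $\phi$ is an isomorphism.

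For the moreover statement, I would assume $V = W$ and let $c \in \bbC$ be any eigenvalue of $\phi$, which exists because $V$ is finite-dimensional over the algebraically closed field $\bbC$. The map $\phi - c \cdot \id_V \colon V \to V$ is again equivariant, since $\id_V$ is equivariant and equivariant maps are closed under linear combinations. By construction, its kernel contains the nonzero $c$-eigenspace of $\phi$, hence is not the zero subspace. Applying the first part of the lemma to $\phi - c \cdot \id_V$ then rules out the isomorphism case and forces $\phi - c \cdot \id_V = 0$, yielding $\phi = c \cdot \id_V$.

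There is no genuine obstacle in this argument; the only implicit assumption is finite-dimensionality over $\bbC$, which is needed to guarantee the existence of an eigenvalue in the second part. This is automatic in the setting of this paper, where all representations arise as finite-dimensional polynomial representations of $\GL_k$ on spaces of polynomials or metapolynomials of bounded degree, so the statement applies without restriction to all representations considered later.
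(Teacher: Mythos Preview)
Your proof is correct and is exactly the standard argument for Schur's Lemma. The paper does not supply its own proof of this statement; it simply cites \cite[Lem.~1.7]{FulHar:RepTh}, where the same kernel/image argument and the eigenvalue trick over $\bbC$ appear. Your remark about the implicit finite-dimensionality hypothesis is apt and consistent with the paper's setting.
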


A complex algebraic Lie group is called \emph{reductive} if every representation is completely reducible, namely it splits into direct sum of irreducible subrepresentations; the general linear group $\GL_k$ is reductive, see e.g. \cite[Ch.X]{humphreys2012linear}. In the following $\frakg$ is the Lie algebra of a reductive group, and explicitly we are interested in the case where $G = \GL_k$ and $\frakg = \frakgl_k$.

\subsection{Cartan subalgebras and weights}
A subalgebra $\frakt \subseteq \frakg$ is called abelian if $[\frakt , \frakt] = 0$. A Cartan subalgebra $\frakh$ of $\frakg$ is a maximal abelian subalgebra with the property that if $[X,\frakh] \subseteq \frakh$ then $X \in \frakh$. The subalgebra of diagonal matrices in $\frakgl_k$ is a Cartan subalgebra of $\frakgl_k$. It turns out that if $G$ is reductive, then all Cartan subalgebras of $\frakg$ have the same dimension; in fact they are all equivalent under a natural action of $G$ on $\frakg$ \cite[Sec. D3]{FulHar:RepTh}. In the case of $\GL_k$, the action on $\frakgl_k$ is given by conjugation: $g \cdot X = g X g^{-1}$ for $g \in \GL_k$ and $X \in \frakgl_k$. In this case, one can show that any Cartan subalgebra is equivalent to the subspace of diagonal matrices. 

If $\frakg$ is the Lie algebra of a complex reductive Lie group $G$, then Cartan subalgebras of $\frakg$ correspond to maximal abelian subgroups in $G$. For a fixed Cartan subalgebra $\frakh$, let $\bbT$ be the corresponding abelian subgroup of $G$. If $\frakh$ is the subset of diagonal matrices in $\frakgl_k$, then $\frakh$ is the Lie algebra of the abelian group $\bbT$ of diagonal matrices with nonzero diagonal entries. 
Irreducible $\bbT$-representations are $1$-dimensional and they are in one-to-one correspondence with a lattice $\Lambda$ in the space $\frakh^*$, called the \emph{weight lattice} of $\frakg$; the elements of $\Lambda$ are called \emph{weights}. When $\bbT$ is the subgroup of $\GL_k$ of diagonal matrices, then this definition coincides with the one used in \Cref{sec: intro} for metapolynomials; see also \Cref{example:metapolynomials-3-2-3-weights}. The construction of the weight lattice is not straightforward for arbitrary Lie algebras, but it is very explicit in the case where $\frakh$ is the subalgebra of diagonal elements in $\gl_k$: 
\begin{claim}
    The weight lattice of $\frakgl_k$ with respect to the Cartan subalgebra $\frakh$ of diagonal elements is the abelian group of integer linear combinations of the diagonal elements, that is, the group of linear functions 
    \begin{align*}
    \alpha: \frakh & \to \bbC \\
 \left( \begin{smallmatrix} h_1 & & \\ & \ddots & \\ & & h_k \end{smallmatrix}\right) &\mapsto \alpha_1 h_1 + \cdots + \alpha_k h_k.
    \end{align*}
    with $\alpha_1 \vvirg \alpha_k \in \bbZ$. In the following, we usually identify the weight $\alpha$ with the $k$-tuple of its integer coefficients $\alpha = (\alpha_1 \vvirg \alpha_k)$.
\end{claim}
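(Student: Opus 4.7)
The plan is to identify the weight lattice by first describing it explicitly on the group side (characters of the torus $\bbT$) and then transferring the description to $\frakh^*$ via the differential at the identity, as outlined in \Cref{rmk: liegp to lie alg}.

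First I would fix coordinates: the subgroup $\bbT \subseteq \GL_k$ corresponding to $\frakh$ is the group of invertible diagonal matrices, which I parametrize as $\bbT \cong (\bbC^*)^k$ via $(t_1 \vvirg t_k) \mapsto \diag(t_1 \vvirg t_k)$. An irreducible algebraic $\bbT$-representation is one-dimensional (since $\bbT$ is abelian and $\bbC$ is algebraically closed), hence determined by a regular group homomorphism $\chi : \bbT \to \bbC^*$, i.e., an algebraic character. Each weight in $\Lambda$ arises, by definition of the construction in the preceding paragraph, as $\diff\chi|_{\id}$ for exactly one such $\chi$.

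Next I would invoke the classical classification of algebraic characters of the algebraic torus: every regular homomorphism $\chi : (\bbC^*)^k \to \bbC^*$ has the form $\chi(t_1 \vvirg t_k) = t_1^{\alpha_1}\cdots t_k^{\alpha_k}$ for a unique $\alpha = (\alpha_1 \vvirg \alpha_k) \in \bbZ^k$. I would justify this briefly by recalling that the coordinate ring of $(\bbC^*)^k$ is the Laurent polynomial algebra $\bbC[t_1^{\pm 1} \vvirg t_k^{\pm 1}]$, and a group-like element in this Hopf algebra (equivalently, a regular map into $\bbC^*$ compatible with multiplication) must be a Laurent monomial.

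Finally I would compute the differential. For a given $\alpha \in \bbZ^k$, pick the curve $\gamma(s) = \diag(e^{sh_1} \vvirg e^{sh_k}) \in \bbT$ through the identity with tangent $H = \diag(h_1 \vvirg h_k) \in \frakh$. Then
\[
\chi_\alpha(\gamma(s)) = \exp\!\big(s(\alpha_1 h_1 + \cdots + \alpha_k h_k)\big),
\]
so using \cref{equation:lie-algebra-rep-using-curves} (applied to the one-dimensional representation $\chi_\alpha$) I get
\[
\diff \chi_\alpha(H) = \tfrac{\diff}{\diff s}\big|_{s=0} \chi_\alpha(\gamma(s)) = \alpha_1 h_1 + \cdots + \alpha_k h_k.
\]
This shows that the map $\bbZ^k \to \Lambda \subseteq \frakh^*$ sending $\alpha$ to this linear form is well defined and surjective onto $\Lambda$. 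Injectivity is immediate since distinct $\alpha$'s give distinct linear functionals on $\frakh$. The additive group structure is clearly preserved, so this is an isomorphism of abelian groups, yielding the claimed description of $\Lambda$.

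I expect no serious obstacle: the only nontrivial ingredient is the classification of characters of $(\bbC^*)^k$, which is a standard fact about diagonalizable algebraic groups and can simply be cited or proved in one line via the structure of the coordinate ring of the torus. Everything else is a direct unwinding of definitions.
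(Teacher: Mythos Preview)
Your proposal is correct and follows essentially the same approach as the paper: classify the algebraic characters of the diagonal torus $\bbT \cong (\bbC^*)^k$ as Laurent monomials $t \mapsto t_1^{\alpha_1}\cdots t_k^{\alpha_k}$ with $\alpha \in \bbZ^k$, then differentiate at the identity to obtain the linear functional $H \mapsto \alpha_1 h_1 + \cdots + \alpha_k h_k$ on $\frakh$. The paper's proof simply asserts the character classification (``it is not hard to prove that all irreducible representations of $\bbT$ arise in this way'') and states the differentiated formula directly, whereas you supply a brief justification via the coordinate ring and an explicit curve computation; these are elaborations rather than a different argument.
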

\begin{proof}
 Let $\bbT$ be the subgroup of diagonal matrices in $\GL_k$. Let $(\alpha_1 \vvirg \alpha_k) \in \bbZ^k$ be an integer vector. 
 \begin{align*}
 \rho: \bbT &\to \bbC^\times \simeq \GL_1 \\
 \left( \begin{smallmatrix} t_1 & & \\ & \ddots & \\ & & t_k \end{smallmatrix}\right) &\mapsto t_1^{\alpha_1} \cdots t_k^{\alpha_k}
 \end{align*}
 is a group homomorphism, hence it defines a $1$-dimensional (irreducible) representation of $\bbT$. It is not hard to prove that all irreducible representations of $\bbT$ arise in this way.

 The induced $\frakh$-representation is 
 \begin{align*}
 \diff \rho : \frakh & \to \bbC \simeq \frakgl_1 \\
 \left( \begin{smallmatrix} h_1 & & \\ & \ddots & \\ & & h_k \end{smallmatrix}\right) &\mapsto \alpha_1 h_1 + \cdots + \alpha_k h_k.
\end{align*}
In other words, irreducible $\frakh$-representations induced from $\bbT$-representations are in one-to-one correspondence with functions $\alpha \in \frakh^*$ mapping a diagonal element $H \in \frakh$ to an integer linear combination of its diagonal entries. \end{proof}

Every $\frakg$-representation $V$ restricts to an $\frakh$-representation, hence it decomposes into the direct sum of irreducible $1$-dimensional $\frakh$-representations. Since they are uniquely determined by their weights, we have $V = \bigoplus_{\alpha \in \Lambda} V_\alpha^{\oplus m_\alpha}$; in particular $m_\alpha \neq 0$ only for finitely many $\alpha \in \Lambda$. If $m_\alpha \neq 0$, then $\alpha$ is called a weight of $V$, $m_\alpha$ is called \emph{multiplicity} and the direct summand $V_\alpha^{\oplus m_\alpha}$ is called the \emph{weight subspace} of $V$ of weight~$\alpha$. An element $X \in \frakh$ acts simultaneously on the $m_\alpha$ copies of $V_\alpha$ via scalar multiplication by $\alpha(X) \in \bbC$.  The elements of a weight subspace are called \emph{weight vectors}; a basis of $V$ consisting of weight vectors is called a \emph{weight basis}.

\begin{example}\label{example:polynomials-2-3-weights}
Consider the $\gl_3$-representation $V = \bbC[x_1,x_2,x_3]_2$ with the action described in \Cref{claim:shifted partials}. Let $\frakh \subseteq \frakgl_3$ be the subalgebra of diagonal matrices. 
Consider the six monomials of $V$, which define a basis:
\[
\begin{array}{lll}
x_1^2,\quad & x_2^2,\quad & x_3^2, \\
x_2x_3,\quad & x_1x_3,\quad & x_1x_2. 
\end{array}
\]
One can verify that they form a weight basis of $V$. For instance, given $H =\left( \begin{smallmatrix} t_1  & & \\ & t_2 & \\ & & t_3 \end{smallmatrix}\right) = t_1 E_{11} + t_2 E_{22} + t_3 E_{33} \in \frakh$, we have 
\[
H. (x_1^2) = \textstyle \sum_i t_i E_{ii}.(x_1^2) = - \sum_i t_i \sum_j x_j \displaystyle \frac{\partial}{\partial x_j} (x_1^2) = - (2t_1) x_1^2
\]
showing that $x_1^2$ is a weight vector of weight $(-2,0,0)$. More generally, we have 
{\small
\[
\begin{array}{rlrlrl}
H.(x_1^2)   =\hspace{-0.5em}& \hspace{-0.5em} -2t_1\, x_1^2,  & H.(x_2^2)   =\hspace{-0.5em}&  \hspace{-0.5em} -2t_2\, x_2^2,  & H.(x_3^2) =\hspace{-0.5em}& \hspace{-0.5em} -2t_3\, x_3^2, \\
H. (x_2x_3) =\hspace{-0.5em}& \hspace{-0.5em} -(t_2+t_3)\, x_2x_3, & H. (x_1x_3) =\hspace{-0.5em}& \hspace{-0.5em} -(t_1+t_3)\, x_1x_3, & H.(x_1x_2) =\hspace{-0.5em}&\hspace{-0.5em} -(t_1+t_2) \, x_1x_2. 
\end{array}
\]
}
This shows that $V$ decomposes into the sum of six weight spaces, each spanned by one of the weight vectors above. The corresponding weights are the six elements of $\frakh^*$ mapping $H$ to the eigenvalues of $H$ on each weight vector. As $k$-tuple of integers, they are 
\[
\begin{array}{lll}
(-2,0,0),   \quad & (0,-2,0),    \quad&  (0,0,-2), \\
(0,-1,-1),\quad & (-1,0,-1), \quad& (-1,-1,0). 
\end{array}
\]
More generally, in a space of polynomials $V = \bbC[x_1 \vvirg x_k]_d$, the monomials form a weight basis. The weight of a monomial $x^\mu = x_1^{\mu_1} \cdots x_k^{\mu_k}$, regarded as a vector of integers, is $-\mu$.
\end{example}
\begin{example}
\label{example:metapolynomials-3-2-3-weights}
Consider the $\gl_2$-representation $V = \bbC[\bbC[x_1,x_2]_3]_2$ of metapolynomials of format $(2,3,2)$, with the action described in \Cref{claim:metapolynomials}. One can verify that the metamonomials form a weight basis. For instance, given $H =\left( \begin{smallmatrix} t_1  &  \\ & t_2 \end{smallmatrix}\right) = t_1 E_{11} + t_2 E_{22}  \in \frakh$, we have 
\[
H . (c_{30}^2) = \textstyle \sum_{i= 1}^2 t_i E_{ii}. (c_{30}^2) = \textstyle \sum_{i= 1}^2  t_i \sum_{\mu}E_{ii}.c_\mu \cdot \displaystyle \frac{\partial}{\partial c_{\mu}} (c_{30}^2) = 3 \cdot 2 \cdot t_1\cdot c_{30}^2 = (6t_1) c_{30}^2.
\]
showing that $c_{30}^2$ is a weight vector of weight $(6,0)$. Similarly, one can compute the weight of every metamonomial: 
{\small
\[
\begin{array}{rlrlrl}
H.(c_{21}^2) = \hspace{-0.5em}& \hspace{-0.5em} (4t_1+2t_2) c_{21}^2,& H.(c_{12}^2) = \hspace{-0.5em}&\hspace{-0.5em}(2t_1+4t_2) c_{12}^2 ,& H.(c_{03}^2) = \hspace{-0.5em}&\hspace{-0.5em}(6t_2) c_{03}^2, \\ 
H.(c_{30}c_{21}) = \hspace{-0.5em}& \hspace{-0.5em}(5t_1+t_2) c_{30}c_{21} ,&H.(c_{30}c_{12}) = \hspace{-0.5em}&\hspace{-0.5em} (4t_1+2t_2) c_{30}c_{12}, &H.(c_{30}c_{03}) = \hspace{-0.5em}& \hspace{-0.5em}(3t_1+3t_2) c_{30}c_{03}, \\
H.(c_{21}c_{12}) =\hspace{-0.5em}&\hspace{-0.5em} (3t_1+3t_2) c_{21}c_{12}  ,&H.(c_{21}c_{03}) = \hspace{-0.5em}& \hspace{-0.5em}(2t_1+4t_2) c_{21}c_{03} ,&H.(c_{12}c_{03}) = \hspace{-0.5em}&\hspace{-0.5em} (1t_1+5t_2) c_{21}c_{12} .
\end{array}
\]
}
Therefore $V$ has seven weight spaces of weight $(p,6-p)$ with $p =0 \vvirg 6$. The weight spaces of weight $(4,2),(3,3),(2,4)$ have dimension $2$, the others have dimension $1$.

More generally, the metamonomials give a weight basis for the space of metapolynomials of format $(\delta,d,k)$. A metamonomial $c_{\mu_1} \cdots c_{\mu_\delta}$ has weight $\mu_1 + \cdots + \mu_\delta$,
in accordance with what we defined in the introduction.
\end{example}

\subsection{Roots and highest weights}
An important representation is given by the action of the Lie algebra on itself, called the adjoint representation $\ad: \frakg \to \End(\frakg)$ and defined by $\ad(X) = [X,-]$. 
By definition, the weight space of weight $0$ is $\frakh$ itself; moreover, one can prove that all other weight spaces are $1$-dimensional \cite[Ch. 14]{FulHar:RepTh}. The nonzero weights of the adjoint representation are called the \emph{roots} of the Lie algebra $\frakg$ (with respect to $\frakh$); let $\Phi_\frakg \subseteq \Lambda$ denote the set of roots of $\frakg$. It turns out that $\eta \in \frakh^*$ is a root if and only if $-\eta$ is a root and $\pm \eta$ are the only integer multiples of $\eta$ that are roots. 

We compute the roots for $\gl_k$ with $\frakh$ the algebra of diagonal matrices. Let $H = \diag(t_1,\ldots,t_k) \in \frakh$. Then
\begin{align}
    \ad(H)E_{i,j} = H E_{i,j} - E_{i,j} H = t_i E_{i,j} - t_j E_{i,j} 
    = 
    (t_i - t_j) E_{i,j}.
\end{align}
So $\bbC E_{i,j}$ is the weight space corresponding to the weight $(e_i - e_j) \in \frakh^*$. Since the matrices $E_{i,j}$ span $\gl_k$, we obtain that the set of roots is $\Phi_{\gl_k} = \{e_i - e_j \mid i,j \in [k], i \neq j\}$. 

The roots of the Lie algebra \emph{shift} the weights of the Lie algebra representations: more precisely, let $V$ be a representation of $\frakg$, $v \in V$ a weight vector of weight $\lambda$ and $X \in \frakg_\alpha$ for some root $\alpha$; then $X.v$ is a weight vector of weight $\lambda + \alpha$. 

A choice of positive roots is a subset $\Phi_{\frakg}^+$ of $\Phi_{\frakg}$ with the following two properties: for every root $\alpha$, either $\alpha$ or $-\alpha$ belong to $\Phi_{\frakg}^+$; for every pair of roots $\alpha_1,\alpha_2$ such that $\alpha_1+\alpha_2$ is a root, if $\alpha_1,\alpha_2 \in \Phi_{\frakg}^+$ then $\alpha_1 + \alpha_2 \in \Phi_{\frakg}^+$. For $\frakgl_k$, one such choice is $\Phi_{\gl_k}^+ = \{e_i - e_j \mid i < j\}$.

All choices of positive roots are equivalent under the action of a finite group $\calW_\frakg$, called the Weyl group of $\frakg$. This group acts on $\frakh$ and hence on $\frakh^*$; the action of $\frakh^*$ sends the set of roots to itself. All possible sets of positive roots are obtained one from the other via the action of $\calW_\frakg$. The Weyl group of $\frakgl_k$ is the permutation group $\frakS_k$ on $k$ elements. If $\frakh$ is the Cartan subalgebra of diagonal matrices, then $\calW_\frakg$ acts by permuting the diagonal elements; the induced action on $\frakh^*$ permutes the weights $e_i$: for $\sigma \in \frakS_k$, $\sigma \cdot e_i$ is the weight $e_{\sigma^{-1}(i)}$.

Since for every root $\alpha$, either $\alpha$ or $-\alpha$ is positive, the weight decomposition of $\frakg$ gives $\frakg = \frakh \oplus \bigoplus_{\alpha \in \Phi_\frakg^+} (\frakg_{\alpha} \oplus \frakg_{-\alpha})$. In the case of $\frakgl_k$, this decomposition is $\frakgl_k = \frakh \oplus \bigoplus_{i < j} (\bbC E_{ij} \oplus \bbC E_{ji})$. A \emph{raising operator} (resp.\ \emph{lowering operator}) is an element $X \in \bigoplus_{\alpha \in \Phi_\frakg^+} \frakg_{\alpha} $ (resp.\ $X \in \bigoplus_{\alpha \in \Phi_\frakg^+} \frakg_{-\alpha} $). Let $\frakn = \bigoplus_{\alpha \in \Phi_{\frakg}^+}$ be the space of raising operators; the condition that a root $\alpha_1 + \alpha_2$ is positive whenever $\alpha_1,\alpha_2$ are positive make $\frakn$ into a Lie algebra.

For $\frakgl_k$, the space $\frakn = \linspan\{ E_{i,j} : i < j\}$ of raising operators is the space of strictly upper triangular matrices and the space of lowering operators is the space of strictly lower triangular matrices.

Let $V$ be a representation of $\frakg$. A \emph{highest weight vector} for $V$ is a weight vector $v \in V$ with the property that $\frakn.v = 0$. The corresponding weight is called a \emph{highest weight} for $V$. When $\frakg = \frakgl_k$ and $V$ is a space of metapolynomials, it is not hard to verify that this condition is equivalent to the characterization of highest weight metapolynomials in \Cref{sec: intro}; the characterization in terms of tableaux is explained in details in \Cref{sec:gelfand-tsetlin}. A fundamental result in representation theory is that if $V$ is irreducible, then $V$ has a unique highest weight and this highest weight uniquely determines $V$ up to isomorphism \cite[Prop.\ 14.13]{FulHar:RepTh}. In particular, this result classifies finite-dimensional representations of $\frakg$.

Only a subset of weights occur as highest weights of finite-dimensional irreducible representations. Such vectors are called \emph{dominant weights} and their characterization in general requires the introduction of a suitable inner product structure on $\frakh^*$, which is not straightforward.  Let $\Lambda^+$ denote the subset of the weight lattice consisting of dominant weights. The representation associated to the highest weight $\lambda$ is denoted $V_\lambda$ and it is the unique $\frakg$-representation having highest weight $\lambda$. In the case of $\frakgl_k$, the set of dominant weights consist of weights $\lambda = (\lambda_1 \vvirg \lambda_k)$ such that $\lambda_j \geq \lambda_{j+1}$. If $\lambda_i \geq 0$ for every $i$, we usually identify the weight $\lambda$ with a partition of length $k$.

\begin{example}
\label{example:irreps-of-polynomials-2-3-and-metapolynomials-3-2-3}
We determine the highest weights vectors in the representations $\bbC[x_1,x_2,x_3]_2$ and $\bbC[\bbC[x_1,x_2,x_3]_2]_3$. The space of raising operators of $\frakgl_3$ is $\frakn = \textnormal{span}\{E_{12}, E_{23}, E_{13}\}$ of upper triangular matrices. The highest weight vectors are weight vectors characterized by the condition $\frakn. v = 0$. Since $E_{13} = [E_{12}, E_{23}]$, the condition $\frakn. v= 0$ is equivalent to $E_{12}.v = E_{23}.v = 0$.  

For $\bbC[x_1,x_2,x_3]_2$, one can check the unique highest weight vector is $x_3^2$ and the corresponding highest weight is $(0,0,-2)$. Therefore $\bbC[x_1,x_2,x_3]_2$ is irreducible, and it is isomorphic to the $\frakgl_3$-representation $V_{(0,0,-2)}$. 

For $\bbC[\bbC[x,y,z]_2]_3$, the highest weight vectors are the metapolynomials
\begin{equation*}
\begin{gathered}
    c_{2,0,0}^3,
    \qquad\quad
    c_{2,0,0}c_{1,1,0}^2 - 4 c_{2,0,0}^2c_{0,2,0}
    \qquad\quad\text{and}
    \\
     4 c_{2,0,0}c_{0,2,0}c_{0,0,2} - c_{0,0,2}c_{1,1,0}^2 
    + 
    c_{1,0,1}c_{1,1,0}c_{0,1,1} - c_{0,2,0}c_{1,0,1}^2 - c_{2,0,0}c_{0,1,1}^2.
\end{gathered}
\end{equation*}
Their weights are $(6,0,0)$, $(4,2,0)$ and $(2,2,2)$. Therefore $\bbC[\bbC[x_1,x_2,x_3]_2]_3$ is a direct sum of three irreducible $\frakgl_3$-representations
\[
\bbC[\bbC[x_1,x_2,x_3]_2]_3 = V_{(6,0,0)} \ \oplus \ V_{(4,2,0)}\ \oplus \ V_{(2,2,2)}. \qedhere
\]
\end{example}

\subsection{Universal enveloping algebra}
The universal enveloping algebra of a Lie algebra $\frakg$ is an associative algebra associated to the Lie algebra $\frakg$. It is used to study representation theory of Lie algebras using representation theory of associative algebras. 

A representation of an associative (unital) algebra $\calA$ is a vector space $V$ equipped with a homomorphism of associative algebras $\rho \colon \calA \to \End(V)$. For $v \in V$ and $U \in \calA$, write $U.v = \rho(U)v$, as in the Lie algebra setting. Subrepresentations and equivariant maps are defined for representations of an associative algebra in the same way as for Lie algebra representations. For a subset $S \subseteq \calA$, the subalgebra generated by $S$ is the smallest subalgebra of $\calA$ containing $S$; explicitly, it can be realized as the set of (noncommutative) polynomials in elements of $S$. 

Let $\frakg$ be a Lie algebra. Consider a representation $\rho \colon \frakg \to \End(V)$.
Since $\End(V)$ is an associative algebra, we can consider the subalgebra of $\End(V)$ generated by the linear maps in the image~$\rho(\frakg)$.
Since $\rho$ is a representation, these generators satisfy the relations 
\begin{equation}\label{eq:lie-representation-relations}
\rho(X) \rho(Y) - \rho(Y) \rho(X) = \rho([X, Y]).
\end{equation}

The \emph{universal enveloping algebra} $\mathcal{U}(\frakg)$ is an associative algebra constructed in such a way that the relations~\eqref{eq:lie-representation-relations} are forced for every representation $\rho$ of $\mathcal{U}(\frakg)$.
It is a quotient of the \emph{tensor algebra} $\mathcal{T}(\frakg)$ over $\frakg$. The tensor algebra is spanned by formal products of elements of $\frakg$,
\begin{align}
\mathcal{T}(\frakg) = \bigoplus_{k = 0}^{\infty} \frakg^{\otimes k} = \bbC \oplus \frakg \oplus \frakg^{\otimes 2} \oplus \frakg^{\otimes 3} \oplus \cdots
\end{align}
with multiplication induced by the tensor products $\otimes \colon \frakg^{\otimes p} \times \frakg^{\otimes q} \to \frakg^{\otimes(p + q)}$. In particular, the tensor algebra is graded, with $\frakg^{\otimes p}$ being the component of degree $p$. Consider the ideal $\mathcal{I}(\frakg) \subset \mathcal{T}(\frakg)$ generated by the elements of the form $$X \otimes Y - Y \otimes X - [X, Y]$$ for $X, Y \in \frakg$.
The universal enveloping algebra of $\frakg$ is defined as $\mathcal{U}(\frakg) = \mathcal{T}(\frakg)/\mathcal{I}(\frakg)$.
The multiplication in $\mathcal{U}(\frakg)$ is denoted by the usual multiplication sign rather than $\otimes$. 
The inclusion of $\frakg$ into $\mathcal{T}(\frakg)$ (as a vector space) gives rise to a linear map $\frakg \to \mathcal{U}(\frakg)$, which is also injective.
We identify $\frakg$ with the image of this injection.
The elements of $\mathcal{T}(\frakg)$ and, therefore, $\mathcal{U}(\frakg)$ can be written as noncommutative polynomials in elements of $\frakg$. When $\frakg$ is abelian (for instance, in the case of a Cartan algebra), the ideal $\mathcal{I}(\frakg)$ is indeed homogeneous, since $[X,Y] = 0$; in this case $\calU(\frakg)$ is the symmetric algebra of $\frakg$, isomorphic to the polynomial ring on $\frakg^*$.

By construction of the ideal $\mathcal{I}(\frakg)$ we have $XY - YX = [X, Y]$ in $\mathcal{U}(\frakg)$ for every $X, Y \in \frakg$. This implies that every representation of $\mathcal{U}(\frakg)$ restricts to a Lie algebra representation of the Lie algebra $\frakg$, via the embedding $\frakg \subseteq \mathcal{U}(\frakg)$; conversely every representation $\rho \colon \frakg \to \End(V)$ of $\frakg$ uniquely extends to a representation of $\mathcal{U}(\frakg)$ by setting $\rho(X_1 X_2 \dots X_\ell) = \rho(X_1)\rho(X_2) \dots \rho(X_\ell)$ for all $X_i \in \frakg$.

Note that in general the ideal $\mathcal{I}(\frakg)$ is not homogeneous, so $\mathcal{U}(\frakg)$ does not inherit the grading from the tensor algebra $\mathcal{T}(\frakg)$. However, $\mathcal{U}(\frakg)$ has a \emph{filtration}, in the following sense: define $\mathcal{U}(\frakg)_{\leq \ell} = \{X + \mathcal{I}(\frakg) \mid X \in \bigoplus_{k = 0}^\ell \frakg^{\otimes k} \}$; then $\mathcal{U}(\frakg)_{\leq \ell} \subseteq \mathcal{U}(\frakg)_{\leq (\ell+1)}$ for every $\ell$, and if $X \in \mathcal{U}(\frakg)_{\leq p},Y \in \mathcal{U}(\frakg)_{\leq q}$ then $XY \in \mathcal{U}(\frakg)_{\leq (p+q)}$.

In particular, the elements of $\mathcal{U}(\frakg)_{\leq \ell}$ are represented by noncommutative polynomials of degree at most $\ell$ in elements of $\frakg$.
The \emph{length} of an element $X \in \mathcal{U}(\frakg)$ is the minimal $\ell$ such that $X \in \mathcal{U}(\frakg)_{\leq \ell}$.

\begin{theorem}[Poincar\'e--Birkhoff--Witt {\cite[Theorem 9.10]{Hall:RepTh}}]
\label{thm:pbw} 
Let $X_1, \dots, X_r$ be a basis of~$\frakg$. Then the set of ordered monomials $X_{i_1} X_{i_2} \cdots X_{i_d}$, $i_1 \leq i_2 \leq \cdots \leq i_d$ is a basis of $\mathcal{U}(\frakg)$.
Moreover, the set of ordered monomials of length at most $\ell$ is a basis of the subspace $\mathcal{U}(\frakg)_{\leq \ell}$ of elements with length at most $\ell$.
\end{theorem}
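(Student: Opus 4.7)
The plan is to split the statement into two parts: spanning by ordered monomials, and their linear independence. I would also verify that both arguments respect the filtration, which gives the "moreover" part for free.

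For spanning, I would argue by a lexicographic induction. Every element of $\mathcal{U}(\frakg)$ is a linear combination of monomials $X_{i_1} X_{i_2} \cdots X_{i_d}$ in the fixed basis, so it suffices to show each such monomial lies in the span of ordered monomials. The defining relation of $\mathcal{U}(\frakg)$ gives $X_j X_i = X_i X_j - [X_i, X_j]$, where $[X_i, X_j] \in \frakg$ is a linear combination of the basis elements $X_1, \dots, X_r$. Thus swapping an adjacent out-of-order pair in a monomial of length $d$ produces a monomial of length $d$ with strictly fewer inversions, plus a sum of monomials of length $d-1$. Ordering by the pair (length, number of inversions) and inducting, every monomial of length at most $\ell$ is rewritten as a linear combination of ordered monomials of length at most $\ell$. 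This simultaneously proves the spanning claim for $\mathcal{U}(\frakg)$ and for each filtered piece $\mathcal{U}(\frakg)_{\leq \ell}$.

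For linear independence, the classical approach is to exhibit a faithful model of the ordered monomials in an external vector space. Let $\mathcal{P} = \bbC[y_1, \dots, y_r]$ be the polynomial ring, with basis the commutative ordered monomials $y_{i_1} \cdots y_{i_d}$ for $i_1 \leq \dots \leq i_d$. I would construct linear operators $\rho(X_i) \colon \mathcal{P} \to \mathcal{P}$ with the property that $\rho(X_i)(y_{j_1} \cdots y_{j_d}) = y_i y_{j_1} \cdots y_{j_d}$ whenever $i \leq j_1$, and extend to all ordered monomials by a recursion that uses the bracket to push $X_i$ past $X_{j_1}$ when $i > j_1$, exactly mirroring the rewriting used in the spanning step. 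Once one checks that these operators satisfy $\rho(X_i) \rho(X_j) - \rho(X_j) \rho(X_i) = \rho([X_i, X_j])$, they extend to a representation of $\mathcal{U}(\frakg)$ on $\mathcal{P}$. Applying any putative linear dependence $\sum c_I X_{i_1} \cdots X_{i_d} = 0$ in $\mathcal{U}(\frakg)$ to the element $1 \in \mathcal{P}$ yields $\sum c_I y_{i_1} \cdots y_{i_d} = 0$ in $\mathcal{P}$, forcing all $c_I = 0$. Since the recursion preserves length, the same argument restricts to $\mathcal{U}(\frakg)_{\leq \ell}$.

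The hard part is verifying the commutation relation for $\rho$. Checking $\rho(X_i) \rho(X_j) - \rho(X_j) \rho(X_i) = \rho([X_i, X_j])$ on an ordered monomial $y_{k_1} \cdots y_{k_d}$ splits into cases depending on how $i$ and $j$ compare to $k_1$; the cases where both $i,j > k_1$ are the delicate ones, because one must push both indices past $y_{k_1}$ and the correction terms involve triple brackets. Collecting terms, one finds that the identity reduces precisely to the Jacobi identity $[X_i,[X_j,X_{k_1}]] + [X_j,[X_{k_1},X_i]] + [X_{k_1},[X_i,X_j]] = 0$, combined with an inductive hypothesis on the length $d$ of the monomial. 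This is the content of the explicit combinatorial calculation that the paper presumably carries out in full in \Cref{sec:pbw}, and which also makes the rewriting algorithmically explicit, matching the constructive use of \Cref{thm:pbw} in the projector construction.
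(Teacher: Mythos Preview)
Your spanning argument is correct and essentially coincides with the paper's: induction on length, swapping adjacent out-of-order pairs via the defining relation, and absorbing the bracket correction into the lower filtered piece. The paper phrases the induction on the length $m$ alone (rather than the pair (length, inversions)), but the mechanism is identical.

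Where your proposal diverges from the paper is in scope. The paper does \emph{not} prove linear independence in \Cref{sec:pbw}; immediately after stating \Cref{thm:pbw} it says ``In fact, we will only use that ordered monomials are a set of generators for the spaces $\mathcal{U}(\frakg)_{\leq \ell}$. This is proved explicitly in \Cref{theorem:pbw}.'' Accordingly, \Cref{sec:pbw} contains only the spanning direction, and the full basis statement is deferred to the cited reference. Your sketch of the representation on $\bbC[y_1,\dots,y_r]$ is the standard linear-independence argument and is fine as a proof plan, but your last paragraph misattributes that calculation to the paper: the ``explicit combinatorial calculation'' in \Cref{sec:pbw} is only the sorting-by-transpositions rewriting, not the Jacobi-identity verification for the operators $\rho(X_i)$.
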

In fact, we will only use that ordered monomials are a set of generators for the spaces $\mathcal{U}(\frakg)_{\leq \ell}$. This is proved explicitly in \Cref{theorem:pbw}.
The Poincar\'e--Birkhoff--Witt theorem (PBW theorem) shows that elements of $\mathcal{U}(\frakg)$ have unique representation as noncommutative polynomials in $X_1, \dots, X_r$ with ordered terms, and the length of the element is given by the longest monomial in this polynomial. 

\begin{example}
\label{example:universal-enveloping-algebra-gl2}
Consider $\frakg = \gl_2$ with a $\gl_2$-representation $V$. 
Then $F \coloneqq E_{2,1}, E \coloneqq E_{1,2}, H_i \coloneqq E_{i,i}$ gives an ordered basis $(F,E,H_1,H_2)$ of $\gl_2$.
Examples of elements of $\mathcal{U}(\gl_2)$ are $H_1EF$ and $1 - \tfrac12 FE + \tfrac{1}{12}F^2E^2$.
Note that these are formal products, not matrix multiplication (general Lie algebras do not have this additional algebra structure).
Taking $v \in V$, the second element acts as 
$(1 - \tfrac12 FE + \tfrac{1}{12}F^2E^2).v 
= v - \tfrac12 F.(E.v) + \tfrac1{12} F.(F.(E.(E.v))) \in V$.
The PBW theorem guarantees that these elements can be written as polynomials with ordered monomials. Indeed, we have 
\begin{align*}
    H_1EF 
    &= H_1 ( FE + [E,F])
    = H_1 FE + H_1 (H_1 - H_2)
    \\&= \cdots
    = FEH_1 + 2FE + H_1^2 - H_1H_2.
    \qedhere
\end{align*}
\end{example}

\subsection{Central characters and Casimir elements}\label{sec:U-center}

Let $\frakg$ be a Lie algebra. An element $X \in \calU(\frakg)$ of the universal enveloping algebra is \emph{central} if $XY = YX$ for every $Y \in \calU(\frakg)$ (or equivalently every $Y \in \frakg$). The center $\calZ(\frakg)$ of $\calU(\frakg)$ is the set of all central elements; $\calZ(\frakg)$ is a commutative associative subalgebra of $\calU(\frakg)$. 

A representation $\rho \colon \frakg \to \End(V)$ of $\frakg$ extends to a representation of $\calU(\frakg)$, and if $X \in \calZ(\frakg)$ is a central element, the map $\rho(X) : V \to V$ is $\calU(\frakg)$-equivariant. If $V$ is irreducible, then by Schur's lemma (\Cref{lem:schur}) $\rho(X)$ is a multiplication by some scalar $\chi_V(X)$. More precisely, $\rho$ induces a $1$-dimensional representation $\chi_V : \calZ(\frakg) \to \bbC$. This representation is called the \emph{central character} of $V$. 

It is a crucial fact that for Lie algebras $\frakg$ of reductive groups central characters separate non-isomorphic irreducible representations.
\begin{theorem}[\cite{racah-rendlincei}, see also~{\cite[\S VIII.8.5, Cor.2]{Bourbaki-Lie}}]
    \label{theorem:central-characters}
    Let $\frakg$ be a reductive Lie algebra.
    If $V$ and $W$ are irreducible representations of $\frakg$ and the central characters $\chi_V$ and $\chi_W$ coincide, then $V$ and $W$ are isomorphic.
\end{theorem}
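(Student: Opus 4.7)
The plan is to identify central characters with Weyl orbits of weights via the Harish-Chandra homomorphism, and then use the classification of irreducible finite-dimensional representations by highest weights. Concretely, by the theorem of the highest weight, which classifies finite-dimensional irreducible $\frakg$-representations by dominant weights $\lambda \in \Lambda^+$, it suffices to show that $\chi_{V_\lambda} = \chi_{V_\mu}$ forces $\lambda = \mu$ for $\lambda, \mu \in \Lambda^+$. I will do this by computing $\chi_{V_\lambda}(z)$ explicitly on highest weight vectors for every central $z$, and then invoking the Harish-Chandra description of the image.

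First I would evaluate the central character on a highest weight vector. Fix a dominant $\lambda$, an irreducible $V_\lambda$, and a highest weight vector $v_\lambda \in V_\lambda$, so that $\frakn.v_\lambda = 0$ and $H.v_\lambda = \lambda(H) v_\lambda$ for every $H \in \frakh$. Using the triangular decomposition $\frakg = \frakn^- \oplus \frakh \oplus \frakn$ and the PBW theorem (\Cref{thm:pbw}) applied with an ordered basis putting lowering operators first, Cartan elements in the middle, and raising operators last, every $z \in \calZ(\frakg)$ can be written as a $\bbC$-linear combination of ordered monomials $Y\,H\,X$ with $Y \in \calU(\frakn^-)$, $H \in \calU(\frakh)$, $X \in \calU(\frakn)$. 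Applying $z$ to $v_\lambda$ kills all terms with $X$ of positive degree, since $\frakn.v_\lambda = 0$; and the surviving terms with $Y$ of positive degree would produce weight vectors of weight strictly less than $\lambda$, which however must equal $\chi_{V_\lambda}(z) v_\lambda$, a weight $\lambda$ vector. Hence only the purely Cartan summand survives, giving a linear map $\gamma \colon \calZ(\frakg) \to \calU(\frakh) \cong \bbC[\frakh^*]$ characterised by
\begin{equation*}
z.v_\lambda = \gamma(z)(\lambda)\, v_\lambda, \qquad \text{so } \chi_{V_\lambda}(z) = \gamma(z)(\lambda).
\end{equation*}
This $\gamma$ is the Harish-Chandra homomorphism (possibly up to the standard $\rho$-shift, where $\rho$ is half the sum of positive roots; absorbing this shift is a routine bookkeeping step that I would carry out at the end).

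Next I would invoke the Harish-Chandra isomorphism theorem (which the paper labels \Cref{thm: HarishChandra}): after the $\rho$-shift, $\gamma$ is injective and its image is precisely the ring $\bbC[\frakh^*]^{\calW_\frakg}$ of polynomials on $\frakh^*$ invariant under the Weyl group $\calW_\frakg$. Granting this, if $\chi_{V_\lambda} = \chi_{V_\mu}$, then $p(\lambda + \rho) = p(\mu + \rho)$ for every $\calW_\frakg$-invariant polynomial $p$ on $\frakh^*$. Since $\calW_\frakg$-invariants separate $\calW_\frakg$-orbits on $\frakh^*$, this forces $\lambda + \rho$ and $\mu + \rho$ to lie in the same Weyl orbit, i.e.\ $\mu = w.\lambda := w(\lambda + \rho) - \rho$ for some $w \in \calW_\frakg$. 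A standard fact (easy via positivity of $\rho$ and dominance) is that the dot-action orbit of a dominant weight meets $\Lambda^+$ in at most one point, so $\lambda = \mu$. By the theorem of the highest weight, $V \cong V_\lambda = V_\mu \cong W$, completing the proof.

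The main obstacle is the Harish-Chandra isomorphism itself: that $\gamma$ really lands in $\calW_\frakg$-invariants and surjects onto them. I would invoke it as a black box from \Cref{thm: HarishChandra} in the paper, since proving it from scratch requires a separate analysis of the structure of $\calZ(\frakg)$ via the Verma-module machinery or Chevalley restriction. The rest of the argument, especially the PBW reduction and the orbit-separation step, is essentially formal once the Harish-Chandra description is in hand.
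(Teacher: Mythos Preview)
The paper does not give its own proof of this theorem; it is stated with external citations (Racah; Bourbaki) and used as a black box. So there is nothing in the paper to compare your argument against directly.

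Your argument is the standard textbook route and is correct. Evaluating a central element on a highest weight vector via the PBW triangular decomposition gives the Harish--Chandra map $\gamma$, and invoking \Cref{thm: HarishChandra} (also cited without proof in the paper) identifies its image with the Weyl invariants, after which the separation-of-orbits step and the uniqueness of the dominant representative in a dot-orbit finish the job. The only point worth flagging is that you are deducing \Cref{theorem:central-characters} from \Cref{thm: HarishChandra}, so this is not an independent proof but a reduction to the other cited result; that is perfectly acceptable for the paper's purposes, since both are imported facts. The remaining steps you label ``standard'' (Weyl invariants separate Weyl orbits for a finite group acting linearly; $\lambda+\rho$ is strictly dominant when $\lambda$ is dominant, so its orbit meets the dominant chamber only at itself) are indeed routine and hold in the reductive case once one notes that the Weyl group acts trivially on the center of~$\frakg$.
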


It follows that for reductive $\frakg$ the isotypic components of a $\frakg$-representation are also isotypic components of the induced $\mathcal{Z}(\frakg)$-representation.
More specifically, if a $\frakg$-representation $V$ decomposes as $\bigoplus_{\lambda \in \Lambda} V_\lambda^{\oplus m_\lambda}$ with pairwise non-isomorphic $V_\lambda$, then as a $\mathcal{Z}(\frakg)$-representation we have $V \cong \bigoplus_{\lambda \in \Lambda} \chi_\lambda^{\oplus m_\lambda \cdot \dim V_\lambda}$, where we denote by $\chi_\lambda$ the $1$-dimensional representation of $\mathcal{Z}(\frakg)$ corresponding to the central character of $V_\lambda$.

\begin{example}
    Consider the $\gl_3$ representation $V = \bbC[x_1,x_2,x_3]_d$ of homogeneous degree 3 polynomials in three variables (\Cref{claim:shifted partials}). Consider the element $I \coloneqq - (E_{1,1} + E_{2,2} + E_{3,3})$.
    Clearly, $I \in \mathcal{Z}(\gl_3)$. We determine the value of the central characters. Take $f \in V$.
    We find 
    \begin{align*}
        I.f 
        = x_1 \frac{\partial}{\partial x_1} f + x_2 \frac{\partial}{\partial x_2} f +  x_3 \frac{\partial}{\partial x_3} f
        = d f.
    \end{align*}
    So $\chi_{V}(I) = d$.
    As a simple applications of \Cref{theorem:central-characters}, this allows us to conclude $\bbC[x_1,x_2,x_3]_d$ and $\bbC[x_1,x_2,x_3]_{d'}$ contain no isomorphic irreducible subrepresentations unless $d = d'$.
\end{example}

When $\frakg$ is the Lie algebra of a reductive group, then $\mathcal{Z}(\frakg)$ can be described explicitly as the ring of invariants for the the action of the Weyl group on the Cartan subalgebra.
Note that since the Cartan subalgebra $\frakh$ is abelian, the universal enveloping algebra $\mathcal{U}(\frakh)$ is in fact commutative and can be identified with the polynomial algebra $\bbC[\frakh^*]$.

\begin{theorem}[Harish--Chandra isomorphism, {\cite[Ch.~23]{Hum:LieAlg};\cite{Berdjis-CasimirOperators}}]\label{thm: HarishChandra}
    Let $\frakg$ be a reductive algebra and let $\frakh$ be a Cartan subalgebra.
    The center $\mathcal{Z}(\frakg)$ of the universal enveloping algebra is isomorphic to the algebra of invariants $\mathcal{U}(\frakh)^{\calW_\frakg}$. Moreover, if $\dim \frakh = k$, there exist algebraically independent $C_1, \dots, C_k \in \mathcal{U}(\frakg)$ such that $\mathcal{Z}(\frakg) = \mathbb{C}[C_1, \dots, C_k]$. The lengths of these elements coincide with the degrees of the fundamental invariants of $\mathcal{U}(\frakh)^{\calW_\frakg}$.
\end{theorem}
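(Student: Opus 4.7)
The plan is to construct the Harish-Chandra homomorphism explicitly and deduce each of the three assertions from its properties. First, I would use the root space decomposition $\frakg = \frakn^{-} \oplus \frakh \oplus \frakn^{+}$, where $\frakn^{\pm} = \bigoplus_{\alpha \in \Phi_{\frakg}^{+}} \frakg_{\pm\alpha}$, combined with the Poincaré--Birkhoff--Witt theorem (\Cref{thm:pbw}) applied to an ordered basis that lists negative root vectors first, then a basis of $\frakh$, then positive root vectors. This yields the direct sum decomposition
\[
\mathcal{U}(\frakg) = \mathcal{U}(\frakh) \ \oplus\ \bigl( \frakn^{-}\mathcal{U}(\frakg) + \mathcal{U}(\frakg)\frakn^{+} \bigr),
\]
and I would define the Harish-Chandra projection $\pi \colon \mathcal{U}(\frakg) \to \mathcal{U}(\frakh) \cong \bbC[\frakh^{*}]$ as the projection onto the first summand.

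The key lemma is that the restriction $\pi|_{\mathcal{Z}(\frakg)}$ is an injective algebra homomorphism. Multiplicativity on central elements follows because if $Z, Z' \in \mathcal{Z}(\frakg)$, one may write $Z = \pi(Z) + R$ with $R \in \frakn^{-}\mathcal{U}(\frakg) + \mathcal{U}(\frakg)\frakn^{+}$; using centrality to move annihilating root vectors, I would verify that $RZ'$ projects to zero. Injectivity comes from a highest weight module argument: if $Z \in \mathcal{Z}(\frakg)$ and $v_\lambda$ is a highest weight vector of weight $\lambda$, then $Z$ acts on $v_\lambda$ by the scalar $\pi(Z)(\lambda)$ (because all terms in $\frakn^{-}\mathcal{U}(\frakg) + \mathcal{U}(\frakg)\frakn^{+}$ either annihilate $v_\lambda$ after commuting or land outside the weight space $\lambda$); since dominant weights are Zariski dense in $\frakh^{*}$, vanishing of $\pi(Z)$ would force $Z$ to act as zero on every irreducible module, hence $Z = 0$ by \Cref{theorem:central-characters}.

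Next comes the heart of the argument: the image must be identified with $\mathcal{U}(\frakh)^{\calW_\frakg}$. The natural projection $\pi$ does not land in the Weyl-invariants on the nose; instead one must precompose with the automorphism of $\mathcal{U}(\frakh)$ given by the $\rho$-shift $H \mapsto H - \rho(H)$, where $\rho$ is the half-sum of positive roots. I would use the fact that for a dominant weight $\lambda$, the Verma module of highest weight $\lambda$ and the one of highest weight $w \cdot \lambda := w(\lambda + \rho) - \rho$ share a common irreducible subquotient (via the BGG resolution / Harish-Chandra's linkage principle); hence the shifted projection takes the same value on $\lambda$ and on $w \cdot \lambda$ for every $w \in \calW_\frakg$. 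By the density of dominant weights this proves the image of the shifted Harish-Chandra homomorphism lies in $\mathcal{U}(\frakh)^{\calW_\frakg}$. Surjectivity onto the invariants is established by showing that the associated graded map on the symbol level is surjective: this uses that $\gr \mathcal{Z}(\frakg) = (\Sym \frakg)^{\frakg}$ and that the Chevalley restriction theorem identifies $(\Sym \frakg)^{\frakg}$ with $(\Sym \frakh)^{\calW_\frakg}$ via restriction. This is the step I expect to be the main obstacle, because it simultaneously requires the linkage principle (or equivalent module-theoretic input) to pin down the Weyl-invariance, and the Chevalley restriction theorem to pin down surjectivity.

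Finally, the polynomial generators and their lengths follow from the Chevalley--Shephard--Todd theorem: since $\calW_\frakg$ acts on $\frakh^{*}$ as a finite reflection group, the invariant ring $\mathcal{U}(\frakh)^{\calW_\frakg} = \bbC[\frakh^{*}]^{\calW_\frakg}$ is a polynomial algebra in $k = \dim \frakh$ algebraically independent homogeneous generators $P_1, \dots, P_k$ of uniquely determined degrees $d_1, \dots, d_k$. Choosing $C_i \in \mathcal{Z}(\frakg)$ with $\pi(C_i) = P_i$ (after the $\rho$-shift) gives algebraically independent generators of $\mathcal{Z}(\frakg)$. The length statement follows from filtered-graded considerations: the Harish-Chandra map is filtration-preserving and its associated graded is the Chevalley restriction, which is degree-preserving. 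Therefore one can always take $C_i$ to have length exactly $d_i$, and no central element of strictly smaller length projects to a polynomial of degree $d_i$, so the lengths of a minimal generating set coincide with the fundamental degrees of $\mathcal{U}(\frakh)^{\calW_\frakg}$.
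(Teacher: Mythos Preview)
The paper does not prove this theorem; it is stated with citations to Humphreys and Berdjis and then used as a black box (the text immediately following the theorem simply names the generators as Casimir elements and gives explicit formulas for $\frakgl_k$). So there is no ``paper's own proof'' to compare against.

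Your sketch is essentially the classical argument from the cited reference (Humphreys, Chapter~23): PBW triangular decomposition to define the Harish-Chandra projection, highest-weight evaluation for injectivity, the $\rho$-shift plus linkage to land in the Weyl invariants, Chevalley restriction for surjectivity at the graded level, and Chevalley--Shephard--Todd for the polynomial structure and degrees. That is the right route and your identification of the Weyl-invariance step as the crux is accurate. One small quibble: your appeal to \Cref{theorem:central-characters} for injectivity is slightly misdirected---that theorem says distinct irreducibles have distinct central characters, whereas what you need is that a central element acting by zero on every finite-dimensional irreducible is itself zero, which follows from the faithfulness of $\bigoplus_\lambda V_\lambda$ as a $\mathcal{U}(\frakg)$-module (or, in the filtered picture, directly from PBW). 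Otherwise the outline is sound.
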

The generators $C_1, \dots, C_k$ of the center $\mathcal{Z}(\frakg)$ are known as \emph{Casimir elements}.
These generators are not unique.

Consider the case of $\frakgl_k$, with $\frakh$ the Cartan subalgebra of diagonal matrices. Then $\calW_{\frakgl_k}$ is the symmetric group $\frakS_k$, and it acts on $\frakh$ by permuting the diagonal elements. Then $\calU(\frakh)$ is the polynomial ring in $k$ variables, and $\calU(\frakh)^{\calW_{\frakgl_k}}$ is the subring of polynomials invariant under permutation of variables: this is the classical ring of symmetric polynomials in $k$ variables. It is generated, for instance, by the elementary symmetric polynomials, which have degrees $1 \vvirg k$. These numbers give lengths of Casimir elements generating the center $\mathcal{Z}(\frakg)$.
There are several constructions of Casimir elements, one possible choice being
\begin{align}
\label{eq:casimir}
C_p \coloneqq \sum_{i_1 = 1}^k \dots \sum_{i_p = 1}^k E_{i_1,i_2} E_{i_2, i_3} \cdots E_{i_p,i_1}
\quad\text{with } 1 \leq p \leq k.
\end{align}
\begin{example}
    \label{example:central-character-and-casimir-elements}
    Let $\frakg = \gl_2$. Then the Casimir elements are given by
    \begin{align*}
        C_1 = E_{1,1} + E_{2,2}
        \qquad\text{and}\qquad 
        C_2 = E_{1,1}E_{1,1} + E_{1,2}E_{2,1} + E_{2,1}E_{1,2} +  E_{2,2}E_{2,2}.
    \end{align*}
  We will use \Cref{claim:metapolynomials} to evaluate the application of $C_1$ and $C_2$ to the metapolynomials
    \begin{align*}
         \Delta = c_{11}^{2} + 2 \, c_{02} c_{20}
         \qquad\text{and}\qquad 
         \Gamma = -c_{11}^{2} + 4 \, c_{02} c_{20}.
    \end{align*}
    We find that 
    \begin{align*}
        C_1.\Delta = E_{1,1}.\Delta + E_{2,2}.\Delta 
        = 2 c_{11}^{2} + 4 c_{02} c_{20} + 2 c_{11}^{2} + 4 c_{02} c_{20} = 4\Delta
    \end{align*}
    and similarly $C_1.\Gamma = 4\Gamma$ (in fact, one can show $C_1$ always scales a metapolynomial of format $(\delta,d,k)$ by $d\delta$). 
    We also compute
    \begin{align*}
        E_{2,1}.\Delta 
            &= 
            2\,c_{02}^2 c_{11} + 2\,c_{11}^2 c_{02}
            + 2\cdot 0c_{20} + 2\,c_{02}c_{11}
            = 6\,c_{02}^2 c_{11} 
            \\
        (E_{1,2}E_{2,1}).\Delta
            &= 6\, c_{11}c_{11} + 6 \cdot 2 \,c_{02}c_{20}
            = 6\Delta
    \end{align*}
    By symmetry of multiindices in the metapolynomial we know $(E_{2,1}E_{1,2}).\Delta = 6\Delta$ too. 
    An analogous computation shows $E_{2,1}.\Gamma = 0$ and $E_{1,2}.\Gamma = 0$.
    We find
    \begin{align*}
        C_2.\Delta 
        = 
        \big(2^2 + 6 + 6 + 2^2\big) \Delta = 20\Delta 
        \qquad\text{and}\qquad
        C_2.\Gamma = 
        \big(2^2 + 0 + 0 + 2^2\big) \Delta = 8\Delta. 
    \end{align*}
    We obtain that $\Delta$ and $\Gamma$ live in different isotypic components with respect to 
    $ \mathcal Z(\frakgl_2)$, and hence with respect to $\gl_2$. 
\end{example}

The eigenvalues of these Casimir elements on irreducible representations of $\frakgl_k$ are computed by Perelomov and Popov~\cite{perelomov-popov}: 
\begin{align}
    \label{eq:casimir formula}
    \chi_\lambda(C_p) = \Tr(A_\lambda^p E)
\end{align}
where $E$ is the $k \times k$ matrix consisting of all ones, and the entries of the $k\times k  $ matrix $A_\lambda = (a^{(\lambda)}_{ij})$ are given by
\begin{align}
a^{(\lambda)}_{ij} = \begin{cases}
    \lambda_i + k - i, & \text{if $i = j$,}\\
    0, & \text{if $i > j$,}\\
    -1, & \text{if $i < j$.}
\end{cases}
\end{align}

\begin{example}
    Observe first that $\Tr(A_\lambda^p E)$ is simply the sum of the entries of $A_\lambda^p$.
    Consider again \Cref{example:central-character-and-casimir-elements}. We compute the formula in \cref{eq:casimir formula}. 
    \begin{align*}
        A_{(4,0)} &= \left[\begin{smallmatrix}
            5 & -1  \\
            0 & 2  \\
        \end{smallmatrix}\right]
        \ \text{ and thus }\ 
        \chi_{(4,0)}(C_2) = \Tr \left(
        \left[\begin{smallmatrix}
            25 & -7  \\
            0 & 4  \\
        \end{smallmatrix}\right]
        \left[\begin{smallmatrix}
            1 & 1  \\
            1 & 1  \\
        \end{smallmatrix}\right]\right)
        = 20
        \\
        A_{(2,2)} &= \left[\begin{smallmatrix}
            3 & -1  \\
            0 & 2  \\
        \end{smallmatrix}\right]
        \ \text{ and thus }\ 
        \chi_{(2,2)}(C_2) = \Tr \left(
        \left[\begin{smallmatrix}
            9 & -5  \\
            0 & 4  \\
        \end{smallmatrix}\right]
        \left[\begin{smallmatrix}
            1 & 1  \\
            1 & 1  \\
        \end{smallmatrix}\right]\right)
        = 8.
    \intertext{
    Observe that this matches our results in \Cref{example:central-character-and-casimir-elements}.
    We give one more example for $\gl_3$.
    }
        A_{(6,0,0)} &= \left[\begin{smallmatrix}
            8 & -1 & -1 \\
            0 & 1 & -1 \\
            0 & 0 & 0
        \end{smallmatrix}\right]
        \ \text{ and thus }\ 
        \chi_{(6,0,0)}(C_2) = \Tr \left(
        \left[\begin{smallmatrix}
            64 & -9 & -7 \\
            0 & 1 & -1 \\
            0 & 0 & 0
        \end{smallmatrix}\right]
        \left[\begin{smallmatrix}
            1 & 1 & 1 \\
            1 & 1 & 1 \\
            1 & 1 & 1
        \end{smallmatrix}\right]\right)
        = 48. \qedhere
    \end{align*}
\end{example}

\subsection{Gelfand--Tsetlin theory}\label{sec:gelfand-tsetlin}

Gelfand--Tsetlin bases are used to explicitly construct irreducible representations of some classical Lie algebras.
The explicit bases first appeared in~\cite{gelfand-tsetlin} without much explanation.
The construction was expanded upon by \v{Z}elobenko~\cite{zhelobenko-spectral} and independently by Baird and Biedenharn~\cite{baird-biedenharn-lie-2}.
It can be explained in terms of restrictions of irreducible representations along a chain of Lie algebra inclusions.

Let $\mathfrak{a}$ be a Lie subalgebra of $\mathfrak{b}$.
Every representation $V$ of $\mathfrak{b}$ can be regarded as a representation of $\mathfrak{a}$ using this inclusion.
The resulting representation is called the \emph{restriction of $V$ to $\mathfrak{a}$} and is denoted as $V{\downarrow}^{\mathfrak{b}}_{\mathfrak{a}}$.

Let $\frakg_1 \subset \frakg_2 \subset \dots \subset \frakg_k$ be an inclusion chain of reductive Lie algebras.
We call this chain a \emph{Gelfand--Tsetlin chain} if $\frakg_1$ is abelian and every irreducible representation $V$ of $\frakg_\ell$ ($1 < \ell \leq k$), when restricted to $\frakg_{\ell - 1}$, is a multiplicity-free completely reducible representation, that is, 
\begin{align}
    V_\lambda{\downarrow}^{\frakg_\ell}_{\frakg_{\ell - 1}} \cong \bigoplus_{\mu \in R_\ell(\lambda)} V_\mu
\end{align}
for some set $R_\ell(\lambda)$ of isomorphism types of irreducible representations of $\frakg_{i - 1}$.
We write $\mu \to \lambda$ if $V_\mu$ appears in the decomposition for $V_\lambda$. 

Let $V_\lambda$ be a irreducible representation of $\frakg_k$.
By repeatedly restricting irreducible representations of $\frakg_i$ we obtain a decomposition of $V_\lambda$ as a representation of $\frakg_1$.
\begin{align}
    V_\lambda{\downarrow}^{\frakg_k}_{\frakg_1} = \bigoplus_{\lambda^{(1)} \to \lambda^{(2)} \to \dots \to \lambda^{(k)} = \lambda} V_{\lambda^{(1)}}
\end{align}
Since $\frakg_1$ is abelian, its irreducible representations are $1$-dimensional, so the decomposition above distinguishes a set of $1$-dimensional subspaces in $V_\lambda$ indexed by chains $\lambda^{(1)} \to \lambda^{(2)} \to \dots \to \lambda^{(k)}$ called \emph{Gelfand--Tsetlin patterns}.
We denote the subspace of $V_{\lambda^{(k)}}$ corresponding to a Gelfand--Tsetlin pattern $T = \lambda^{(1)} \to \lambda^{(2)} \to \dots \to \lambda^{(k)}$ by $V_T$. 

The inclusions $\frakg_\ell \subseteq \frakg_k$ give an inclusion of universal enveloping algebras $\calU(\frakg_\ell) \subseteq \calU(\frakg_k)$, and therefore a chain $\mathcal{U}(\frakg_1) \subset \mathcal{U}(\frakg_2) \subset \dots \subset \mathcal{U}(\frakg_k)$. The  \emph{Gelfand--Tsetlin algebra} $\GZ(\frakg_k)$ of $\frakg_k$ is the subalgebra of $\calU(\frakg)$ generated by all centers $\calZ(\frakg_\ell)$ with $\ell \leq k$. Since the center~$\mathcal{Z}(\frakg_\ell)$ of $\mathcal{U}(\frakg_\ell)$ commutes with all elements of $\mathcal{U}(\frakg_\ell)$, it commutes in particular with the centers~$\mathcal{Z}(\frakg_j)$ for $j < \ell$; as a result, $\GZ(\frakg_k)$ is a commutative subalgebra.

Consider a Gelfand--Tsetlin pattern $T = \lambda^{(1)} \to \lambda^{(2)} \to \dots \to \lambda^{(k)}$ and the corresponding $1$-dimensional subspace $V_T \subset V_{\lambda^{(k)}}$.
For every $\ell \leq k$ the subspace $V_T$ is contained in an irreducible representation of $\frakg_\ell$ isomorphic to $V_{\lambda^{(\ell)}}$, and the center $\mathcal{Z}(\frakg_\ell)$ acts on $V_T$ via the central character $\chi_{\lambda^{(\ell)}}$.
It follows that $V_T$ is a $1$-dimensional representation of $\GZ(\frakg_k)$ given by a \emph{Gelfand--Tsetlin character} $\chi_T \colon \GZ(\frakg_k) \to \mathbb{C}$ such that $X . v = \chi_T(X) v$ for $X \in \GZ(\frakg_k)$, $v \in V_T$.
More specifically, on $\mathcal{Z}(\frakg_\ell)$ we have $\chi_T(X) = \chi_{\lambda^{(\ell)}}(X)$.
\Cref{theorem:central-characters} implies that different Gelfand--Tsetlin patterns correspond to different Gelfand--Tsetlin characters.

By the previous discussion, every completely reducible representation of $\frakg_k$ is also completely reducible as a representation of $\GZ(\frakg_k)$.
That is, if $V$ is a completely reducible representation of $\frakg_k$ with isotypic decomposition $V = \bigoplus_{\lambda \in \Lambda} V_\lambda^{\oplus m_\lambda}$, then the action of $\GZ(\frakg_k)$ on $V$ splits the decomposition further into Gelfand--Tsetlin characters as
\begin{align}
V = \bigoplus\limits_{T \colon \lambda^{(1)} \to \dots \to \lambda^{(k)}, \lambda^{(k)} \in \Lambda} \chi_T^{\oplus m_{\smash{\lambda^{(k)}}}}.
\end{align}
We call the components of this decomposition \emph{GZ-isotypic} spaces of $V$.

The classical case ---used to construct a basis of irreducible representations of $\frakgl_n$--- corresponds to the chain of inclusions $\frakgl_1 \subset \frakgl_2 \subset \dots \subset \frakgl_k$ where $\frakgl_{k - 1}$ is included in $\frakgl_k$ as the subalgebra of $\frakgl_k$ consisting of matrices having the $k$-th row and the $k$-th column identically equal to $0$. The Gelfand--Tsetlin algebra in this case is the polynomial algebra $\GZ(\frakg_k) = \mathbb{C}[C_{\ell,p}]$ generated by the Casimir elements 
\begin{align}
    \label{eq:casimirGZ}
    C_{\ell,p} \coloneqq \sum_{i_1 = 1}^\ell \dots \sum_{i_p = 1}^\ell E_{i_1,i_2} E_{i_2, i_3} \cdots E_{i_p,i_1}
    \quad \text{with } 1 \leq \ell \leq k \text{ and } 1 \leq p \leq \ell.
\end{align}

The irreducible representations of $\frakgl_k$ correspond highest weights $\lambda = (\lambda_1, \dots, \lambda_k)$ with integer $\lambda_1 \geq \dots \geq \lambda_k$.
The restriction $V_\lambda{\downarrow}^{\frakgl_k}_{\frakgl_{k - 1}}$ is given by the classical branching rule (attributed to Schur in~\cite{molev}) which states that $\mu \to \lambda$ if and only if $\mu$ and $\lambda$ \emph{interleave}, that is $\lambda_\ell \geq \mu_\ell \geq \lambda_{\ell + 1}$ for all $\ell$, $1 \leq \ell \leq k - 1$.
Moreover, $E_{k,k}$ acts on the copy of $V_\mu$ in $V_\lambda$ as multiplication by $|\lambda| - |\mu| = \sum_{\ell = 1}^{k - 1} (\lambda_k - \mu_k)$.

A \emph{Gelfand-Tsetlin pattern} is given by a sequence $\lambda^{(1)} \to \lambda^{(2)} \to \dots \to \lambda^{(k)}$, that is, by integers~$\lambda^{(\ell
)}_i$ with $1 \leq \ell \leq k$, $1 \leq i \leq \ell$ such that $\lambda^{(\ell)}_i \geq \lambda^{(\ell - 1)}_i \geq \lambda^{(\ell)}_{i + 1}$, often arranged into a triangular shape, see \Cref{fig:chainExample1}.

\setlength{\unitlength}{30pt}
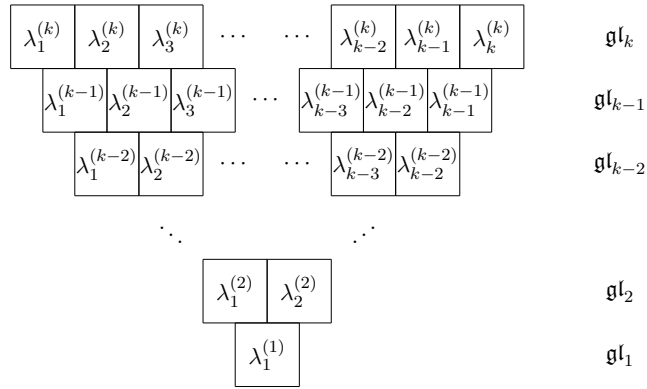
\begin{figure}[!htbp]
\begin{center}
\begin{displaymath}
\scalebox{.8}{
\begin{picture}(10,6)

\put(9,5){\ylabel{$\frakgl_{k}$}}

\put(0,5){\ylabel{$\lambda_1^{(k)}$}}
\put(0,5){\ybox}
\put(1,5){\ylabel{$\lambda_2^{(k)}$}}
\put(1,5){\ybox}
\put(2,5){\ylabel{$\lambda_3^{(k)}$}}
\put(2,5){\ybox}

\put(3,5){\ylabel{$\cdots$}}
\put(4,5){\ylabel{$\cdots$}}

\put(5,5){\ylabel{$\lambda_{k-2}^{(k)}$}}
\put(5,5){\ybox}
\put(6,5){\ylabel{$\lambda_{k-1}^{(k)}$}}
\put(6,5){\ybox}
\put(7,5){\ylabel{$\lambda_{k}^{(k)}$}}
\put(7,5){\ybox}

\put(9,4){\ylabel{$\frakgl_{k-1}$}}

\put(0.5,4){\ylabel{$\lambda^{(k-1)}_{1}$}}
\put(0.5,4){\ybox}
\put(1.5,4){\ylabel{$\lambda^{(k-1)}_{2}$}}
\put(1.5,4){\ybox}
\put(2.5,4){\ylabel{$\lambda^{(k-1)}_{3}$}}
\put(2.5,4){\ybox}

\put(3.5,4){\ylabel{$\cdots$}}

\put(4.5,4){\ylabel{$\lambda^{(k-1)}_{k-3}$}}
\put(4.5,4){\ybox}
\put(5.5,4){\ylabel{$\lambda^{(k-1)}_{k-2}$}}
\put(5.5,4){\ybox}
\put(6.5,4){\ylabel{$\lambda^{(k-1)}_{k-1}$}}
\put(6.5,4){\ybox}

\put(9,3){\ylabel{$\frakgl_{k-2}$}}

\put(1,3){\ylabel{$\lambda^{(k-2)}_{1}$}}
\put(1,3){\ybox}
\put(2,3){\ylabel{$\lambda^{(k-2)}_{2}$}}
\put(2,3){\ybox}

\put(3,3){\ylabel{$\cdots$}}
\put(4,3){\ylabel{$\cdots$}}

\put(5,3){\ylabel{$\lambda^{(k-2)}_{k-3}$}}
\put(5,3){\ybox}
\put(6,3){\ylabel{$\lambda^{(k-2)}_{k-2}$}}
\put(6,3){\ybox}

\put(2,2){\ylabel{$\ddots$}}
\put(5,2){\ylabel{$\iddots$}}

\put(9,1){\ylabel{$\frakgl_{2}$}}

\put(3,1){\ylabel{$\lambda^{(2)}_{1}$}}
\put(3,1){\ybox}
\put(4,1){\ylabel{$\lambda^{(2)}_{2}$}}
\put(4,1){\ybox}

\put(9,0){\ylabel{$\frakgl_{1}$}}
\put(3.5,0){\ylabel{$\lambda^{(1)}_{1}$}}
\put(3.5,0){\ybox}
\end{picture}
}
\end{displaymath}
\caption{Gelfand--Tsetlin pattern of $\frakgl_k$}
\label{fig:chainExample1}
\end{center}
\end{figure}

The $1$-dimensional subspace $V_T$ corresponding to the pattern $T = \lambda^{(1)} \to \lambda^{(2)} \to \dots \to \lambda^{(k)}$ is spanned by a weight vector of weight $(|\lambda^{(1)}|, |\lambda^{(2)}|-|\lambda^{(1)}|, \dots, |\lambda^{(k)}| - |\lambda^{(k - 1)}|)$.
The generators $C_{\ell, p}$ of $\GZ(\frakgl_k)$ act on this space as multiplications by the corresponding eigenvalues $\chi_T(C_{\ell, p})$.
Since $C_{\ell, p}$ is a Casimir operator for~$\frakgl_\ell$, we have $\chi_T(C_{\ell, p}) = \chi_{\lambda^{(\ell)}}(C_{\ell, p}) = \tr(A_{\lambda^{(\ell)}}^p E)$ from Perelomov--Popov formula.

If $\lambda^{(k)}$ is a partition, that is, all $\lambda^{(k)}_i$ are nonnegative, then the Gelfand--Tsetlin patterns for $\frakgl_k$ are in bijective correspondence with semistandard Young tableaux filled with numbers from $\{1, 2, \dots, k\}$. Recall that a Young tableau is semistandard whenever the numbers are strictly increasing along the columns, and weakly increasing along the rows.

A semistandard tableau $T$ gives rise to a sequence of partitions $\lambda^{(1)}, \dots, \lambda^{(k)}$ where $\lambda^{(\ell)}$ is the shape of the tableau $T^{(\ell)}$ obtained from $T$ be removing all boxes labeled with numbers greater than $\ell$.
Semistandardness of $T$ guarantees that all $T^{(\ell)}$ are also semistandard tableaux and the interleaving property $\lambda^{(\ell)}_i \geq \lambda^{(\ell - 1)}_i \geq \lambda^{(\ell)}_{i + 1}$ holds (see e.\,g.~\cite[p.5]{Macdonald-book}).
The weight is given by the content of the tableau $T$, that is, the vector $\mu = (\mu_1, \dots, \mu_k)$ where $\mu_i$ records the number of occurrences of the label $i$ in the tableau.

For example, the tableau $T = {\young(1122,23,44)}$ gives
$$T^{(1)} = \young(11), \quad T^{(2)} = \young(1122,2), \quad T^{(3)} = \young(1122,23), \quad T^{(4)} = T = \young(1122,23,44),$$
and the sequence of shapes is $\lambda^{(1)} = (2)$, $\lambda^{(2)} = (4,1)$, $\lambda^{(3)} = (4,2,0)$, $\lambda^{(4)} = (4,2,2,0)$.
The corresponding weight is $(2, 3, 1, 2)$, the content of the tableau.

The superstandard tableau of shape $\lambda = (\lambda_1, \dots, \lambda_n)$ corresponds to the Gelfand-Tsetlin pattern with $\lambda^{(\ell)} = (\lambda_1, \dots, \lambda_\ell)$.
The corresponding $1$-dimensional subspace of $V_\lambda$ contains the highest weight vectors.
To prove this, recall that since $E_{i, j}$ with $i \neq j$ span root spaces of $\frakgl_k$, they shift the weights of weight vectors, that is, $E_{i,j} . v$ is a weight vector of weight $\mu + e_i - e_j$.
The content of the superstandard tableau of shape $\lambda$ is $\lambda$, so every nonzero vector $v$ in the corresponding $1$-dimensional subspace has weight $\lambda$, and for every $E_{i,j}$  the vector $E_{i,j} . v$ has weight $\lambda + e_i - e_j$. But there are no semistandard tableaux with shape $\lambda$ and content $\lambda + e_i - e_j$ when $i < j$.
Therefore, $E_{i,j} . v = 0$ if $i < j$, and it follows that $v$ is annihilated by every strictly upper triangular matrix.

\section{Efficient construction of projectors}
\label{sec:effconproj}
In this section we prove \Cref{thm:main}.
We explain the construction of the projectors in \Cref{sec:projector construction}, and describe the efficient circuit implementation in \Cref{subsec:ckttransform}.

\subsection{Construction of projectors}
\label{sec:projector construction}

We construct the required projections as projections onto a common eigenspace of several commuting matrices. The construction is elementary and is based on the classical construction of eigenprojectors from basic linear algebra.

We present the construction in full generality in \Cref{thm:projector}. The projectors in the cases relevant for \Cref{thm:main} are obtained by applying the general construction to suitable commutative subalgebras of the universal enveloping algebra $\calU(\frakgl_k)$, see \Cref{table:projection-algebras-generators-and-diversity}. If $\mathcal{A}$ is a commutative algebra generated by $U_1, \dots, U_q$, then a representation of $\rho : \calA \to \End(V)$ is uniquely determined by $\rho(U_1) \vvirg \rho(U_q)$, which are $q$ commuting endomorphisms of $V \to V$.
Every irreducible representation of $\mathcal{A}$ is $1$-dimensional, so it is given by an algebra homomorphism $\chi \colon \mathcal{A} \to \mathbb{C}$.
In other words, an irreducible representation is spanned by a common eigenvector of the generators $U_1, \dots, U_q$ of $\mathcal{A}$ with the corresponding eigenvalues $\chi(U_1), \dots, \chi(U_q)$.

Recall that a representation is called \emph{completely reducible} if it can be decomposed into a direct sum of irreducible representations.
All representations in this paper have this property.
For a commutative algebra this means that the generators of the algebra act not just by commuting but by simultaneously diagonalizable linear maps.

We slightly abuse notation and denote the character $\chi \colon \mathcal{A} \to \bbC$ and the corresponding irreducible representation by the same letter.
Every completely reducible representation $V$ of $\mathcal{A}$ has a unique \emph{isotypic decomposition}
\begin{align}
V \cong \bigoplus_{\lambda \in \Lambda} \chi_\lambda^{\oplus m_\lambda},
\end{align}
where $\Lambda$ is a set indexing irreducible representations of $\mathcal{A}$.
We denote the isotypic component $\chi_\lambda^{\oplus \lambda}$ in $V$ by $V_{\mathcal{A},\lambda}$.

The length of the projectors as elements of $\mathcal{U}(\frakg)$ depends mainly on the number of different isotypic components of the representation.
\begin{definition}
    Let $V$ be a completely reducible representation of an algebra $\mathcal{A}$.
    We call the number of isomorphism types of irreducible subrepresentations of $V$ the \emph{diversity} of $V$.
\end{definition}

\begin{theorem}\label{thm:projector}
    Let $\mathcal{A}$ be a commutative algebra generated by $U_1, \dots, U_q$ and $V$ be a completely reducible representation of $\mathcal{A}$ with diversity at most $p$.
    For every irreducible representation $\chi_\lambda$ of $\mathcal{A}$ there exists a polynomial~$f_\lambda$ of degree at most $p - 1$ such that 
    \begin{align}
    f_\lambda(U_1, \dots, U_q) . x = \Pi_\lambda x
    \end{align}
    where $\Pi_\lambda \colon V \to V$ is the unique $\calA$-equivariant projection onto the isotypic component $V_{\mathcal{A},\lambda}$.
\end{theorem}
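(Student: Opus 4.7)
The plan is to reduce the construction of the projectors to multivariate Lagrange interpolation on the finite set of characters appearing in $V$. The core observation is that since $\mathcal{A}$ is commutative, every isotypic component $V_{\mathcal{A},\mu}$ is simultaneously a common eigenspace of $U_1, \dots, U_q$, with $U_i$ acting on $V_{\mathcal{A},\mu}$ as the scalar $\chi_\mu(U_i)$. Consequently, for any polynomial $f \in \bbC[x_1, \dots, x_q]$, the element $f(U_1, \dots, U_q) \in \mathcal{A}$ acts on $V_{\mathcal{A},\mu}$ as multiplication by the scalar $f(\chi_\mu(U_1), \dots, \chi_\mu(U_q))$.

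Let $\Lambda_V \subseteq \Lambda$ denote the (finite) set of characters $\mu$ such that $V_{\mathcal{A},\mu} \neq 0$; by hypothesis $|\Lambda_V| \leq p$. First I would record that the evaluation map $\mu \mapsto \bigl(\chi_\mu(U_1), \dots, \chi_\mu(U_q)\bigr) \in \bbC^q$ is injective on $\Lambda_V$: if two characters agreed on all the generators $U_i$, they would agree on the algebra they generate, hence be equal. In particular, for any two distinct $\lambda, \mu \in \Lambda_V$ there is an index $i(\lambda,\mu)$ with $\chi_\lambda(U_{i(\lambda,\mu)}) \neq \chi_\mu(U_{i(\lambda,\mu)})$.

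Fix $\lambda \in \Lambda_V$. For each $\mu \in \Lambda_V \setminus \{\lambda\}$, set
\[
L_{\lambda,\mu}(x_1, \dots, x_q) \coloneqq \frac{x_{i(\lambda,\mu)} - \chi_\mu(U_{i(\lambda,\mu)})}{\chi_\lambda(U_{i(\lambda,\mu)}) - \chi_\mu(U_{i(\lambda,\mu)})},
\]
so that $L_{\lambda,\mu}$ is an affine linear polynomial evaluating to $1$ at $\lambda$ and to $0$ at $\mu$. Define
\[
f_\lambda \coloneqq \prod_{\mu \in \Lambda_V \setminus \{\lambda\}} L_{\lambda,\mu}.
\]
This is a polynomial of degree at most $|\Lambda_V| - 1 \leq p - 1$ in $x_1, \dots, x_q$, and by construction it evaluates to $1$ at the point corresponding to $\lambda$ and to $0$ at every other point of the form $(\chi_\mu(U_1), \dots, \chi_\mu(U_q))$ with $\mu \in \Lambda_V \setminus \{\lambda\}$. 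By the eigenvalue calculation from the first paragraph, $f_\lambda(U_1, \dots, U_q)$ therefore acts as the identity on $V_{\mathcal{A},\lambda}$ and as zero on every $V_{\mathcal{A},\mu}$ with $\mu \neq \lambda$; by uniqueness of the isotypic decomposition, this endomorphism coincides with $\Pi_\lambda$. For $\lambda \notin \Lambda_V$ one has $\Pi_\lambda = 0$ and one may take $f_\lambda = 0$ (or apply the same recipe over $\Lambda_V$ to produce a polynomial vanishing at every point of interest). There is no serious obstacle here beyond checking that the generators $U_1, \dots, U_q$ really do separate the characters, which follows immediately from the fact that they generate $\mathcal{A}$; the rest is bookkeeping and a degree count.
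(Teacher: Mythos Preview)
Your proof is correct and essentially identical to the paper's: both pick, for each $\mu \neq \lambda$, a generator $U_{i(\lambda,\mu)}$ separating $\chi_\lambda$ from $\chi_\mu$, form the corresponding affine factor, and take the product over $\mu \in \Lambda_V \setminus \{\lambda\}$. The only difference is framing---you phrase it as multivariate Lagrange interpolation on the finite set of character-values in $\bbC^q$, while the paper writes the product directly in $\mathcal{A}$---but the construction, the degree count, and the handling of the case $\lambda \notin \Lambda_V$ are the same.
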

\begin{proof}
    Let $\Lambda$ be the set of isomorphism types appearing in the isotypic decomposition of $V$, so that $V \cong \bigoplus_{\mu \in \Lambda} V_{\mathcal A,\mu}$ with $V_{\mathcal A,\mu} \neq 0$. An element $U \in \mathcal{A}$ acts as $U. x = \chi_\mu(U) x$ on $V_{\mathcal A,\mu}$. If $\lambda \notin \Lambda$, set $f_\lambda = 0$.

    Fix $\lambda \in \Lambda$. For every $\mu \in \Lambda$ with $\mu \neq \lambda$, let $U_{i_\mu}$ be a generator such that $\chi_\mu ( U_{i_\mu} ) \neq \chi_\lambda(U_{i_\mu})$. Such generator exist because $\chi_\lambda$ and $\chi_\mu$ are different characters of $\mathcal{A}$. Define 
    \begin{align}
        P_\lambda \coloneqq \prod_{\mu \in \Lambda, \mu \neq \lambda} \frac{U_{i_\mu} - \chi_\mu(U_{i_\mu})}{\chi_\lambda(U_{i_\mu}) - \chi_\mu(U_{i_\mu})} 
        \quad\in \mathcal{A}.
    \end{align}
 The factor $\dfrac{U_{i_\mu} - \chi_\mu(U_{i_\mu})}{\chi_\lambda(U_{i_\mu}) - \chi_\mu(U_{i_\mu})}$ acts on $V$ as the zero map on $V_{\mathcal A,\mu}$ and as the identity map on~$V_{\mathcal A,\lambda}$; moreover, its action maps isotypic components to themselves. The element $P_\lambda$ acts as the composition of these factors; therefore its image in $\End(V)$ is the projection $\Pi_\lambda$ onto $V_{\mathcal A, \lambda}$. Finally, since every factor is linear in $U_1 \vvirg U_q$, $P_\lambda$ is a polynomial in $U_1 \vvirg U_q$. Its degree is (at most) $p-1$.
\end{proof}

The projections required in~\Cref{thm:main} arise as special cases of this construction.
In each case, the image of the projection is an isotypic component of some commutative subalgebra $\mathcal{A} \subset \mathcal{U}(\frakgl_k)$.
\Cref{thm:projector} then gives a construction of the projectors as elements of $\mathcal{U}(\frakgl_k)$, expressed as polynomials in the generators of $\mathcal{A}$.
The length of the projectors is easily bounded in terms of lengths of generators of $\mathcal{A}$ in $\mathcal{U}(\gl_K)$ and the diversity of $\bbC[\bbC[x_1,\ldots,x_k]_d]_\delta$ as a representation of $\mathcal{A}$.
Indeed, if the generators $U_1, \dots, U_q$ of $\mathcal{A}$ have length at most $\ell$, and the projectors are expressed as polynomials in $U_1, \dots, U_q$ of degree at most $p - 1$, then the lengths of the projectors are bounded by $\ell(p - 1)$. The following table summarizes these parameters for the three cases considered.

\begin{table}[H]
    \centering
    {\renewcommand{\arraystretch}{1.2}
    \begin{tabular}{c||c|c|c}
        Isotypic components   & Weight spaces & $\lambda$-isotypic spaces & $T$-isotypic spaces \\\hline\rule{0pt}{1.2em}%
        Algebra $\mathcal{A}$ & $\mathcal{U}(\frakh) \cong \mathbb{C}[\frakh^*]$ & $\mathcal{Z}(\frakgl_k)$ & $\GZ(\frakgl_k)$ \\
        Generators $U_i$ &  $\big\{E_{i,i} \mid i \in [k]\big\}$  &  $\big\{C_{p} \mid p \in [k]\big\}$ &  $\big\{C_{\ell,p} \mid \ell \in [k], p \in [\ell]\big\}$  \\
        Generator lengths  & $1$ & $\leq k$ & $\leq k$ \\
        Diversity &   $\leq (\delta d)^k$  &  $\leq (\delta d)^k$  &  $\leq (\delta d)^{k^2}$
    \end{tabular}}
       \caption{
    Isotypic components, the generators of their corresponding subalgebras, and their diversity in $\bbC[\bbC[x_1,\ldots,x_k]_d]_\delta$.
    See \cref{eq:casimir} and \cref{eq:casimirGZ} for the definitions of $C_p$ and $C_{\ell,p}$.
    }    \label{table:projection-algebras-generators-and-diversity}
\end{table}

\begin{remark}
In the construction of \Cref{thm:projector} we require knowledge of the value of $\chi_\mu$ evaluated on the generators $U_i$, for all irreducible representation types $\mu$ occurring in the representation. 
For the weight spaces these values are given by the weights themselves, and for the other two algebra they can be computed using \cref{eq:casimir formula}. 
It is then also possible to remove any choices in the construction by taking for each term instead a linear combination $U$ of all $U_i$ with algebraically independent irrational weights. 
Since the values $\chi_\mu$ are always integer, this ensures $\chi_\mu(U) \neq \chi_\lambda(U)$ when $\mu \neq \lambda$.
\end{remark}

\begin{example}
\label{example:projector construction}
We construct and evaluate a $\lambda$-isotypic projector for metapolynomials of format $(3,2,3)$. 
From \Cref{example:irreps-of-polynomials-2-3-and-metapolynomials-3-2-3} we know $\Lambda = \{(6,0,0),(4,2,0),(2,2,2)\}$.
By \cref{eq:casimir formula} we find that the Casimir element $C_2$ scales these three irreducible representations with different scalars, namely $48$, $28$ and $12$ respectively.
Hence we can use $C_2$ to construct the projectors from \Cref{thm:projector}. 
Consider first the projector $P_{(6,0,0)}$ to the $(6,0,0)$-isotypic component. 
We have that
\begin{align*}
    P_{(6,0,0)} = 
    \bigg(\frac{C_2 - \chi_{(2,2,2)}(C_2)}{\chi_{(6,0,0)}(C_2) - \chi_{(2,2,2)}(C_2)}\bigg)
    \bigg(\frac{C_2 - \chi_{(4,2,0)}(C_2)}{\chi_{(6,0,0)}(C_2) - \chi_{(4,2,0)}(C_2)}\bigg)
    =
    \bigg(\frac{C_2 - 12}{36}\bigg)
    \bigg(\frac{C_2 - 28}{20}\bigg)
    .
\end{align*}
Consider again the example $\Delta = c_{002}c_{020}c_{200}$ from the introduction. 
We will compute $P_{(6,0,0)}.\Delta$. Using the equations in \Cref{claim:metapolynomials} we find
\begin{align*}
    C_2.\Delta &
    = \sum_{i=1}^3 \sum_{j=1}^3 \big( E_{i,j}E_{j,i} \big).\Delta
    = 2 \, c_{020} c_{101}^{2} + 2 \, c_{002} c_{110}^{2} + 2 \, c_{011}^{2} c_{200} + 24 \, c_{002} c_{020} c_{200}.
\end{align*}
It directly follows that
\begin{align*}
    C_2.\Delta - 28\Delta
    =
    2 \, c_{020} c_{101}^{2} + 2 \, c_{002} c_{110}^{2} + 2 \, c_{011}^{2} c_{200} - 4 \, c_{002} c_{020} c_{200}.
\end{align*}
We apply $C_2$ again (at this point a computer algebra program is recommended) and find
\begin{align*}
    C_2.(C_2.\Delta - 28\Delta) =    
    48 \big(c_{020} c_{101}^{2} + c_{011} c_{101} c_{110} + c_{002} c_{110}^{2} + c_{011}^{2} c_{200} \big).
\end{align*}
Hence, 
\begin{align*}
    60\,P_{(6,0,0)}.\Delta 
    &=
    \frac{C_2.(C_2.\Delta - 28\Delta) - 12(C_2.\Delta - 28\Delta)}{12} 
    \\&=
    2 \, c_{020} c_{101}^{2} + 4 \, c_{011} c_{101} c_{110} + 2 \, c_{002} c_{110}^{2} + 2 \, c_{011}^{2} c_{200} + 4 \, c_{002} c_{020} c_{200}.
\end{align*}
This is indeed the isotypic component as given in the introduction.
The other two can be determined by an analogous computation.
\end{example}

\subsubsection{Weight projectors}

Let $V$ be a representation of a complex reductive Lie group $G$.
The action of $G$ on $V$ induces the action of the corresponding Lie algebra $\frakg$.
Fix a maximal torus $T \subset G$ and let $\frakh \subset \frakg$ be the corresponding Cartan subalgebra.
The torus $T$ is reductive, and the isotypic components of $V$ with respect to the action of $T$ are exactly the weight spaces.
The action of $T$ induces the action of $\frakh$ with the same isotypic components, so the action of $\frakh$ is completely reducible.

Let $H_1, \dots, H_r$ be a basis of the Cartan subalgebra $\frakh$.
Since $\frakh$ is abelian, the universal enveloping algebra $\mathcal{U}(\frakh) \subset \mathcal{U}(\frakg)$ is commutative.
It can be identified with the polynomial algebra $\mathbb{C}[H_1, \dots, H_r]$.
We can now apply~\Cref{thm:projector} to obtain for every weight $\mu \in \frakh^*$ an element $H_\mu \in \mathcal{U}(\frakg)$ such that $H_\mu . v$ is the projection of $v \in V$ onto the weight space of weight $\mu$.
The projectors $H_\mu$ are obtained as polynomial expressions $H_\mu = f_\mu(H_1, \dots, H_r)$ with $\deg f_\mu \leq p - 1$  where $p$ is the diversity of $V$ as a representation of $\frakh$ (or, equivalently, of $\mathcal{U}(\frakh)$), that is, the number of weights of $V$.
Since $H_i$ as elements of $\mathcal{U}(\frakg)$ have length $1$, the length of $H_\mu$ is also bounded by $p - 1$.

We can specialize this construction to the action of $\gl_k$ on the space of metapolynomials $W \coloneqq \bbC[\bbC[x_1,\ldots,x_k]_d]_\delta$. 
The standard Cartan subalgebra consists of all diagonal matrices, and we can take $H_i \coloneqq E_{i,i}$.
The weights appearing in $W$ correspond to $k$-tuples of nonnegative integers summing to $\delta d$ (see \Cref{example:metapolynomials-3-2-3-weights}). The number of such tuples is $\binom{\delta d + k - 1}{k-1} \leq (\delta d)^k$.

\begin{corollary}
\label{corollary:weight-projector}
    For each weight $\mu = (\mu_1, \dots, \mu_k)$ with $\sum_{i = 1}^k \mu_i = \delta d$ there is an element $H_\mu \in \mathcal{U}(\frakgl_k)$ of length at most $(\delta d)^k$ which acts on $\bbC[\bbC[x_1,\ldots,x_k]_d]_\delta$ as the projector onto the weight space corresponding to $\mu$.
\end{corollary}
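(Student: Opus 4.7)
The plan is to specialize the general construction of \Cref{thm:projector} to the commutative subalgebra $\mathcal{A} = \mathcal{U}(\frakh) \subset \mathcal{U}(\frakgl_k)$, where $\frakh$ is the Cartan subalgebra of diagonal matrices. Concretely, I would take the generators $U_i = E_{i,i}$ for $i = 1, \dots, k$; each of these has length $1$ in $\mathcal{U}(\frakgl_k)$. Since $\frakh$ is abelian, $\mathcal{U}(\frakh)$ is commutative (it is the polynomial algebra in the $E_{i,i}$), so \Cref{thm:projector} applies.

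Next I would verify the two hypotheses of \Cref{thm:projector}. The action of $\frakh$ on $W \coloneqq \bbC[\bbC[x_1,\ldots,x_k]_d]_\delta$ is completely reducible, since it comes by differentiation from the action of the maximal torus of diagonal matrices in $\GL_k$, and its isotypic components are precisely the weight spaces of $W$ (an irreducible representation of an abelian Lie algebra is $1$-dimensional, given by a weight). Moreover, from \Cref{example:metapolynomials-3-2-3-weights} the metamonomials form a weight basis and the weights appearing in $W$ are exactly the $k$-tuples $\mu = (\mu_1, \dots, \mu_k) \in \IN^k$ with $\sum_i \mu_i = d\delta$. The number of such tuples is $\binom{d\delta + k - 1}{k - 1}$, which is bounded above by $(d\delta)^k$. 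Hence the diversity of $W$ as an $\mathcal{U}(\frakh)$-representation is at most $(d\delta)^k$.

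Finally, I would invoke \Cref{thm:projector} with $p = (d\delta)^k$: for every weight $\mu$ it produces a polynomial $f_\mu$ of degree at most $p - 1$ in $E_{1,1}, \dots, E_{k,k}$ such that $H_\mu \coloneqq f_\mu(E_{1,1}, \dots, E_{k,k}) \in \mathcal{U}(\frakgl_k)$ acts on $W$ as the projector onto the $\mu$-weight space. Since each generator has length $1$, multiplicativity of the filtration on $\mathcal{U}(\frakgl_k)$ gives $\mathrm{length}(H_\mu) \leq p - 1 \leq (d\delta)^k$, as required. There is no real obstacle here: everything is a routine specialization of \Cref{thm:projector}, with the only slightly subtle point being the identification of $\frakh$-isotypic components with weight spaces, which is immediate from the definition of weights.
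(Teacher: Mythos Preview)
Your proposal is correct and follows essentially the same approach as the paper: specialize \Cref{thm:projector} to $\mathcal{A}=\mathcal{U}(\frakh)$ with generators $E_{i,i}$ of length $1$, observe that the $\frakh$-isotypic components are the weight spaces (complete reducibility coming from the torus action), and bound the diversity by the number of weights $\binom{\delta d+k-1}{k-1}\le(\delta d)^k$. The resulting length bound $p-1\le(\delta d)^k$ is exactly what the paper obtains.
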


\subsubsection{Isotypic projectors}

Let $V$ be a representation of a complex reductive Lie group $G$ with the corresponding Lie algebra $\frakg$.
Then $V$ is completely reducible as a representation of $\frakg$, with the same isotypic components as for $G$.
As discussed in~\Cref{sec:U-center}, they are also isotypic components of $V$ as a representation of $\mathcal{Z}(\frakg)$, which is a commutative algebra generated by Casimir elements $C_1, \dots, C_r$.

Applying~\Cref{thm:projector}, we obtain for every highest weight $\lambda$ a corresponding element $Z_\lambda \in \mathcal{U}(\frakg)$ such that $Z_\lambda . v$ is the projection of $v$ onto the isotypic component of corresponding to the irreducible representation with highest weight $\lambda$.
The projectors $Z_\lambda$ are obtained as polynomial expressions $Z_\lambda = f_\lambda(C_1, \dots, C_r)$ with $\deg f_\lambda$ is at most $p - 1$, where $p$ is the diversity of $V$ as a representation of $\frakg$.
It follows that if the Casimir elements $C_1, \dots, C_r$ have length at most $\ell$, then the length of $Z_\lambda$ is at most $\ell(p - 1)$.

For the Lie algebra $\frakgl_k$ Casimir elements have length at most $k$.
Since every highest weight is a weight, the diversity of $\bbC[\bbC[x_1, \dots, x_k]_d]_\delta$ is bounded by $(d\delta)^k$ as in the previous section.

\begin{corollary}
\label{corollary:isotypic-component-projector}
    For each partition $\lambda \partinto[k] {\delta d}$ there is an element $Z_\lambda \in \mathcal{U}(\frakgl_k)$ of length at most $k (\delta d)^k$ which acts on $\bbC[\bbC[x_1,\ldots,x_k]_d]_\delta$ as the projector onto the isotypic component corresponding to $\lambda$.
\end{corollary}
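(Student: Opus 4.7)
The plan is to apply \Cref{thm:projector} to the commutative subalgebra $\mathcal{Z}(\frakgl_k) \subset \mathcal{U}(\frakgl_k)$, taking as generators the Casimir elements $C_1, \dots, C_k$ from \cref{eq:casimir}. First I would invoke \Cref{theorem:central-characters} together with \Cref{thm: HarishChandra} to observe that, because different irreducible $\frakgl_k$-representations are distinguished by their central characters, the isotypic decomposition of $W := \bbC[\bbC[x_1, \dots, x_k]_d]_\delta$ as a representation of $\frakgl_k$ coincides with its isotypic decomposition as a representation of $\mathcal{Z}(\frakgl_k)$. Thus the $\frakgl_k$-isotypic component indexed by a partition $\lambda$ is precisely the $\chi_\lambda$-isotypic component of $\mathcal{Z}(\frakgl_k)$ acting on $W$.

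Next, applying \Cref{thm:projector} to the algebra $\mathcal{Z}(\frakgl_k)$ with generating set $\{C_1, \dots, C_k\}$ yields, for every partition $\lambda$ appearing in the decomposition, a polynomial $f_\lambda$ of degree at most $p - 1$ such that $Z_\lambda := f_\lambda(C_1, \dots, C_k)$ acts on $W$ as the projector onto the $\lambda$-isotypic component; here $p$ denotes the diversity of $W$ as a $\mathcal{Z}(\frakgl_k)$-representation. For partitions $\lambda$ that do not appear in the decomposition of $W$ one simply sets $Z_\lambda = 0$. The values $\chi_\mu(C_i)$ required to construct $f_\lambda$ explicitly are furnished by the Perelomov--Popov formula~\cref{eq:casimir formula}, so no existential issue arises.

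Finally, the length bound follows by assembling two ingredients. From the explicit expression~\cref{eq:casimir} we read off that each $C_p$ has length at most $k$ in $\mathcal{U}(\frakgl_k)$, since it is a sum of products of $p \leq k$ elements of $\frakgl_k$. Consequently $Z_\lambda$, being a polynomial of degree at most $p - 1$ in $C_1, \dots, C_k$, has length at most $k(p - 1)$. To estimate the diversity $p$, note that every $\frakgl_k$-highest weight occurring in $W$ is in particular a weight of $W$, and the weights of $W$ are $k$-tuples of nonnegative integers summing to $\delta d$; there are $\binom{\delta d + k - 1}{k - 1} \leq (\delta d)^k$ of them. Combining these estimates gives the length bound $k(\delta d)^k$ stated in the corollary. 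No genuine obstacle is expected in this proof: the entire argument is a direct specialization of \Cref{thm:projector} and the structural theory of $\mathcal{Z}(\frakgl_k)$ already assembled, and the only mildly delicate point is making sure the identification of $\frakgl_k$-isotypic components with $\mathcal{Z}(\frakgl_k)$-isotypic components is invoked cleanly via \Cref{theorem:central-characters}.
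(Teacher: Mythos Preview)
Your proposal is correct and follows essentially the same approach as the paper: apply \Cref{thm:projector} with $\mathcal{A} = \mathcal{Z}(\frakgl_k)$ generated by the Casimir elements, use \Cref{theorem:central-characters} to identify $\frakgl_k$-isotypic components with $\mathcal{Z}(\frakgl_k)$-isotypic components, bound the generator lengths by $k$ via \cref{eq:casimir}, and bound the diversity by the number of weights $\binom{\delta d + k - 1}{k-1} \le (\delta d)^k$. The paper's argument is the same in all essential points; your extra remarks on the Perelomov--Popov formula and on setting $Z_\lambda = 0$ for non-occurring $\lambda$ are consistent with but not required for the proof.
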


\subsubsection{Highest weight projectors}

Since the highest weight subspace of weight $\lambda$ is exactly the weight $\lambda$ subspace of the isotypic component corresponding to the irreducible representation $V_\lambda$, the projector onto the highest weight subspace is the composition $H_\lambda Z_\lambda$ of the projectors constructed in the previous sections.

\begin{corollary}
\label{corollary:highest-weight-projector}
    For each partition $\lambda \partinto[k]{\delta d}$ there is an element $X_\lambda = H_\lambda Z_\lambda \in \mathcal{U}(\frakgl_k)$ of length at most $(k + 1) (\delta d)^k$ which acts on $\bbC[\bbC[x_1,\ldots,x_k]_d]_\delta$ as the projector onto the highest weight space of weight $\lambda$.
\end{corollary}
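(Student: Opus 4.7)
The plan is to realize the highest weight projector as the composition of the isotypic projector and the weight projector already constructed, and to lift this identity to $\mathcal{U}(\frakgl_k)$ via the algebra homomorphism to $\End(V)$, where $V \coloneqq \bbC[\bbC[x_1,\ldots,x_k]_d]_\delta$.

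The structural input is that the highest weight subspace of weight $\lambda$ in $V$ coincides with the intersection $V^{(\lambda)} \cap V^{\mathrm{wt}}_\lambda$ of the $\lambda$-isotypic component with the $\lambda$-weight space. Indeed, $V^{(\lambda)}$ is a direct sum of copies of the irreducible representation $V_\lambda$, and inside each such copy the weight-$\lambda$ subspace is one-dimensional and spanned by a highest weight vector, as recalled in \Cref{sec:lie algebras} and \Cref{sec:gelfand-tsetlin}. Conversely, any highest weight vector of weight $\lambda$ in $V$ generates a subrepresentation isomorphic to $V_\lambda$, which must be contained in $V^{(\lambda)}$.

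Next, since the isotypic projector $\rho(Z_\lambda)$ is $\frakgl_k$-equivariant, it commutes with the action of the Cartan subalgebra and therefore preserves every weight space; dually, the weight projector $\rho(H_\lambda)$ preserves the isotypic decomposition because it is a polynomial in the $E_{i,i}$, which themselves preserve isotypic components. Hence the two linear maps commute, and their composition is the unique projection onto $V^{(\lambda)} \cap V^{\mathrm{wt}}_\lambda$. Setting $X_\lambda \coloneqq H_\lambda Z_\lambda \in \mathcal{U}(\frakgl_k)$, the fact that the action map $\mathcal{U}(\frakgl_k) \to \End(V)$ is an algebra homomorphism yields $\rho(X_\lambda) = \rho(H_\lambda)\rho(Z_\lambda)$, which is therefore the desired highest-weight projector.

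For the length bound, I would invoke the filtration property of $\mathcal{U}(\frakgl_k)$ recorded in \Cref{sec:lie algebras}, namely $\mathcal{U}(\frakgl_k)_{\leq p} \cdot \mathcal{U}(\frakgl_k)_{\leq q} \subseteq \mathcal{U}(\frakgl_k)_{\leq p+q}$, and combine the bound $\ell(H_\lambda) \leq (\delta d)^k$ from \Cref{corollary:weight-projector} with $\ell(Z_\lambda) \leq k(\delta d)^k$ from \Cref{corollary:isotypic-component-projector} to obtain $\ell(X_\lambda) \leq (k+1)(\delta d)^k$. The only step requiring genuine care is the structural identity of the second paragraph, and in particular the observation that a weight-$\lambda$ vector sitting inside $V^{(\lambda)}$ is automatically a highest weight vector; once this is in place, the remainder is routine bookkeeping with the filtration and with the algebra homomorphism structure of $\rho$.
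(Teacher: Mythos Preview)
Your argument is correct and follows exactly the same route as the paper: the paper simply observes in one sentence that the highest weight subspace of weight $\lambda$ is the weight-$\lambda$ part of the $\lambda$-isotypic component, so that the desired projector is the composition $H_\lambda Z_\lambda$, and the length bound is the sum of the bounds from \Cref{corollary:weight-projector} and \Cref{corollary:isotypic-component-projector}. You have merely made explicit the details the paper leaves implicit (the identification $V^{(\lambda)}\cap V^{\mathrm{wt}}_\lambda$ with the highest weight space, the commutation of the two projectors, and the filtration inequality for lengths).
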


\subsubsection{GZ-isotypic projectors}

Let $V$ be a representation of a complex reductive Lie group $G$ with the corresponding Lie algebra $\frakg$ admitting a Gelfand--Tsetlin chain $\frakg_1 \subset \frakg_2 \subset \dots \subset \frakg_k = \frakg$.
As discussed in \Cref{sec:gelfand-tsetlin}, under the action of the Gelfand--Tsetlin algebra $\GZ(\frakg)$ the representation $V$ decomposes into GZ-isotypic components indexed by Gelfand--Tsetlin patterns, and~\Cref{thm:projector} can be used to construct for every Gelfand--Tsetlin pattern $T$ an element $Y_T \in \mathcal{U}(\frakg)$ such that $Y_T.x$ is the projection of $x$ onto the $T$-isotypic subspace of $V$.
The length of $Y_T$ is bounded in terms of the lengths of Casimir elements of $\frakg_\ell$ and the number of Gelfand--Tsetlin patterns appearing in the decomposition of $V$.

Consider the case of $\frakgl_k$ acting on $\bbC[\bbC[x_1, \dots, x_k]_d]_\delta$.
For the chain $\frakgl_1 \subset \frakgl_2 \subset \dots \subset \frakgl_k$, Casimir elements have length at most $k$.
Every irreducible subrepresentation of $\bbC[\bbC[x_1, \dots, x_k]_d]_\delta$ corresponds to a partition $\lambda \partinto[k]{d\delta}$.
Gelfand--Tsetlin patterns are in bijective correspondence with semistandard Young tableaux with $\delta d$ cells and $k$ rows.
Each row of a tableau is an ordered tuple of integers from $\{1, \dots, k\}$, so there are at most $\binom{d\delta + k - 1}{k - 1} \leq (d\delta)^k$ tuples that can serve as rows of a tableau, and therefore at most $(d\delta)^{k^2}$ possible semistandard tableaux.

\begin{corollary}
\label{corollary:semistandard-tableau-projector}
    For each semistandard tableau $T$ of shape $\lambda \partinto[k]{\delta d}$ there is an element $Y_T \in \mathcal{U}(\frakgl_k)$ of length at most $k (\delta d)^{k^2}$ which acts on $\bbC[\bbC[x_1,\ldots,x_k]_d]_\delta$ as the projector onto the $T$-isotypic space corresponding to $T$.
\end{corollary}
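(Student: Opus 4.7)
The plan is to apply \Cref{thm:projector} with the commutative subalgebra $\mathcal{A} = \GZ(\frakgl_k) \subseteq \mathcal{U}(\frakgl_k)$ acting on $V = \bbC[\bbC[x_1,\ldots,x_k]_d]_\delta$. As a first step I would fix as generators of $\mathcal{A}$ the Casimir elements $C_{\ell,p}$ with $1 \leq \ell \leq k$ and $1 \leq p \leq \ell$ from \cref{eq:casimirGZ}. Each $C_{\ell,p}$ is, by construction, a sum of ordered products of $p$ elements of $\frakgl_k$, hence lies in $\mathcal{U}(\frakgl_k)_{\leq p}$ and has length at most $p \leq k$.

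Next, I would identify the $\mathcal{A}$-isotypic decomposition of $V$. Following the discussion in \Cref{sec:gelfand-tsetlin}, the isotypic components of $V$ under $\GZ(\frakgl_k)$ are indexed by Gelfand--Tsetlin characters $\chi_T$, and \Cref{theorem:central-characters} applied at each level of the chain $\frakgl_1 \subset \cdots \subset \frakgl_k$ guarantees that distinct patterns produce distinct characters. The $T$-isotypic space appearing in the statement is by definition the GZ-isotypic component indexed by $T$, and for the chosen chain these patterns are in bijection with semistandard Young tableaux having $\delta d$ cells, at most $k$ rows, and entries in $\{1,\ldots,k\}$.

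The remaining ingredient is a diversity bound. I would argue combinatorially: such a tableau is determined by its at most $k$ rows, each of which is a weakly increasing tuple of elements of $\{1,\ldots,k\}$ of total length at most $\delta d$. The number of such tuples is $\binom{\delta d + k - 1}{k-1} \leq (\delta d)^k$, so the number of semistandard tableaux, and hence the diversity of $V$ as an $\mathcal{A}$-representation, is bounded by $(\delta d)^{k^2}$.

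Assembling the pieces, \Cref{thm:projector} yields, for each admissible $T$, a polynomial of degree at most $(\delta d)^{k^2}-1$ in the $C_{\ell,p}$ whose action on $V$ is the projector $Y_T$. Since each generator $C_{\ell,p}$ has length at most $k$, the length of $Y_T$ in $\mathcal{U}(\frakgl_k)$ is at most $k(\delta d)^{k^2}$. The main obstacle, and the reason for the appearance of $k^2$ rather than $k$ in the exponent as in \Cref{corollary:isotypic-component-projector}, is precisely the refined diversity bound; this is what ultimately propagates into the $(\delta d)^{2k^4}$ estimate of part (4) of \Cref{thm:main}.
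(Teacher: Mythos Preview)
Your proposal is correct and follows essentially the same argument as the paper: apply \Cref{thm:projector} to $\mathcal{A}=\GZ(\frakgl_k)$ with generators $C_{\ell,p}$ of length at most $k$, bound the diversity by counting semistandard tableaux row by row to get at most $(\delta d)^{k^2}$, and combine to obtain length at most $k(\delta d)^{k^2}$. The paper's justification (in the paragraph preceding the corollary and in \Cref{table:projection-algebras-generators-and-diversity}) is identical in structure and in the specific estimates.
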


\subsection{Circuit transformations}
\label{subsec:ckttransform}

We now combine the bounds on the lengths of the projectors with the PBW theorem (\Cref{thm:pbw}) to prove \Cref{thm:main}.
Pick any ordered basis $(X_1,\ldots,X_{\varcount^2})$ for $\gl_k$, such as $\{E_{i,j}\}_{i,j}$ with $(i,j)$ ordered lexicographically. 

We first introduce some notation.
Set $\basissize \coloneqq k^2$.
For $\bm{i} = (i_1,\ldots,i_{\basissize}) \in \IN^{\basissize}$ we will use the shorthand notation $X^{\bm{i}} \coloneqq X_1^{i_1} \cdots X_\basissize^{i_\basissize} \in \mathcal{U}(\gl_k)$.
Let $\bm{i}' \leq \bm{i}$ denote that $\bm{i}' = (i'_1,\ldots,i'_\basissize) \in \IN^{\basissize}$ with $i'_r \leq i_r$ for all $r \in [\basissize]$. 
Write $\bm{i} - \bm{i}' \coloneqq (i_1-i'_1, \ldots, i_\basissize- i'_\basissize)$ and $\binom{\bm{i}}{\bm{i}'} \coloneqq \binom{i_1}{i'_1} \cdots \binom{i_\basissize}{i'_\basissize}$. Lastly, define $|\bm{i}| \coloneqq \sum_r i_r$ 
and $\boundedtuples \coloneqq \{\bm{i} \in \IN^{\basissize} \mid |\bm{i}| \leq \projwordsize\}$.

By \Cref{claim:metapolynomials}, any element $X \in \mathcal{U}(\gl_k)$ acts as a derivation on the space of metapolynomials $\bbC[\bbC[x_1,\ldots,x_k]_d] = \bigoplus_{\delta \geq 0} \bbC[\bbC[x_1,\ldots,x_k]_d]_\delta$. 

\begin{lemma}
    \label{lemma:circuit-projection}
    Suppose $\Phi,\Gamma \in \bbC[\bbC[x_1,\ldots,x_k]_d]$ are metapolynomials. Then
    \begin{enumerate}
        \item
        \label{item:universal-enveloping-algebra-act-on-sum}
            $X^{\bm{i}}.(\Phi+\Gamma)
            = X^{\bm{i}}.\Phi + X^{\bm{i}}.\Gamma$
        \item
        \label{item:universal-enveloping-algebra-act-on-product}
            $\displaystyle
            X^{\bm{i}}. (\Phi \Gamma)
            =
            \sum_{{\bm{i}}'\colon {\bm{i}}' \leq {\bm{i}}}
            \mbinom{{\bm{i}}}{{\bm{i}}'} 
            \big( X^{{\bm{i}}'}.\Phi \big) \big( X^{{\bm{i}}-{\bm{i}}'}.\Gamma \big) 
            $
    \end{enumerate}
\end{lemma}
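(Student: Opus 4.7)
The plan is to treat this as the iterated Leibniz rule for a composition of derivations. Statement 1 is immediate: \cref{equation:applying-Eij-to-metapolynomials} exhibits each $E_{i,j}$ as a linear first-order differential operator on $\bbC[\bbC[x_1,\ldots,x_k]_d]$, so by linearity every $X \in \gl_k$ acts linearly on metapolynomials, and any composition of such operators --- in particular the ordered monomial $X^{\bm{i}}$ --- is also linear.

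For statement 2, I would first observe that each $X_r$ acts as a \emph{derivation} on the commutative algebra of metapolynomials: by \cref{equation:applying-Eij-to-metapolynomials}, $E_{i,j}$ equals $\sum_\mu (E_{i,j}.c_\mu)\,\partial/\partial c_\mu$, a sum of first-order differential operators with no zeroth-order part, hence itself a derivation. A single induction on $n$ then gives the classical one-variable Leibniz formula $D^n(\Phi\Gamma) = \sum_{j=0}^n \binom{n}{j}(D^j.\Phi)(D^{n-j}.\Gamma)$ for any derivation $D$ on a commutative algebra.

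Next I would prove the multi-index formula by induction on $|\bm{i}|$, peeling the factors of the ordered product $X^{\bm{i}} = X_1^{i_1}\cdots X_\basissize^{i_\basissize}$ one at a time from the right. The base case $\bm{i} = 0$ is trivial. For the inductive step, let $r$ be the largest index with $i_r > 0$, write $X^{\bm{i}} = X^{\bm{i} - i_r e_r}\, X_r^{i_r}$, apply the single-variable Leibniz rule to $X_r^{i_r}$ first, and then apply the inductive hypothesis to each summand $X^{\bm{i} - i_r e_r}.\bigl((X_r^j.\Phi)(X_r^{i_r-j}.\Gamma)\bigr)$. Reindexing by $\bm{i}' = \bm{j}'' + j\,e_r$, the binomial coefficient factors as $\binom{\bm{i} - i_r e_r}{\bm{j}''}\binom{i_r}{j} = \binom{\bm{i}}{\bm{i}'}$, producing exactly the asserted sum over $\bm{i}' \leq \bm{i}$.

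There is no substantive obstacle here. The only point worth emphasizing is that $\mathcal{U}(\gl_k)$ is non-commutative, so the factors of $X^{\bm{i}}$ cannot be freely reordered; but the argument never reorders them --- it simply treats $X^{\bm{i}}$ as the composition of operators on metapolynomials in the prescribed order, to which the iterated Leibniz rule applies verbatim.
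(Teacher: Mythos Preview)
Your proof is correct and follows essentially the same approach as the paper: both argue linearity for part~1, invoke the derivation property from \Cref{claim:metapolynomials}, establish the single-variable Leibniz rule by induction, and then iterate to obtain the multi-index identity. The paper's proof is terser---it simply says ``Then \Cref{item:universal-enveloping-algebra-act-on-product} follows'' after the single-variable case---whereas you spell out the induction on $|\bm{i}|$ that underlies this step.
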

\begin{proof}
    \Cref{item:universal-enveloping-algebra-act-on-sum} follows directly from linearity. 
    Since $\gl_k$ acts by derivations (see \Cref{claim:metapolynomials}) we have $X.(fg) = (X.f) g + f (X.g)$.
    For $i \in \IN$ a straightforward induction argument shows $X^i.(fg) = \sum_{i' \colon i' \leq i} \binom{i}{i'} \big(X^{i'}.f\big)\big(X^{i-i'}.g\big)$. Then \Cref{item:universal-enveloping-algebra-act-on-product} follows. 
\end{proof}

\begin{theorem}
\label{proposition:circuit-projection}
    Let $P \in \mathcal{U}(\gl_k)$ be an element of length $L$.
    Suppose there exists an arithmetic circuit $C$ of size $s$ computing a metapolynomial 
    $\Delta \colon \bbC[x_1,\ldots,x_\varcount]_\ideg \to \bbC$. 
    Then there exists an arithmetic circuit $C'$ of size $O\big(sL^{2k^2}\big)$ computing $P.\Delta$.
\end{theorem}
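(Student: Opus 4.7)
The plan is to prove this by a gate-by-gate transformation of the circuit $C$, maintaining at each gate $v$ of $C$ the values $X^{\bm{i}}.\Delta_v$ for \emph{all} $\bm{i}\in\boundedtuples$ simultaneously, where $\Delta_v$ is the metapolynomial computed at $v$. The Poincar\'e--Birkhoff--Witt theorem (\Cref{thm:pbw}) lets me write $P=\sum_{\bm{i}\in\boundedtuples}\alpha_{\bm{i}}X^{\bm{i}}$ as a linear combination of ordered monomials, so once these values are available at the output gate of $C$, one more linear combination produces $P.\Delta$. The size of the index set satisfies $|\boundedtuples|=\binom{\projwordsize+k^2}{k^2}=O(\projwordsize^{k^2})$, which is the quantity that will govern the blowup.

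The inductive construction proceeds along the topological order of $C$. For input gates $v$, the metapolynomial $\Delta_v$ is affine linear, and the explicit formula in \cref{equation:applying-Eij-to-metavars} shows that applying any single $X_r=E_{i,j}$ sends a metavariable to a scalar multiple of a metavariable; therefore each $X^{\bm{i}}.\Delta_v$ is again an affine linear metapolynomial whose coefficients can be precomputed and hardwired at input gates, costing $O(|\boundedtuples|)$ per original input. For a sum gate $v=u_1+u_2$, \Cref{lemma:circuit-projection}(\ref{item:universal-enveloping-algebra-act-on-sum}) gives $X^{\bm{i}}.\Delta_v=X^{\bm{i}}.\Delta_{u_1}+X^{\bm{i}}.\Delta_{u_2}$ for every $\bm{i}$, so I introduce $|\boundedtuples|$ new addition gates. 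For a product gate $v=u_1\times u_2$, \Cref{lemma:circuit-projection}(\ref{item:universal-enveloping-algebra-act-on-product}) yields
\[
X^{\bm{i}}.\Delta_v=\sum_{\bm{i}'\leq\bm{i}}\mbinom{\bm{i}}{\bm{i}'}\bigl(X^{\bm{i}'}.\Delta_{u_1}\bigr)\bigl(X^{\bm{i}-\bm{i}'}.\Delta_{u_2}\bigr),
\]
which requires at most $|\boundedtuples|$ multiplications plus additions per new gate, and there are $|\boundedtuples|$ new gates, giving $O(|\boundedtuples|^2)$ total operations per original product gate.

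Summing over the $s$ gates of $C$, the transformed circuit has size $O(s\cdot|\boundedtuples|^2)$, and the final linear combination $\sum_{\bm{i}}\alpha_{\bm{i}}(X^{\bm{i}}.\Delta)$ adds another $O(|\boundedtuples|)$ gates. Plugging in $|\boundedtuples|=O(\projwordsize^{k^2})$ gives the claimed bound $O(s\projwordsize^{2k^2})$.

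The main subtlety is ensuring that the bookkeeping at product gates is truly bounded by $|\boundedtuples|^2$ and not something larger: the Leibniz-type expansion only sums over $\bm{i}'\leq\bm{i}$, so each of the $|\boundedtuples|$ new gates at a product node requires summing at most $|\boundedtuples|$ terms that are \emph{already available} from the children, without any recursive blowup. The PBW theorem is what ensures that ordered monomials of length at most $\projwordsize$ actually span the length-$\leq\projwordsize$ part of $\mathcal{U}(\gl_k)$, so expressing $P$ in this form is legitimate; everything else is routine derivation calculus combined with careful counting.
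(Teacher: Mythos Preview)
Your proposal is correct and follows essentially the same approach as the paper: a gate-by-gate transformation maintaining all $X^{\bm{i}}.\Delta_v$ for $\bm{i}\in\boundedtuples$, using \Cref{lemma:circuit-projection} at sum and product gates and the PBW theorem at the end. The only difference is cosmetic: at product gates you bound the work by $|\boundedtuples|^2$, whereas the paper counts the pairs $(\bm{i},\bm{i}')$ with $\bm{i}'\leq\bm{i}$ exactly as $\binom{L+2k^2}{2k^2}$; both are $O(L^{2k^2})$.
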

\begin{proof}
    Again set $\basissize \coloneqq k^2$.
    Construct $C'$ from $C$ using the following recursive procedure:
    \begin{enumerate}
        \item Add for every input gate $\Theta$ 
        in $C$ the input gates
            $X^{\bm{i}}.\Theta$ for ${\bm{i}} \in \boundedtuples$.
        \item Add for every addition gate $\Theta = \Phi + \Gamma$ in $C$ the addition gates computing
            $X^{\bm{i}}.\Theta$ for ${\bm{i}} \in \boundedtuples$, using the expression in \Cref{lemma:circuit-projection}.
        \item \label{step:glk-product-gate}
            Add for every product gate $\Theta = \Phi\Gamma$ in $C$:
            \begin{itemize}
                \item product gates computing
                $\big(X^{{\bm{i}}'}.\Phi\big)\big(X^{{\bm{i}}-{\bm{i}}'}.\Gamma\big)$ 
                for ${\bm{i}} \in \boundedtuples$ and ${\bm{i}}' \leq {\bm{i}}$,
                \item addition gates computing 
                $X^{\bm{i}}.\Theta$ for ${\bm{i}} \in \boundedtuples$, using the expression in \Cref{lemma:circuit-projection}. 
            \end{itemize}
        \item 
        The resulting circuit computes $X^{\bm{i}}.\Delta$ for ${\bm{i}} \in \boundedtuples$. 
        By the PBW theorem (\Cref{thm:pbw}), there exists scalars $\beta_{\bm{i}} \in \bbC$ such that 
        \begin{align*}
            P = \sum_{\bm{i} \in \boundedtuples} \beta_{\bm{i}} X^{\bm{i}}.
        \end{align*}
        Add to $C'$ the addition gates computing $P.\Delta$.   
    \end{enumerate}
    The largest blow-up occurs in step \ref{step:glk-product-gate}.
    Let $
        A \coloneqq \{ ({\bm{i}}, {\bm{i}}') \mid {\bm{i}} \in \boundedtuples, {\bm{i}}' \leq {\bm{i}}\}
    $.
    We added for every product gate in $C$ exactly $|A|$ product gates computing $\big(X^{{\bm{i}}'}.f\big)\big(X^{{\bm{i}}-{\bm{i}}'}.g\big)$ for $(\bm{i},\bm{i}') \in A$, 
    and $|A|-1$ addition gates computing $X^{\bm{i}}.(fg) = \sum_{(\bm{i},\bm{i}') \in A}
            \binom{{\bm{i}}}{{\bm{i}}'} 
            \big( X^{{\bm{i}}'}.f \big) \big( X^{{\bm{i}}-{\bm{i}}'}.g \big) 
            $.

    The blow-up of the other steps is bounded above by $|A|$, and the total blow-up is therefore upper bounded by $2|A|-1$.
    We now prove that
$|A| = \binom{L+2K}{2K}$.

Let $(\bm{i},\bm{i}') \in A$. By definition, $i_1 + \cdots + i_K \le L$, and $(i_j-i_j') \ge 0$, for all $j \in [K]$, with $i_j' \ge 0$. Let $a_j:= (i_j -i_j')$, and $b_j:= i_j'$. We want to find the number of $(\bm{a},\bm{b}) \in \IN^{2K}$ such that $\sum_{j=1}^{2K} (a_j+b_j) \le L$. So the number of pairs $(\bm{a},\bm{b}) \in \IN^{2K}$ such that $\sum_{j=1}^{2K} (a_j+b_j) =t$ is $\binom{t+2K-1}{2K-1}$. Therefore, by the hockey-stick identity,
\[
|A|\;=\;\sum_{t=0}^{L} \binom{t+2K-1}{2K-1} \;=\;\binom{L+2K}{2K}.
\]

The above equality readily gives us the upper bound of $|A| = O(L^{2K})$, which gives the desired complexity.
\end{proof}

\begin{proof}[Proof of \Cref{thm:main}]
 Let $\Delta$ be a metapolynomial of format $(d,\delta,n)$ admitting an algebraic circuit of size $s$. For each case of \Cref{thm:main}, we have a corresponding projector:
 \begin{enumerate}
     \item For every weight $\mu$, \Cref{corollary:weight-projector} provides an element $H_\mu \in \calU(\frakgl_k)$ of length at most $O((\delta d)^k)$ acting as projection on the weight space of weight $\mu$. \Cref{proposition:circuit-projection} guarantees that $H_\mu. \Delta$ admits a circuit of size at most $O(s ( (\delta d)^k)^{2k^2}) = O(s (\delta d)^{2k^3})$. 
     \item For every partition $\lambda$, \Cref{corollary:isotypic-component-projector} provides an element $Z_\mu \in \calU(\frakgl_k)$ of length at most $O(k(\delta d)^k)$ acting as projection on the isotypic component of type $\lambda$. \Cref{proposition:circuit-projection} guarantees that $Z_\lambda. \Delta$ admits a circuit of size at most $O(s(k(\delta d)^k)^{2k^2}) =O(sk^{2k^2}(\delta d)^{2k^3})$.
     \item For every partition $\lambda$, \Cref{corollary:highest-weight-projector} provides an element $X_\lambda  \in \calU(\frakgl_k)$ of length at most $O((k+1)(\delta d)^k)$ acting as projection on the highest weight space of weight $\lambda$. \Cref{proposition:circuit-projection} guarantees that $X_\lambda. \Delta$ admits a circuit of size at most $O(s((k+1)(\delta d)^k)^{2k^2}) =O(s (k+1)^{2k^2}(\delta d)^{2k^3})$.
     \item For every semistandard tableaux $T$ of shape $\lambda \partinto d\delta$, \Cref{corollary:semistandard-tableau-projector} provides an element $Y_\lambda\nobreak\in\nobreak\calU(\frakgl_k)$ of length at most $O(k(\delta d)^{k^2})$ acting as projection on $T$-isotypic space.  \Cref{proposition:circuit-projection} guarantees that $Y_T. \Delta$ admits a circuit of size at most $O(s(k(\delta d)^{k^2})^{2k^2}) =O(s k^{2k^2}(\delta d)^{2k^4})$.\qedhere
 \end{enumerate}
\end{proof}

\appendix

\section{Proofs on algebraic natural proofs}
\label{sec:proofsalgnatproofs}

In this section we prove \Cref{thm: nullstellensatz for VP}.

\begin{definition}
    We call a sequence $r \colon \mathbb{N} \to \mathbb{N}$ \emph{strongly superpolynomial} if $r(n)$ dominates every polynomial sequence eventually, that is, for every polynomial $p$ there exists $N$ such that $r(n) > p(n)$ for all $n \geq N$.
\end{definition}
\begin{lemma}
\label{lem:vanishsuperpolynomial}
    $(\Delta_n) \in I(\mathcal C)$ if and only if there exists a strongly superpolynomial sequence $r$ such that $\Delta_n(X_{n, r(n)}) = 0$.
\end{lemma}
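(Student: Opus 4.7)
The plan is to reduce the equivalence to a clean statement about a canonical ``vanishing radius'' sequence. For each $n$, the set $\{r \in \IN : \Delta_n|_{X_{n,r}} = 0\}$ is downward-closed in $\IN$: if $r' \leq r$, then $X_{n,r'} \subseteq X_{n,r}$, so vanishing on the larger set forces vanishing on the smaller. Hence this set is an initial segment of $\IN$, and I can define $R(n) \in \IN \cup \{\infty\}$ to be its supremum. With this notation, $\Delta_n|_{X_{n,r}} = 0$ if and only if $r \leq R(n)$, and the lemma reduces to the statement that $(\Delta_n) \in I(\mathcal C)$ iff there exists a strongly superpolynomial $r : \IN \to \IN$ with $r(n) \leq R(n)$ for all $n$.

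For the ($\Leftarrow$) direction, assume such an $r$ exists. Given any polynomially bounded $\rho$, strong superpolynomiality yields $\rho(n) \leq r(n) \leq R(n)$ for all sufficiently large $n$, so $\Delta_n|_{X_{n,\rho(n)}} = 0$ eventually, which is precisely the condition defining $I(\mathcal C)$.

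For the ($\Rightarrow$) direction, I would explicitly construct $r$ by setting $r(n) = R(n)$ whenever $R(n) < \infty$, and $r(n) = 2^n$ (say) whenever $R(n) = \infty$. This keeps $r$ integer-valued and guarantees $r(n) \leq R(n)$ pointwise, so $\Delta_n|_{X_{n,r(n)}} = 0$ for all $n$. To verify strong superpolynomiality, fix any polynomial $p$: applying the hypothesis $(\Delta_n) \in I(\mathcal C)$ to the polynomially bounded sequence $p(n)$ gives $R(n) \geq p(n)$ eventually. On the indices where $R(n) < \infty$ this yields $r(n) = R(n) \geq p(n)$, while on the indices where $R(n) = \infty$ we have $r(n) = 2^n \geq p(n)$ eventually; combining the two cases gives $r(n) \geq p(n)$ for all large $n$.

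The only subtlety is the $R(n) = \infty$ case (i.e., $\Delta_n$ vanishing on every $X_{n,r}$), which would otherwise prevent us from simply taking $r = R$; it is handled trivially by any cheap super-polynomial fallback such as $2^n$. Apart from this bookkeeping, the argument is a direct translation between the ``for every polynomial bound, eventually'' formulation of $I(\mathcal C)$ and the ``single strongly superpolynomial witness'' formulation.
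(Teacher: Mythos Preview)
Your proof is correct and follows essentially the same approach as the paper: both define the canonical ``vanishing radius'' $R(n)=\sup\{r:\Delta_n|_{X_{n,r}}=0\}$ (the paper writes it as a $\max$) and verify it is strongly superpolynomial by testing against each polynomially bounded sequence. You are in fact slightly more careful than the paper, which simply sets $r(n):=\max\{r\mid\Delta_n(X_{n,r})=0\}$ without addressing the possibility that this set is unbounded (equivalently $\Delta_n\equiv 0$); your $2^n$ fallback cleanly disposes of that edge case.
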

\begin{proof}
    Clearly, if $\Delta_n(X_{n, r(n)}) = 0$, then $\Delta_n(X_{n, p(n)})$ vanishes eventually for every polynomially bounded sequence $p$, since $p(n) \leq r(n)$ eventually.
    
    On the other hand, if $(\Delta_n)$ is a sequence of metapolynomials in $I(\mathcal C)$, we can define $r(n)$ as $r(n) := \max \{r \mid \Delta_n(X_{n, r}) = 0\}$.
    For every polynomially bounded sequence $p(n)$ we have that $\Delta_n(X_{n, p(n)})$ vanishes eventually, so $r(n) \geq p(n)$ eventually.
\end{proof}

\begin{proof}[Proof of \Cref{thm: nullstellensatz for VP}]
We first observe we may assume $\deg(f_n) = n$. Indeed, for every p-family $(f_n)$ with the property that $\deg(f_n)$ is strictly increasing, let $(f'_n)$ be the sequence defined by 
\[
f'_m = \left\{ \begin{array}{ll}
f_n & \text{if $m = \deg(f_n)$ for some $n$} \\
0 & \text{otherwise}
\end{array} \right.
\]  
Since $(f_n)$ is a p-family, the sequence $\deg(f_n)$ is polynomially bounded, hence $(f'_n)$ is a p-family as well. Moreover $(f_n) \in \calC$ if and only if $(f_n) \in \calC$, and similarly for $\bar{\calC}$. Finally, for a sequence of metapolynomials $(\Delta_n )$ of format $(*,n,*)$, we have $\Delta_{\deg(f_n)} (f_n) = 0$ eventually if and only if $\Delta_n(f'_n) = 0$ eventually. In particular, proving the statement for $(f'_n)$ is equivalent to proving the statement for $(f_n)$. Since $\deg(f'_n) = n$, this shows we may assume $\deg(f_n) = n$.

Suppose $(f_n) \in \overline{\mathcal C}$, that is, the sequence $p(n) = \underline{c}(f_n)$ is polynomially bounded.
    For every sequence of metapolynomials $(\Delta_n) \in I(\mathcal C)$  we have that $\Delta_n(X_{n, p(n)})$ vanishes eventually.
    Since the metapolynomials $\Delta_n$ are continuous, $\Delta_n(\overline{X_{n, p(n)}})$ vanishes eventually, so $\Delta_n(f_n) = 0$ eventually.

    Conversely, suppose $(f_n) \notin \overline{\mathcal C}$. Therefore $r(n) = \underline{c}(f_n)$ is not polynomially bounded. Let $\theta(n) = \log_n r(n)$, $\hat{\theta}(n) = \max \{\theta(m) \mid m \leq n\}$, and $R(n) = \lceil n^{\hat{\theta}(n)} \rceil$.  Note that $R(n)$ is monotonically increasing.
    
   The sequence $R(n)$ is strongly superpolynomial.
    Indeed, since $r(n)$ is not polynomially bounded, for every $k$ there exists $n_0$ such that $\theta(n_0) \geq k$ and, therefore, $\hat{\theta}(n) \geq k$ for all $n \geq n_0$.
    It follows that for every $k$ we have $R(n) \geq n^k$ eventually.

    For all $\Theta$, if $n_\Theta$ is the minimal value such that $\theta(n_\Theta) \geq \Theta$, then $\hat{\theta}(n_\Theta) = \theta(n_\Theta)$.
    Since $\theta$ is unbounded, it follows that $\theta$ and $\hat{\theta}$ coincide on an infinite subset of $\mathbb{N}$.
    On this subset we also have $R(n) = r(n)$.
    
    Consider a metapolynomial sequence $\Delta_n$ such that $\Delta_n (\overline{X_{n, R(n) - 1}})=0$, and if $r(n) = R(n)$, then $\Delta_n$ is chosen so that $\Delta_n(f_n) \neq 0$ (this is possible because $\underline{c}(f_n) = r(n) = R(n) > R(n) - 1$).
    Since $R(n)$ is strongly superpolynomial, $R(n) - 1$ is strongly superpolynomial, and $(\Delta_n) \in I(\mathcal C)$ by \Cref{lem:vanishsuperpolynomial}, and since $R(n) = r(n)$ infinitely often, $\Delta_n(f_n) \neq 0$ infinitely often.
\end{proof}

\section{PBW, proof that monomials span the space.} \label{sec:pbw}

In this section, we give an elementary proof of the implication in \Cref{thm:pbw} that we use. We include this standard result in this appendix, because it is the main result that enables our algorithmic speedup compared to a naive implementation.

Recall that the universal enveloping algebra $\calU(\frakg)$ is the quotient of the tensor algebra $\calT(\frakg)$ by the ideal $\calI(\frakg) = \{ X \otimes Y - Y \otimes X - [X,Y] : X,Y \in \frakg\}$. The quotient induces a filtration $\calU(\frakg)_{\leq m} \subseteq \calU(\frakg)_{\leq (m+1)}$ and the elements of $\calU(\frakg)_{\leq m}$ are represented by noncommutative polynomials of degree at most $m$ in the elements of $\frakg$. 

\begin{theorem}[One direction of the PBW Theorem]
    \label{theorem:pbw}
    Let $X_1,\ldots,X_r$ be a basis for a Lie algebra $\frakg$. Then  
    \begin{align*}
        \label{equation:pbw-basis}
        \mathcal B_m \coloneqq \big\{ X_{i_1} X_{i_2} \cdots X_{i_\ell}  \mid  i_1 \leq i_2 \leq \cdots \leq i_\ell \text{ and } \ell \leq m\} 
    \end{align*}
    spans $\calU(\frakg)_{\leq m}$.
\end{theorem}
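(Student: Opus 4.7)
The plan is a double induction: an outer induction on the length $\ell$ of a monomial, and an inner induction on the number of inversions of its index sequence. Since $\calU(\frakg)_{\leq m}$ is spanned, by construction, by products $X_{i_1} X_{i_2} \cdots X_{i_\ell}$ with $\ell \leq m$ and arbitrary index sequences, it is enough to show that every such product lies in the span of $\calB_m$. The base cases $\ell = 0, 1$ are immediate, because $1$ and each $X_i$ are already ordered monomials.

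For the inductive step, assume the claim holds for all lengths strictly less than $\ell$, and fix a product $X_{i_1} \cdots X_{i_\ell}$. Define the number of inversions of the index sequence $(i_1,\ldots,i_\ell)$ to be the number of pairs $a < b$ with $i_a > i_b$. If the sequence has zero inversions, the product already belongs to $\calB_\ell \subseteq \calB_m$. Otherwise some adjacent pair is out of order, say $i_a > i_{a+1}$. In $\calU(\frakg)$ the defining relation gives
\[
X_{i_a} X_{i_{a+1}} \;=\; X_{i_{a+1}} X_{i_a} \;+\; [X_{i_a}, X_{i_{a+1}}],
\]
where $[X_{i_a}, X_{i_{a+1}}] \in \frakg$ is a linear combination of $X_1,\ldots,X_r$. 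Substituting this identity into the product produces two kinds of terms: first, the single product $X_{i_1} \cdots X_{i_{a+1}} X_{i_a} \cdots X_{i_\ell}$ of length $\ell$ but with one fewer inversion; second, a linear combination of products of length $\ell - 1$ obtained by replacing $X_{i_a} X_{i_{a+1}}$ by the expansion of the bracket in the basis. The inner induction on inversions handles the first type, and the outer induction on length handles the second, so the original product lies in the span of $\calB_\ell \cup \calB_{\ell-1} = \calB_\ell$.

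The inner induction is well-founded because the number of inversions is a nonnegative integer bounded above by $\binom{\ell}{2}$, so only finitely many rewritings are needed before every remaining term of length $\ell$ is ordered. I expect no genuine obstacle here: the argument is purely a rewriting procedure using $XY - YX \equiv [X,Y]$ modulo $\calI(\frakg)$, and we carefully avoid claiming linear independence of $\calB_m$, which is the significantly harder half of the PBW theorem and is not needed elsewhere in the paper. The only bookkeeping worth highlighting is that when the bracket $[X_{i_a}, X_{i_{a+1}}]$ is expanded as $\sum_k c_k X_k$, each resulting term indeed has length exactly $\ell - 1$, so it falls squarely under the outer inductive hypothesis.
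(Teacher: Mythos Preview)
Your proof is correct and follows essentially the same route as the paper's: both arguments swap adjacent out-of-order factors using the relation $X_{i_a}X_{i_{a+1}} = X_{i_{a+1}}X_{i_a} + [X_{i_a},X_{i_{a+1}}]$, absorb the bracket term into the length-$(\ell-1)$ inductive hypothesis, and repeat until the monomial is ordered. The only cosmetic difference is that you package the repeated swapping as an inner induction on the number of inversions, whereas the paper writes the sorting permutation as a product of adjacent transpositions and applies the swap that many times; these are the same argument.
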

\begin{proof}
It is clear that $\calT(\frakg)_{m}$ is spanned by all elements $X_{i_1} \ootimes X_{i_m}$; hence it suffices to prove that any such element is equivalent, modulo $\calI(\frakg)$, to a linear combination of elements of $\calB_m$. We prove this statement by induction on $m$. 

Let $X_{j_1} \ootimes X_{j_m} \in \calT(\frakg)_{m}$. For every $a = 1 \vvirg m-1$, we have $X_{j_{a}} \otimes X_{j_{a+1}} - X_{j_{a+1}} \otimes X_{j_{a}} - [X_{j_{a}}, X_{j_{a+1}}] \in \calI(\frakg)$; in particular $X_{j_{a}} \otimes X_{j_{a+1}} = X_{j_{a+1}} \otimes X_{j_{a}} + Y$ where $Y$ is a linear combination of elements $\calI(\frakg)$ and of $\calT(\frakg)_{\leq 1}$. We obtain  
    \begin{equation}\label{equation:pbw-transposition-step}
        \begin{split}
        X_{j_1} \ootimes X_{j_\ell}=
        &
       X_{j_1} \ootimes  X_{j_{a-1}} \otimes X_{j_{a+1}} \otimes X_{j_{a}}\otimes X_{j_{a+2}} \ootimes X_{j_\ell} + Y
        \end{split}
    \end{equation}
    where $Y$ is a linear combination of elements of $\calI(\frakg)$ and of $\calT(\frakg)_{\leq m-1}$.

    Let $\pi \in \frakS_m$ be a permutation such that $(j_{\pi(1)} \cdots j_{\pi(m)})$ is nondecreasing.  Write $\pi$ as a composition of elementary transpositions $\pi = \sigma_s \cdots \sigma_1$, where $\sigma_q = (a,a+1)$ for some $a = 1 \vvirg m-1$. Applying \cref{equation:pbw-transposition-step} $s$ times we obtain
\[
        X_{j_1} \ootimes X_{j_m} = X_{j_{\pi(1)}} \ootimes X_{j_{\pi(m)}} + U
 \]
 where $U \in \calI(\frakg) + \calU_{\leq (m-1)}$. Passing to the quotient modulo $\calI(\frakg)$, we obtain the following identity in $\calU(\frakg)_{\leq m}$:
    \[
    X_{j_1}\cdots X_{j_m} = X_{j_{\pi(1)}} \cdots X_{j_{\pi(m)}} +U
    \]
    with $U \in \calU(\frakg)_{\leq (m-1)}$. Apply the induction hypothesis to $U$ to conclude. 
\end{proof}
 
\bibliographystyle{alphaurl}
\bibliography{tensors}

\end{document}